\newcommand{\comment}[1]{}
\newcommand{\Omit}[1]{}
\newcommand{\uri}[1]{{\color{blue}{UF: #1}}}
\newtheorem{theorem}{Theorem}[section]
\newtheorem{lemma}[theorem]{Lemma}
\newtheorem{proposition}[theorem]{Proposition}
\newtheorem{claim}[theorem]{Claim}
\newtheorem{definition}[theorem]{Definition}
\newtheorem{example}[theorem]{Example}
\newtheorem{observation}[theorem]{Observation}
\newcommand{\vect}[1]{\ensuremath{\mathbf{#1}}}
\newcommand{\prices}{\vect{p}}
\newcommand{\pricej}{p_{j}}
\newcommand{\reals}{\mathbb{R}}
\renewcommand{\emptyset}{\varnothing}
\begin{document}

\title{Are Gross Substitutes a Substitute for Submodular Valuations?}


\author{
Shahar Dobzinski\thanks{Weizmann Institute and Microsoft Research Israel; {\tt dobzin@gmail.com }}
\and
Uriel Feige\thanks{Weizmann Institite and Microsoft Research Israel; {\tt Uriel.Feige@weizmann.ac.il}}
\and
Michal Feldman\thanks{Tel-Aviv University and Microsoft Research Israel; {\tt mfeldman@tau.ac.il}}
\and
Renato Paes Leme\thanks{Google Research NYC; {\tt renatoppl@google.com}}
}

\maketitle

\comment{
\begin{abstract}
	The class of gross substitutes (GS) set functions plays a central role in Economics and Computer Science.
	GS belongs to the hierarchy of {\em complement free} valuations introduced by Lehmann, Lehmann and Nisan, along with other prominent classes: $GS \subsetneq Submodular \subsetneq XOS \subsetneq Subadditive$.
	The GS class has always been more enigmatic than its counterpart classes, both in its definition and in its relation to the other classes. 
	For example, while it is well understood how closely the Submodular, XOS and Subadditive classes (point-wise) approximate one another, approximability of these classes by GS remained wide open.
	In particular, the largest gap known between Submodular and GS valuations was some constant ratio smaller than 2.
	
	Our main result is the existence of a submodular valuation (one that is also budget additive) that cannot be approximated by GS within a ratio better than $\Omega(\frac{\log m}{\log\log m})$, where $m$ is the number of items. 
	En route, we uncover a new symmetrization operation that preserves GS, which may be of independent interest.
	
	We show that our main result is tight with respect to budget additive valuations.
	However, whether GS approximates general submodular valuations within a poly-logarithmic factor remains open, even in the special case of {\em concave of GS} valuations (a subclass of Submodular containing budget additive).
	For {\em concave of Rado} valuations (Rado is a significant subclass of GS, containing, e.g.,  weighted matroid rank functions and OXS), we show approximability by GS within an $O(\log^2m)$ factor. 
\end{abstract}
}

\comment{
\uri{Here is an alternative abstract. To save space, it removes the last sentence of the first paragraph (which in a sense repeats the message of the sentence before it). The second paragraph remains unchanged. The third paragraph, which also discusses our results, is now more informative. It says the the tight approximation for BA is our result (rather than a previously known result). It also explicitly states the result for Rado valuations, rather than being more vague. It does not explicitly state the open question for general submodular, but this question is implicit from the results that are stated.
}
}

\begin{abstract}
    The class of gross substitutes (GS) set functions plays a central role in Economics and Computer Science.
	GS belongs to the hierarchy of {\em complement free} valuations introduced by Lehmann, Lehmann and Nisan, along with other prominent classes: $GS \subsetneq Submodular \subsetneq XOS \subsetneq Subadditive$.
	The GS class has always been more enigmatic than its counterpart classes, both in its definition and in its relation to the other classes. 
	For example, while it is well understood how closely the Submodular, XOS and Subadditive classes (point-wise) approximate one another, approximability of these classes by GS remained wide open.	
	
	Our main result is the existence of a submodular valuation (one that is also budget additive) that cannot be approximated by GS within a ratio better than $\Omega(\frac{\log m}{\log\log m})$, where $m$ is the number of items. 
	En route, we uncover a new symmetrization operation that preserves GS, which may be of independent interest.
	
	We show that our main result is tight with respect to budget additive valuations. Additionally,	for a class of submodular functions  that we refer to as {\em concave of Rado} valuations (this class greatly extends budget additive valuations), we show approximability by GS within an $O(\log^2m)$ factor.

\end{abstract}

\section{Introduction}
\label{sec:introduction}
A {\em valuation function} over a set $M$ of $m$ items is a function $f: 2^M \rightarrow \reals$ that assigns a real value $f(S)$ to every set $S \subseteq M$, which is additionally {\em monotone} ($f(T) \leq f(S)$ for every $T \subseteq S$) and {\em normalized} ($f(\emptyset)=0$). Valuation functions are commonly used to describe combinatorial preferences over items. In many settings it is natural to assume that the valuations additionally belong to a specific class, e.g., they are submodular or subadditive. Indeed, algorithms and impossibilities for different classes of valuation functions were developed for numerous problems, e.g., welfare maximization \cite{dobzinski2010approximation, feige2009maximizing, vondrak2008optimal}, truthful mechanisms \cite{assadi2019improved,dobzinski2007two}, price of anarchy \cite{bhawalkar2011welfare, feldman2013simultaneous}, and learning \cite{balcan2011learning, badanidiyuru2012sketching, feldman2013representation}, to name a few examples. 

In Algorithmic Game Theory, the work that has set the tone for the study of valuations functions is that of Lehmann, Lehmann and Nisan~\cite{lehmann2006combinatorial}. They presented a hierarchy of ``complement-free" valuation functions, whose five most prominent classes (listed from the least general to the most general) are OXS, Gross Substitutes (GS), Submodular, XOS, and Subadditive. A typical line of work attempts to determine the performance of algorithms, for example, in terms of their approximation ratios, in each of the classes of the hierarchy.

Among the five classes of the hierarchy, two of them were defined semantically, as the set of all valuations that satisfy a natural property like submodularity and subadditivity. Two of them (OXS and XOS) were defined syntactically, as the set of all valuations that can be constructed by applying certain OR- and XOR-like operations. The remaining class of GS valuations stands out in the sense that its definition is neither syntactic nor semantic but -- to some extent -- coincidental. It was essentially defined by Kelso and Crawford \cite{kelso1982job} as a condition on the valuation functions that is needed for a certain auction to end in an equilibrium.

Thus, it may not come as a total surprise that the class of gross substitutes valuations remained as perhaps the least understood class among the classes considered by \cite{lehmann2006combinatorial}. The class is lacking properties that ``natural'' classes of valuations tend to possess, e.g., it is not closed under addition and applying a concave function on a GS valuation does not necessarily result in a GS valuation. Despite that, the GS class plays a central role in many settings (for example, it is, in some formal sense, the largest class for which a Walrasian equilibrium exists \cite{gul1999walrasian}, it guarantees exact welfare maximization in polynomial time \cite{nisan2006communication}, VCG outcomes are guaranteed to be in the core \cite{ausubel2002ascending}). One consequence to our partial understanding of the class is a lack of ``complicated'' GS valuations or a lack of good techniques to construct them, although such valuations are crucial for proving hardness of various tasks\footnote{\label{foot:1}One exception that proves the rule is the matroid rank function construction of \cite{balcan2011learning} that proves the non-existence of good sketches for GS valuations.}. For example, easy-to-construct XOS valuations often serve as the hardest instances even for subadditive valuations (e.g., \cite{dobzinski2010approximation,badanidiyuru2012sketching}). Recent attempts \cite{ostrovsky2015gross,BalkanskiL18} to constructively produce the set of all GS valuations had only partial success.

We suggest another direction toward deciphering the enigmatic character of the GS class. Instead of focusing on exact characterizations of the class, we suggest to study its ``closeness'' to other classes of valuations. We rely on the notion of \emph{approximation} to explore the proximity of the class of GS valuations to other classes.

\begin{definition}
A valuation function $g$ approximates a valuation function $f$ within a ratio $\rho \ge 1$ if for every set $S$ of items we have that $g(S) \le f(S) \le \rho \cdot g(S)$.
A class $C_1$ of valuation functions approximates a class $C_2$ of valuation functions within a ratio $\rho \ge 1$ if for every function $f \in C_2$ there is a function $g \in C_1$ such that $g$ approximates $f$ within the ratio $\rho$.
\end{definition}

The fact that a class $C_1$ approximates $C_2$ by some small factor may not have immediate consequences in all settings of interest. However, more often than not, a small approximation factor does hint that algorithms for $C_1$ will work well on valuations from $C_2$, perhaps with some modifications, and that impossibility results for $C_2$ also apply to $C_1$.

For example, consider the complement-free hierarchy (see also Figure \ref{fig:relations}): $GS \subsetneq Submodular \subsetneq XOS \subsetneq Subadditive$. It is known that XOS approximates Subadditive within a ratio of $O(\log m)$ and no better~\cite{dobzinski2007two,bhawalkar2011welfare} (recall that $m$ denotes the number of items). Moreover, the approximability gap of Submodular and XOS is $\Omega(\sqrt m)$~\cite{badanidiyuru2012sketching} and this is tight (up to poly-log factors)~\cite{badanidiyuru2012sketching, goemans2009approximating}. Indeed, the large gap between Submodular and XOS is evident when considering, e.g., that value queries are very effective for various optimization tasks with submodular valuations (maximization subject to cardinality constraint \cite{nemhauser1978analysis}, welfare maximization \cite{vondrak2008optimal}, minimization \cite{iwata2001combinatorial}) but provide only poor approximation ratios under XOS valuations. Similarly, the relatively small gap between XOS and Subadditive may explain why in many settings the best algorithms for subadditive valuations achieve comparable results to the best algorithms for XOS valuations \cite{feige2009maximizing,assadi2021improved}.

\begin{figure}
	\centering
	\includegraphics[width=0.5\textwidth]{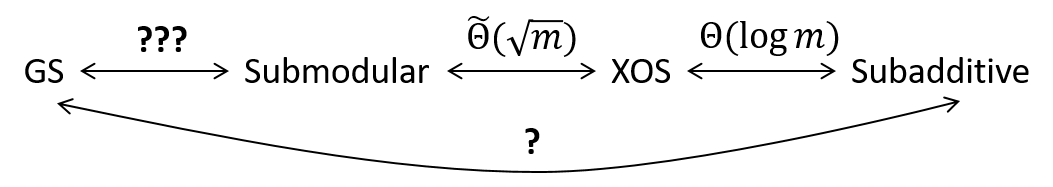}
	\caption{Previous approximability results among complement-free valuations.
	\label{fig:relations}}
\end{figure}

As evident from Figure~\ref{fig:relations}, the approximation relationship of GS and the other classes was unknown before this paper. The first question is to determine the exact approximation ratio of Subadditive by GS. Previous work implies that it is between $\sqrt m$ (since Submodular functions cannot approximate XOS to within a better factor) and $m$ (every subadditive function $v$ can be trivially approximated within a factor of $m$ by the additive function in which the value of item $j$ is $\frac{1}{m}v(\{j\})$). The second question is to determine the approximation factor of Submodular by GS \footnote{The literature contains examples of submodular functions that are not GS. These examples use a constant number of items and imply that there is a constant approximation gap between GS and Submodular (see a 3-item example in \cite{lehmann2006combinatorial}).}

The first question turns out to be much easier than the second one. Previous results \cite{badanidiyuru2012sketching} already tell us that subadditive functions can be approximated to within a factor of $\tilde O(\sqrt m)$ by applying a concave function on an additive valuation. We prove -- as part of a more general result -- that any concave function of an additive valuation can be approximated by a GS function to within a logarithmic factor. This establishes that GS can approximate subadditive valuations to within a factor of $\tilde O(\sqrt m)$.

Answering the second question is more challenging and serves as the main focus of this paper. 

\vspace{0.1in}
\begin{theorem}[Main Result]
\label{thm:MainThm}
The class of Gross Substitutes (GS) valuation functions does not approximate the class of Submodular valuation functions within a ratio better than $\Omega(\frac{\log m}{\log\log m})$.
\end{theorem}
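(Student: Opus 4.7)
The overall strategy is to exhibit a budget additive valuation $f$ that is invariant under a large permutation group $G$ of its item set, and then argue that any GS valuation $g$ which $\rho$-approximates $f$ can be replaced by a $G$-symmetric GS valuation $\tilde g$ with the same approximation ratio. Because $G$-symmetric GS valuations are highly constrained, the lower bound on $\rho$ then follows from a direct analysis of $M^\natural$-concave functions on the ``profile lattice'' of $G$-orbits. This is precisely the plan suggested by the abstract's mention of a new symmetrization operation that preserves GS.

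For the construction I would partition the $m$ items into $k=\Theta(\log m/\log\log m)$ buckets $B_1,\ldots,B_k$, with items in $B_i$ sharing a common weight $w_i$, and set $f(S)=\min(B,\sum_{j\in S}w_j)$. A natural first choice is geometric scaling, $|B_i|\propto q^i$ and $w_i\propto q^{-i}$ for some base $q$ comparable to $k$, so each bucket contributes roughly equal total weight and the budget constraint is active in $k$ qualitatively different regimes as one varies the profile $(|S\cap B_1|,\ldots,|S\cap B_k|)$. The group $G=\prod_i\mathrm{Sym}(B_i)$ then preserves $f$ by construction, and the total item count $m\approx q^k$ yields the target $k=\Theta(\log m/\log\log m)$ when $q$ is chosen polylogarithmically.

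The central technical step is the symmetrization itself. Since naive averaging $\tilde g(S)=\frac{1}{|G|}\sum_{\sigma\in G}g(\sigma S)$ produces only a convex combination of GS functions, which in general is not GS, a more careful construction is required. My plan is to define $\tilde g$ pointwise on the profile lattice $\prod_i\{0,\ldots,|B_i|\}$ by an appropriate extremum of $g$ over $G$-orbits (for instance, an infimum along exchange paths within an orbit, or the $M^\natural$-concave closure of the orbit values), and then verify Murota's exchange axiom directly by pushing bucket-wise exchanges between representatives of neighboring orbits. Once this is established, $\tilde g\le f$ follows from monotonicity and $\tilde g(S)\ge f(S)/\rho$ from the $G$-invariance of $f$.

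It then remains to show that no $G$-symmetric GS valuation approximates $f$ within a ratio $o(k)$. Such a valuation corresponds to an $M^\natural$-concave function $h$ on $\prod_i\{0,\ldots,|B_i|\}$, and the target profile function $\bar f(x)=\min(B,\sum_i w_i x_i)$ is a minimum of two affine functions whose gradient jumps across a saturation facet. The idea is to walk along a chain of $\Theta(k)$ integer profiles that crosses from the linear regime to the saturated regime one bucket at a time, using the $M^\natural$ exchange axiom at each step to bound how rapidly $h$ can adjust the marginal it assigns to each bucket's weight scale; iterating these inequalities accumulates a multiplicative loss of $\Omega(k)$. The main obstacle, and the step where I expect genuinely new ideas are needed, is constructing the symmetrization so that GS is actually preserved; the combinatorial lower bound for $M^\natural$-concave approximants of $\bar f$ is a more routine calculation once the symmetry reduction is in hand.
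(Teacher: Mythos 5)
Your high-level plan matches the paper's: construct a budget-additive valuation with geometrically scaled buckets (this is essentially the paper's $BA(k,d)$ family), symmetrize a hypothetical $\rho$-approximating GS function to one invariant under the natural permutation group, and then prove a lower bound against symmetric GS approximants by tracking how marginals degrade across the bucket scales. Your choice of $k=\Theta(\log m/\log\log m)$ buckets and $q$ roughly polylogarithmic gives the same parameter regime as $BA(k,d)$ with $k=d\log d$.

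However, there is a genuine gap at exactly the step you flag as needing ``genuinely new ideas,'' and your two suggested workarounds for it would not obviously succeed. You propose to define the symmetrized function either as an infimum of $g$ along exchange paths within a $G$-orbit, or as an $M^\natural$-concave closure of orbit values, and then to verify Murota's exchange axiom directly on the profile lattice. Neither construction comes with a mechanism for proving the result is GS: the $M^\natural$-concave closure of an arbitrary data set on a lattice need not sandwich between $f/\rho$ and $f$ (concave closure can blow values up well above the original $g$), and it is not clear that a pointwise infimum over orbit representatives satisfies the exchange axiom, since GS is not closed under pointwise minimum any more than under averaging. The paper's actual resolution is a \emph{max}-symmetrization: iteratively, for each pair of $f$-symmetric items $i,j$, replace both $g(S\cup\{i\})$ and $g(S\cup\{j\})$ by $\max\{g(S\cup\{i\}),g(S\cup\{j\})\}$, leaving $g(S)$ and $g(S\cup\{i,j\})$ unchanged. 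This raises values but keeps the sandwich $g\le f$ because $f$ is symmetric in $i,j$, and the nontrivial fact is that this pairwise max-symmetrization preserves GS. The proof is not a direct verification of an exchange axiom: it introduces a fresh dummy item, convolves $g$ with a unit-demand auxiliary valuation, then passes to the endowment (contraction) of the resulting function and renames an item, expressing a ``partial'' one-sided symmetrization as a composition of operations known to preserve GS. Two partial symmetrizations give the full max-symmetrization. Interestingly, this operation does \emph{not} preserve submodularity in general (the paper exhibits a $4$-item counterexample), so any argument that only uses submodularity or a generic lattice-concavity axiom is unlikely to reproduce it. After symmetrization, the paper's lower bound does not analyze $M^\natural$-concavity on the profile lattice directly; instead it derives a ``symmetric weak substitutes'' (SWS) property of GS functions from the triplet characterization, extends it from pairs of symmetric items to sets by a hybrid argument, and uses this to telescope the budget $1$ over the $d+1$ levels, giving the $\Omega(d)=\Omega(\log m/\log\log m)$ gap. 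Your walk along a chain of profiles is in the same spirit, but the exchange-axiom accounting you sketch would need to be made precise, and the SWS lemma is the cleaner tool for it.

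In short: the construction and the strategic outline are right, but the decisive technical contribution — an explicit GS-preserving symmetrization and its proof via convolution and endowment — is missing, and the alternatives you float do not fill that hole.
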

\vspace{0.1in}

To prove this theorem we present a simple-to-describe family of submodular valuations (parametrized by integers $m$ and $d$) that we refer to as $BA(k,d)$, and show that members of this family cannot be approximated well by any GS function. $BA(k,d)$ even belongs to the class of budget additive valuations\footnote{A valuation is budget additive if there exists some $B$ such that for every bundle $S$, $v(S)=\min(\Sigma_{j\in S}v(\{j\}),B)$.}.

The proof of Theorem~\ref{thm:MainThm} makes extensive use of symmetries. We say that items $i$ and $j$ are symmetric in a valuation $f$ if for every bundle $S \subset M \setminus \{i,j\}$ it holds that $f(S \cup \{i\})=f(S \cup\{j\})$. We show that symmetries have far reaching implications on GS functions. Specifically, if $f$ is GS, then every two items that are symmetric are also {\em weak substitutes} of each other, in the sense that the marginal value of one item given the other cannot be larger than the marginal value of the item given any third item. We refer to functions with this property as {\em symmetric weak substitutes}, SWS. (See Definition~\ref{def:weakSubstitutes} and Proposition~\ref{pro:GSisSWS}.) Using the SWS property it is trivial to exhibit submodular functions that are not GS. 
For example, consider the budget additive function $f$ defined on three items $a,b,c$, with $f(a) = f(b) = \frac{1}{2}$, $f(c) = 1$, and $f(S) = \min[1, \sum_{i\in S} f(i)]$. Items $a$ and $b$ are symmetric, but they are not weak substitutes, as $f(a|c) < f(a|b)$. Hence $f$ is not GS. The functions $BA(k,k)$ (for $k \geq 2$) used in the proof of Theorem~\ref{thm:MainThm} are generalizations of this function $f$. 

To derive inapproximability ratios using the SWS framework, we partition the proof into two parts. One is to show that the approximation ratio of any GS function that preserves the symmetries of $BA(k,d)$ is $\Omega(\frac{\log m}{\log\log m})$ (for appropriate choices of $k$ and $d$). For this we extend the SWS property to groups of items (see Lemma~\ref{lem:key}). With this extension, and the careful design of the functions $BA(k,d)$, proving this inapproximability result is relatively straightforward. 

The second part of the proof shows that for every function $f$ (and thus also $BA(k,d)$), among GS functions, the one that approximates $f$ best can be assumed to preserve the symmetries of $f$. Typically, such statements are proved by using symmetrization operations.  Symmetrization is usually relatively easy to obtain in ``natural'' classes like submodular and subadditive classes, where a symmetrized version of a valuation can be obtained by appropriately permuting the items and averaging. However, here we face the following difficulty: 
the average of two GS functions is not necessarily GS. Hence, we introduce a different symmetrization operation that preserves the approximation ratio. Interestingly, applying our new symmetrization operation to a submodular function does not guarantee that the symmetrized function is submodular, but if the original function is GS then the symmetrized function is GS as well.



We do not know whether the $\frac{\log m}{\log\log m}$ factor is tight, that is, whether submodular valuations can be approximated by GS to within a factor of $O(\frac{\log m}{\log\log m})$. However, we do show that every budget additive valuation can be approximated by a GS function within this factor. In fact, we prove a more general result, which is described next.

\paragraph{Approximating Concave of GS by GS.}

Observe that every budget additive valuation $f$ is a composition $f = g \circ f'$ of a concave monotone function $g: \reals^+ \rightarrow \reals^+$ (where $g(x) = x$ for $x \le B$, and $g = B$ for $x \ge B$) with an additive valuation function $f'$. Note that the corresponding $f'$, being additive, belongs to GS. 

In general, we can consider functions of the form $f = g \circ f'$, where $g$ is an arbitrary concave monotone function, and $f'$ is an arbitrary GS valuation function. Note that such a function $f$ is necessarily submodular, as $f'$ is submodular, and submodularity (unlike the property of being GS) is preserved under composition with concave functions. We show that GS-ness is approximately preserved under composition of concave functions, at least for some GS valuations.

Rado valuations can be viewed as {\em weighted matroid matching} valuations --- a strict superclass of OXS valuations (corresponding to weighted matching in bipartite graphs, see definition in Section~\ref{sec:preliminaries}). Rado valuations are known to be GS, and until very recently it has been open whether they contain the entire GS class; it was recently shown to be false \cite{GargHV21}.

\vspace{0.1in}
\begin{theorem}
\label{thmWMM}
Let $g: \reals^+ \rightarrow \reals^+$ be an arbitrary concave monotone function, and let $f'$ be an arbitrary {\em Rado} 
valuation function (over $m$ items). 
The function $f = g \circ f'$ can be approximated within a ratio of $O(\log^2 m)$ by a gross substitutes valuation function. 
As a special case (with stronger results), if $f'$ is additive, or more generally, a weighted Matroid Rank Function, then the approximation ratio is $O(\log m)$.
\end{theorem}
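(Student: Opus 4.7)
The plan is to approximate $g$ by a piecewise-linear concave function with $O(\log m)$ breakpoints, and then realize the composition with $f'$ as a gross substitutes valuation by geometrically bucketing items by weight, exploiting the fact that a concave function of a matroid rank is itself OXS.

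First, on the range of $f'$-values that matters --- a $\mathrm{poly}(m)$-ratio interval, after rescaling and discarding items whose contribution to $f'$ is negligible --- approximate $g$ within factor $2$ by a piecewise-linear concave $\tilde g$ with $O(\log m)$ geometrically-spaced breakpoints. This uses only that any monotone concave function with $g(0)=0$ is within a factor $2$ of its tangent-based piecewise-linear interpolation at doubling sample points.

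Second, assume first that $f'$ is a weighted matroid rank function of matroid $M$ with weights $w$. Bucket items into $O(\log m)$ geometric weight classes $C_\ell = \{j : w_j \in [2^\ell, 2^{\ell+1})\}$ and set $V_\ell(S) = f'|_{C_\ell}(S \cap C_\ell)$. Two structural bounds follow from the independence structure of $M$: $V_\ell(S) \le f'(S)$ (a bucket-independent set is independent in $M$), and $f'(S) \le \sum_\ell V_\ell(S)$ (a max-weight independent set in $M$ decomposes across buckets). Combined with concavity of $g$ (which gives subadditivity, since $g(0)=0$), this yields
\[
g(f'(S))\ \le\ \sum_\ell g(V_\ell(S))\ \le\ O(\log m)\cdot g(f'(S)).
\]
Within bucket $\ell$, $V_\ell(S)$ is within a factor $2$ of $2^\ell r_\ell(S\cap C_\ell)$ where $r_\ell$ is the rank of $M|_{C_\ell}$, so $g(V_\ell(S))$ is within constants of $g(2^\ell r_\ell(S \cap C_\ell))$, a concave function of a matroid rank. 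Such a function decomposes as $\sum_k c_k r_{M^{(k)}}(\cdot)$ with $c_k \ge 0$, and can be realized as an OXS (hence GS) valuation via a parallel-copies bipartite matching construction over the truncations of $M|_{C_\ell}$. Summing the bucketwise OXS valuations over the disjoint $C_\ell$ stays inside OXS (direct sum of bipartite matching structures) and approximates $g \circ f'$ within $O(\log m)$. For a general Rado $f'$, one performs the analogous bucketing on the matching side of the Rado representation; the bucketwise restriction is then only approximable by a weighted matroid rank up to an additional $O(\log m)$ factor coming from the extra matching layer, yielding the $O(\log^2 m)$ bound.

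The crux of the argument is realizing $g(2^\ell r_\ell(\cdot))$ as an honest GS valuation. Sums of GS valuations on a common ground set are generically not GS (as the introduction's budget-additive counterexample demonstrates), so one cannot simply sum the weighted truncations $c_k r_{M^{(k)}}$ term by term. The fix is to leave the WMRF class and realize the whole sum as a single bipartite matching with copies of each item --- one copy per truncation level --- which stays inside OXS. The cross-bucket combination then preserves OXS trivially because the $C_\ell$ are disjoint, and the $\log m$ loss is absorbed by the geometric bucketing.
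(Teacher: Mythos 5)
Your high-level blueprint---bucket items (or edges) into geometric weight classes, observe that each bucket's contribution is approximately a scaled matroid rank, use that a concave function of a matroid rank is GS, and recombine across buckets (directly when supports are disjoint, with an extra $\log m$ loss for Rado)---is the same decomposition the paper uses. But there is a genuine gap at the very first step, and it is exactly the issue that the bulk of the paper's technical work (Lemma~\ref{lem:approach1} and its proof in Appendix~\ref{sec:advanced-lemma}) is there to close.

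You assume that, ``after rescaling and discarding items whose contribution to $f'$ is negligible,'' only a $\mathrm{poly}(m)$-ratio interval of $f'$-values matters, so that $O(\log m)$ geometric weight buckets suffice and $g$ can be replaced by a piecewise-linear approximation with $O(\log m)$ breakpoints. Neither reduction is valid. Take $f'$ additive over items $j = 1, \ldots, m$ with weights $w_j = 2^{j}$, and $g(x) = \log(1+x)$. Then every singleton has $g(f')$-value $\Theta(j) \in [1,m]$, so the range of $g(f')$ is polynomial and \emph{no item is negligible}; yet the $f'$-weights span a ratio of $2^{m-1}$, geometric bucketing of weights produces $\Theta(m)$ buckets, and your sandwich $g(f'(S)) \le \sum_\ell g(V_\ell(S)) \le (\#\text{buckets})\cdot g(f'(S))$ degrades to a factor of $\Theta(m)$. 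Likewise, a piecewise-linear concave interpolant with only $O(\log m)$ doubling-spaced breakpoints over $[1, 2^{m}]$ is off from $\log$ by a polynomial factor at intermediate points, not a factor of~$2$. The number of ``doubling sample points'' is $\Theta(\log R)$ where $R$ is the weight ratio, which need not be $O(\log m)$. The paper handles this by allowing $T$ (the number of scaled-MRF summands) to be polynomial in $m$, then recovering a polylogarithmic loss via the coloring-and-merging argument in Lemma~\ref{lem:approach1}: indices are colored periodically by $\log m$ colors so that within a color class consecutive scales differ by a factor $\ge m$, adjacent scales whose $g$-images collapse to within a factor of two are merged into a single flat MRF, and Proposition~\ref{lem:concave-mrf-gs} (concave of MRF is GS) is then applied bucketwise. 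Your proposal has no analogue of this step, and without it the claimed $O(\log m)$ and $O(\log^2 m)$ bounds do not follow.

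Two smaller points. First, you claim the bucketwise function $g(2^\ell r_\ell(\cdot))$ is OXS via ``copies of each item, one per truncation level''; but in an OXS representation each item is a single left-vertex of the bipartite graph and can be matched to at most one slot, whereas the decomposition $\sum_k c_k\, r_{M^{(k)}}(S)$ requires an item to contribute to several summands simultaneously. What the paper actually needs (and proves, in Proposition~\ref{lem:concave-mrf-gs} via the triplet characterization) is only that concave-of-MRF is GS, which is weaker than OXS and suffices; your OXS claim is not established. Second, your treatment of the Rado case (``bucketing on the matching side \ldots\ an additional $O(\log m)$ factor from the extra matching layer'') is too vague to evaluate: in the paper the extra $\log m$ factor arises because, after bucketing \emph{edges} by weight, the resulting scaled MRFs have overlapping supports over the items, so one cannot simply sum them and must instead take the convolution (welfare) of the bucketwise GS functions, which is the $T$-vs-$T^2$ distinction between items~2 and~1 of Lemma~\ref{lem:approach}.
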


\paragraph{Open Questions.} The most obvious -- and important -- question that the paper leaves open is to determine the exact ratio that GS valuations can approximate submodular valuations. Another immediate question is to determine whether the class of all GS valuations is approximately closed under concave functions; i.e.,  whether the GS class approximates the class of concave of GS within a poly-logarithmic factor.

{
More generally, it would be interesting to study how well GS functions can approximate and be approximated by other valuation functions. 
Three subclasses of submodular valuations that are of interest are budget additive (BA), coverage, and matroid rank sum (MRS --- sum over matroid rank functions, see Definition~\ref{def:mrs}). MRS is a strict superclass of coverage functions (coverage is the class of matroid rank functions of rank 1, see Definition~\ref{def:mrs}); all three valuations are incomparable to GS. 

It is not hard to see that BA and coverage functions approximate GS very badly ($\Omega(\sqrt{m})$). Indeed, BA has a polynomial representation, and every coverage function can be approximated by a coverage function with polynomial representation {\cite{badanidiyuru2012sketching}}. If either family approximates GS well, then GS would have good sketches, contradicting footnote~\ref{foot:1}. 
How well can MRS approximate GS (or submodular) valuations is an open problem.
In the other direction, how well can GS valuations approximate coverage and MRS valuations? {We show (see Theorem~\ref{th:coverage-ba}) that coverage functions approximate BA functions within constant factors, and hence our negative results concerning inapproximability of BA by GS within a better ratio than $\Omega(\frac{\log m}{\log \log m})$ extend also to approximating coverage or MRS by GS.}
}

\section{Preliminaries}
\label{sec:preliminaries}

In what follows, we define some valuation functions relevant to the current paper. Additional valuations, including matroid rank functions (MRF), weighted matroid rank function (WMRF), matroid rank sum (MRS), and OXS are defined in Appendix~\ref{sec:matroids-prelim}.
For all functions $f$ below, $f:2^{M} \rightarrow \reals^+$ (For simplicity of presentation, we sometimes omit the parenthesis of sets when referring to singletons).
We denote by $f(j \mid X)$ the {\em marginal value} of $j$ given $X$; i.e., $f(j \mid X) = f(X \cup \{j\}) - f(X)$.

\begin{itemize}[noitemsep]
	\item $f$ is {\em additive} is there exist $v_1, \ldots, v_n \in \reals^+$ such that for every $S \subseteq M$, $f(S)=\sum_{j \in S}v_j$.
	\item $f$ is {\em budget additive} (BA) if there exist $v_1, \ldots, v_n, B \in \reals$ such that for every $S \subseteq M$, $f(S)=\min\{B,\sum_{j \in S}v_j\}$.
	\item $f$ is {\em coverage} if there exists a finite set $\Omega$, where every element $j \in \Omega$ is associated with a weight $w_j \in \reals^+$, and a function $g:M \rightarrow 2^{\Omega}$ such that for every set $S \subseteq M$, $f(S)=\sum_{j \in \bigcup_{i\in S}g(i)}w_j$.
	\item $f$ is Rado if there is a bipartite graph $G(M,V;E)$ with non-negative weights on the edges, and a matroid $\mathcal{M}=(V,\mathcal{I})$, such that for every $S\subseteq M$, $f(S)$ is the total weight of the max weight matching on the subgraph induced by $S$ and a subset of $V$ that belongs to $\mathcal{I}$.	
	\item $f$ is {\em submodular} if for every sets $S,T \subseteq M$ such that $S \subseteq T$, and item $j$, $f(j \mid T) \leq f(j \mid S)$.
	\item $f$ is {\em XOS} (also known as fractionally subadditive) if it is a maximum over additive functions; i.e., there exist additive functions $f_1, \ldots, f_k$ for some integer $k$, such that for every $S \subseteq M$, $f(S)=\max_{i\in [k]}\{f_i(S)\}$.
\end{itemize}

We next turn to the gross substitutes (GS) class, whose definition uses the notions of prices, utility and demand.
Suppose every item $j \in M$ is associated with a price $p_j$.
Then, the {\em utility} derived from a set $S$ of items, given valuation function $f$ and item prices $\prices = (p_1, \ldots, p_m)$, is the net benefit from $S$; i.e., $u(S,p) = f(S)-\sum_{j\in S}\pricej$.
Consequently, given a valuation function $f$ and item prices $\prices$, the {\em demand} is the collection of sets of items that maximize utility.

\begin{definition}
	Given a valuation function $f$, and a price vector $\prices = (p_1, \ldots, p_m) \in \reals^{m}$, the {\em demand} of $f$ under $\prices$ is $D(\prices) = argmax_{S\subseteq M}\{f(S)-\sum_{j\in S}\pricej\}$.
\end{definition}

We are now ready to define gross substitutes valuations.
\begin{definition}\label{def:gs_kc}
	A valuation function $f$ is gross substitutes (GS) if for every pair of price vectors $\prices, \vect{q}$ such that $\prices \leq \vect{q}$, for every set $S \in D(\prices)$, there exists a set $T \in D(\vect{q})$ such that $T$ contains every item $j \in S$ such that $p_j = q_j$.
\end{definition}
Conceptually, this definition says that if $S$ is the demand under a given price vector, and we increase the prices of some items, then those items whose prices have not increased are still in demand.
%

The following containment relations are known: 
$$
OXS,WMRF \subsetneq Rado \subsetneq GS \subsetneq Submodular \subsetneq XOS \subsetneq Subadditive.
$$
In addition, it is known that $BA,Coverage,MRS \subsetneq submodular$, and every pair among BA, Coverage, GS and MRS is non-comparable, except for $Coverage \subsetneq MRS$.

\vspace{-0.1in}
\paragraph{GS characterization}

The gross substitute (GS) class has several alternative characterizations.
In what follows we give two characterizations that are of particular interest in this paper.

A {\em demand query} gets a price vector $\prices$ as input and returns a set $S \in D(\prices)$. The first characterization uses the notion of {\em marginal utility}. Given a set $S \subseteq M$ and an item $j$, the marginal utility of $j$ with respect to $S$ given prices $\prices$ is $u(S\cup\{j\},\prices)-u(S,\prices)$.


\begin{lemma}(The greedy characterization of GS)\cite{DT1995,PaesLeme17}
	A valuation $f$ is GS if and only if every demand query can be answered by the following greedy algorithm: at every step pick an item $j$ of highest marginal utility, until no item has positive marginal utility.
	\label{lem:greedy-char}
\end{lemma}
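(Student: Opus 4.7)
The plan is to prove the lemma as a two-direction equivalence, treating Definition~\ref{def:gs_kc} (the Kelso-Crawford condition) as the sole primitive. The forward direction shows that GS implies greedy correctness, and the reverse direction shows that greedy correctness implies the Kelso-Crawford substitution condition.

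For the forward direction, I would run greedy on a price vector $p$ and maintain the invariant that after the $k$-th selection, the current set $S_k$ is contained in some $T_k \in D(p)$. The base case $k=0$ is immediate. For the inductive step, suppose $S_k \subseteq T$ for some $T \in D(p)$ and greedy next picks $j_{k+1} \in \arg\max_{j \notin S_k}(f(j \mid S_k) - p_j)$. If $j_{k+1} \in T$, take $T_{k+1} = T$. Otherwise, consider the perturbed price vector $q$ obtained from $p$ by slightly increasing $p_i$ for items $i \in T \setminus (S_k \cup \{j_{k+1}\})$. Since $T \in D(p)$ and the perturbation is small, any $T' \in D(q)$ must avoid those newly-expensive items. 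Applying GS to $p \leq q$ with $T \in D(p)$ yields some $T' \in D(q) \subseteq D(p)$ with $S_k \subseteq T'$; then the maximality of the marginal utility of $j_{k+1}$ at $p$ (hence at $q$), combined with GS's preservation of items whose prices did not change, forces $j_{k+1} \in T'$, giving $T_{k+1} := T'$. When greedy halts at $S$, every $j \notin S$ has $f(j \mid S) \leq p_j$ by the stopping rule, and for any $j \in T \setminus S$ we must have $f(j \mid S) = p_j$ (otherwise removing $j$ from $T$ would strictly improve utility, contradicting $T \in D(p)$). Hence $S \in D(p)$ as well.

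For the reverse direction, I would take $p \leq q$ and $S \in D(p)$, and set $I := \{j \in S : p_j = q_j\}$. Using the hypothesis, greedy at $p$ with a tie-breaking rule preferring items of $S$ produces $S$ itself. I then couple this execution with a greedy run at $q$, tie-breaking in favor of items in $I$. The coupling relies on two observations: for any common prefix set $S_k$, items $j \in I$ have identical marginal utility at $p$ and $q$ (their price is unchanged), while items $j \notin I$ with $p_j < q_j$ have strictly smaller marginal utility at $q$ than at $p$. By tracking these inequalities step by step, the greedy run at $q$ selects a set $T$ containing $I$. By hypothesis $T \in D(q)$, which verifies the GS condition.

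The main obstacle is the exchange step in the forward direction, i.e., producing $T_{k+1}$ that contains $S_k \cup \{j_{k+1}\}$. Here one must use the Kelso-Crawford condition in a delicate way, perturbing exactly the right coordinates of $p$ so that the GS substitution clause forces a swap into $T$ rather than a deletion of a previously greedy-selected item. An equivalent route is to first derive the $M^{\natural}$-exchange (single-improvement) property of GS from Definition~\ref{def:gs_kc} as an intermediate lemma, and then invoke it at each greedy step; I would likely structure the write-up that way so that the exchange argument is isolated and reused, rather than inlined inside the induction.
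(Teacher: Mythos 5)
The paper does not prove Lemma~\ref{lem:greedy-char}; it simply cites it from Dress--Terhalle and from Paes Leme's survey. So there is no ``paper proof'' to match against, and your proposal must be judged on its own.

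Your high-level architecture is sound, but the exchange step in the forward direction --- the part you yourself flag as the main obstacle --- is not just delicate, it is genuinely incomplete as written. After perturbing the prices of items in $T \setminus (S_k \cup \{j_{k+1}\})$ and invoking Definition~\ref{def:gs_kc}, you correctly obtain some $T' \in D(q)$ (hence $T' \in D(p)$ for $\epsilon$ small) with $S_k \subseteq T'$ and $T' \cap T = S_k$. But the claim that ``maximality of the marginal utility of $j_{k+1}$ ... forces $j_{k+1} \in T'$'' does not follow from anything you have set up. Submodularity only gives $f(j_{k+1}\mid T') \le f(j_{k+1}\mid S_k)$, which is the wrong direction: it does not let you conclude that $j_{k+1}$ has nonnegative marginal utility over $T'$, so you cannot simply add it. If $T'$ contains some $a \notin T$ beyond $S_k$, you know $f(j_{k+1}\mid S_k) - p_{j_{k+1}} \ge f(a\mid S_k) - p_a \ge 0$, but that still leaves you needing an exchange argument to trade $a$ for $j_{k+1}$ inside a demanded set, which is precisely the content of the lemma you are trying to prove. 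A single price perturbation does not deliver this swap; you would need to iterate or, as you suggest at the end, first derive the single-improvement / $M^{\natural}$-exchange property from Definition~\ref{def:gs_kc} and then invoke it. That detour is the actual substance of the proof, and the proposal gestures at it without carrying it out.

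The reverse direction also skates over a real difficulty: once the greedy runs at $p$ and at $q$ pick different items, the intermediate sets diverge, and after that the marginal utilities of items in $I$ are computed relative to different bases. Saying ``items $j \in I$ have identical marginal utility at $p$ and $q$'' is true only when conditioned on the same current set, which the coupling does not maintain. One can repair this (e.g., by arguing inductively that at every stage the $q$-greedy set is ``dominated'' by the $p$-greedy set in a suitable sense, using submodularity and the fact that raised-price items only become less attractive), but as written the coupling claim is an assertion rather than an argument. In short: the skeleton is reasonable and matches the standard route in the cited sources, but both directions need a nontrivial amount of additional work before this would count as a proof.
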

The next characterization does not use the notion of prices. {For two sets $T,S \subset M$, we use $f(T|S)$ to denote the marginal value of $T$ with respect to $S$; i.e., $f(T|S)=f(T \cup S)-f(S)$. For simplicity of notation we omit the curly brackets of sets when clear in the context.}

\begin{lemma}(The triplet characterization of GS)\cite{ReijniersePG2002}
	A function $f$ is GS if and only if it is submodular, and for every set $S$, and items $i,j,k\not\in S$, it holds that
	\begin{equation}
	\label{eq:no-unique-max}
	\max\{f(i,k|S)+f(j|S), f(j,k|S)+f(i|S)\} \geq f(i,j|S)+f(k|S).
	\end{equation}
	We refer to Equation~\eqref{eq:no-unique-max} as the {\em triplet condition}.
	\label{lem:triplet-char}
\end{lemma}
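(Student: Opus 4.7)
The plan is to prove each direction separately using the greedy characterization (Lemma~\ref{lem:greedy-char}) as the main working tool.

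For the forward direction, assume $f$ is GS. Submodularity is standard: it follows from Definition~\ref{def:gs_kc} by setting prices so that $\{S \cup \{j\}\} \in D(\prices)$ and then slightly raising the price of $j$, or equivalently from the greedy characterization. For the triplet condition, first observe that GS is preserved under marginalization, i.e., $f(\cdot \mid S)$ is GS on $M \setminus S$ --- this is immediate from Definition~\ref{def:gs_kc} by setting the prices of items in $S$ very low, so that every demanded set contains $S$. This reduces the task to proving $\max\{f(i,k)+f(j),\, f(j,k)+f(i)\} \geq f(i,j)+f(k)$ for a GS function on three items. Suppose for contradiction that the strict reverse inequality holds. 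I would then construct a price vector $(p_i,p_j,p_k)$ at which precisely $\{i,j\}$ and $\{k\}$ are the two utility-maximizing sets (this is possible exactly because of the assumed strict inequality) and derive a contradiction from the greedy characterization: greedy must pick some item first, and no matter which of $i,j,k$ it selects, submodularity prevents it from completing the set to either $\{i,j\}$ or $\{k\}$ while still returning a demanded set.

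For the reverse direction, assume $f$ is submodular and satisfies the triplet condition. Again by Lemma~\ref{lem:greedy-char}, it suffices to show that on every price vector $\prices$, greedy returns a set in $D(\prices)$. Suppose not, and let $(\prices, T^*)$ be a counterexample with $T^* \in D(\prices)$, $T$ the greedy output, and $|T \triangle T^*|$ minimum. Submodularity rules out $T \subsetneq T^*$ and $T^* \subsetneq T$, so both set differences are nonempty. Consider the first step at which greedy selects some $i \in T \setminus T^*$; at that moment its partial solution $S$ is contained in $T \cap T^*$, and greedy's rule gives $f(i \mid S) - p_i \ge f(j \mid S) - p_j$ for every $j \in T^* \setminus T$. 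Now apply the triplet condition at $S$ to items $i$, $j$, and a carefully chosen third item $k$ (from $T^* \setminus (S \cup \{j\})$ or $T \setminus (S \cup \{i\})$, depending on cases): together with submodularity, the inequality produces a set $T'$ with $u(T', \prices) \ge u(T^*, \prices)$ and $|T' \triangle T| < |T^* \triangle T|$, contradicting the minimality of $|T \triangle T^*|$.

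The main obstacle is the bookkeeping in the reverse direction, specifically selecting the third item $k$ so that the triplet condition yields a legitimate swap-style improvement of $T^*$ that lies strictly closer to $T$. The cleanest route I know (following Reijnierse, van Gellekom and Potters) factors this implication through the Gul--Stacchetti ``single improvement'' (SI) property: show first that (submodular $+$ triplet) implies SI via a local exchange argument where the triplet condition provides exactly the required size-$2$ improvement, and then invoke the known equivalence of SI with GS (which essentially gives the greedy characterization). This two-step decomposition localizes the technical content to the behavior of $f$ on three items at a time, matching precisely the form of the triplet condition and avoiding a delicate global induction on $|T \triangle T^*|$.
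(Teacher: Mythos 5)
This lemma is not proved in the paper at all: it is imported verbatim from Reijnierse, van Gellekom and Potters \cite{ReijniersePG2002}, so there is no internal proof to compare against and your argument has to stand on its own. As written, it does not. The concrete failure is in the forward direction. Suppose the triplet condition fails strictly, and suppose you do find prices at which \emph{precisely} $\{i,j\}$ and $\{k\}$ maximize utility. Then $u(\{k\},\prices)$ strictly exceeds $u(\{i\},\prices)$ and $u(\{j\},\prices)$, so greedy's first (tie-free) pick is $k$; moreover, $\{i,k\}\notin D(\prices)$ together with $\{k\}\in D(\prices)$ forces $p_i> f(i\mid k)$, and symmetrically $p_j>f(j\mid k)$, so after taking $k$ both remaining marginal utilities are negative and greedy halts with $\{k\}\in D(\prices)$. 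That is a correct answer to the demand query, so no contradiction with Lemma~\ref{lem:greedy-char} arises; the assertion that ``no matter which of $i,j,k$ it selects'' greedy cannot end in a demanded set is simply false in the only case that occurs. A working argument has to be different, e.g.\ use Definition~\ref{def:gs_kc} directly: arrange prices so that $\{i,j\}$ is the unique demanded set, then raise $p_j$ until $\{i,j\}$ leaves the demand, and observe that under the failed inequality the new demanded sets avoid $i$ even though $p_i$ did not change (note also that the failed inequality reads $f(i\mid j)>f(i\mid k)$ and $f(j\mid i)>f(j\mid k)$, which is what makes such a price construction possible).

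The reverse direction is also not yet a proof. Your first variant concedes that the choice of the third item $k$ and the construction of the improved set $T'$ is ``the main obstacle''---but that bookkeeping is exactly the mathematical content of the implication, since the triplet condition is purely local (three items on top of a common $S$) while the sets $T$ and $T^*$ in the exchange argument may differ in many items; nothing in the sketch shows how a single application of the triplet inequality bridges that gap. Your second variant replaces this by the claim that submodularity plus the triplet condition implies the Gul--Stacchetti single-improvement property ``via a local exchange argument,'' and then cites SI $\Leftrightarrow$ GS. The second citation is legitimate, but the first claim is precisely the nontrivial induction carried out in \cite{ReijniersePG2002}; asserting it in one clause leaves the proof with the same hole, just relocated. (The preliminary observations you do make---GS implies submodular, and GS is preserved under conditioning on $S$, which reduces the triplet condition to three items---are fine and standard.)
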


The following lemma (whose proof is deferred to Appendix~\ref{sec:app-prelim}) shows that the triplet characterization of GS holds in an approximate sense for all submodular {valuation} functions.


\begin{lemma}
	\label{lem:local-approx}
	Let $f$ be a submodular {valuation} function. Then, for every set $S$, and items $i,j,k\not\in S$, 
	$$
	\max\{f(i,k|S) + f(j|S), f(j,k|S) + f(i|S)\} \ge \frac{3}{4}(f(i,j|S) + f(k|S)).
	$$
	The ratio of $\frac{3}{4}$ is best possible.
\end{lemma}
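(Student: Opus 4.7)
The plan is to first reduce to the case $S = \emptyset$ by passing to the marginal function $f_S(T) := f(S \cup T) - f(S)$, which remains monotone and submodular and reproduces the statement verbatim on $i,j,k$. Setting $A = f(i,k) + f(j)$, $B = f(j,k) + f(i)$, and $C = f(i,j) + f(k)$, the goal becomes $\max(A,B) \ge \tfrac{3}{4} C$.

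The core step is to bound $A+B$ from below by $\tfrac{3}{2} C$, from which the conclusion follows immediately via $\max(A,B) \ge \tfrac12(A+B)$. The regrouping I would use is
\[ A + B \;=\; \bigl( f(i,k) + f(j,k) \bigr) + \bigl( f(i) + f(j) \bigr). \]
Applying the union--intersection form of submodularity to each grouped pair with $X=\{i,k\}, Y=\{j,k\}$ (so $X\cup Y=\{i,j,k\}$ and $X\cap Y=\{k\}$), and with $X=\{i\}, Y=\{j\}$ (using $f(\emptyset)=0$), gives
\[ A + B \;\ge\; f(i,j,k) + f(i,j) + f(k). \]
Now I would invoke monotonicity: $f(i,j,k) \ge \max\{f(i,j), f(k)\} \ge \tfrac12\bigl(f(i,j) + f(k)\bigr) = \tfrac{C}{2}$. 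Therefore $A+B \ge \tfrac{C}{2} + C = \tfrac{3}{2} C$, as required.

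For tightness, I would exhibit the three-item function with $f(i)=f(j)=1$, $f(k)=2$, and $f(i,j)=f(i,k)=f(j,k)=f(i,j,k)=2$. A routine check of all marginals confirms submodularity (every marginal $f(x\mid T)$ is nonincreasing in $T$), and the triplet evaluates to $\max(A,B)=2+1=3$ against $C=2+2=4$, giving ratio exactly $3/4$.

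There is no real obstacle here; the only subtlety, and the reason the proof is crisp, is recognizing the right pairing: one must pair $f(i,k)$ with $f(j,k)$ (rather than with the $f(i)$ or $f(j)$ terms) so that the shared element $k$ becomes the intersection and the union produces $f(i,j,k)$, which is precisely the term that monotonicity allows us to lower-bound by the average of $f(i,j)$ and $f(k)$ appearing in $C$. The "wrong" grouping only yields the looser factor $1/2$.
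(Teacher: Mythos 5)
Your proof is correct and is genuinely cleaner than the paper's. The paper argues by cases, splitting on whether $f(i,j|S) < f(k|S)$ or $f(i,j|S) \ge f(k|S)$: in the first case it uses monotonicity ($f(i,k|S), f(j,k|S) \ge f(k|S)$) together with $\max\{f(i|S),f(j|S)\} \ge \tfrac12 f(i,j|S)$; in the second it derives, via the same submodularity pairings you use, that $A+B \ge f(i,j,k|S)+f(i,j|S)+f(k|S)$, but then lower-bounds $f(i,j,k|S)$ only by $f(i,j|S)$ and needs the case hypothesis to close. Your observation that $f(i,j,k|S) \ge \max\{f(i,j|S),f(k|S)\} \ge \tfrac12\bigl(f(i,j|S)+f(k|S)\bigr) = \tfrac{C}{2}$ collapses both cases into the single chain $A+B \ge \tfrac{3}{2}C$, so $\max(A,B) \ge \tfrac{3}{4}C$ unconditionally; the trick of taking the max (not just one) of the two monotonicity lower bounds on $f(i,j,k|S)$ is exactly what makes the case split unnecessary. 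Your tightness example is the same as the paper's (a budget-additive function with $f(i)=f(j)=1$, $f(k)=2$, budget~$2$), just stated as an explicit table.
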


Thus, the triplet condition of GS is a local condition that approximately (within a constant factor) holds for all submodular valuations.
In contrast, our main result limits the extent to which approximability holds globally, as it shows the existence of submodular valuation functions that cannot be approximated by a GS function within any constant factor.\footnote{
	Let us remark that if $f$ is XOS, then we might have that $f(i,j|S) + f(k|S) = 1$ but $\max\{f(i,k|S) + f(j|S), f(j,k|S) + f(i|S)\} = 0$.
	For example, if $S$ contains a single item $s$, and $f(T) = \max\{|T \cap \{s\}|, |T \cap \{i,j\}|\}$ for every set $T$.}

{Finally, we recall the definition of the convolution (max welfare) operation.

\begin{definition}
\label{def:conv}
Given two valuation functions $f,g$, the convolution of $f$ and $g$, denoted $f \star g$, is
$$
(f\star g)(S)=\max_{T \subseteq S}\left(f(T)+g(S \setminus T)\right)
$$
\end{definition}

The gross substitutes property is closed under convolution 
 \cite{lehmann2006combinatorial,Murota96}
}

\section{Lower Bounds via Symmetrization}
\label{sec:log-gap}




Throughout this section we denote the set $X\cup \{j\}$ by $Xj$ for simplicity of presentation.

\begin{definition}
	\label{def:symmetric}
	Let $f$ be an arbitrary set function.
	For two items $i,j \in M$, we say that $f$ is $(i,j)$-symmetric,
	denoted by $i =_f j$, if for every set $S \subset M \setminus \{i,j\}$ it holds that $f(Si) = f(Sj)$.
\end{definition}

\begin{proposition}
	\label{pro:equivalence}
	The relation $=_f$ is an equivalence relation.
\end{proposition}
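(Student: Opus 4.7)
The plan is to verify the three defining properties of an equivalence relation: reflexivity, symmetry, and transitivity. Reflexivity ($i =_f i$) and symmetry (if $i =_f j$ then $j =_f i$) are immediate from the definition, since $f(Si) = f(Si)$ trivially, and the defining equation $f(Si) = f(Sj)$ is symmetric in $i$ and $j$. The only non-trivial property is transitivity, on which I would focus the bulk of the argument.

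For transitivity, assume $i =_f j$ and $j =_f k$, and fix an arbitrary $S \subset M \setminus \{i,k\}$; the goal is to show $f(Si) = f(Sk)$. I would split into two cases depending on whether $j$ belongs to $S$.

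In the easy case, $j \notin S$, we have $S \subset M \setminus \{i,j\}$ and $S \subset M \setminus \{j,k\}$, so applying $i =_f j$ and then $j =_f k$ directly gives $f(Si) = f(Sj) = f(Sk)$. In the harder case, $j \in S$, write $S = S' j$ with $j \notin S'$; note $S'$ contains none of $i, j, k$. The obstacle here is that we cannot apply $i =_f j$ to $S$ itself, since $j \in S$. The idea is to bring in $k$ (respectively $i$) as a bystander: the set $S'k$ avoids $\{i,j\}$, so $i =_f j$ yields $f(S'ki) = f(S'kj)$, and the set $S'i$ avoids $\{j,k\}$, so $j =_f k$ yields $f(S'ij) = f(S'ik)$. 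Chaining these equalities gives $f(Si) = f(S'ji) = f(S'ik) = f(S'jk) = f(Sk)$, as desired.

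The main (minor) obstacle is the case $j \in S$, where one must be careful that the auxiliary sets $S'k$ and $S'i$ are disjoint from the pairs of items to which the symmetry hypotheses are being applied; this is ensured by the fact that $S \subset M \setminus \{i,k\}$ forces $S'$ to avoid all three of $i, j, k$. Once this case analysis is laid out cleanly, the proof is essentially a two-line computation in each case, and no further machinery is needed.
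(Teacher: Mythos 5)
Your proof is correct and matches the paper's argument almost exactly: the same two-case split on whether $j\in S$, and in the harder case the same ``bystander'' trick of applying $j=_f k$ to $S'i$ and $i=_f j$ to $S'k$, chained in the same way.
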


\begin{proof}
	By definition, $=_f$ is reflexive ($i =_f i$) and symmetric ($i =_f j$ iff $j =_f i$). It remains to show that it is transitive, namely, that $i =_f j$ and $j =_f k$ imply $i =_f k$. There are two cases to consider. In one $j \not\in S$, and then $f(Si) = f(Sk)$ follows from $f(Si) = f(Sj) = f(Sk)$. In the other $j \in S$, and then $S$ can be written as $S'j$ for $S' = S \setminus \{j\}$. Then $f(S'ji) = f(S'jk)$ follows from $f(S'ji) = f(S'ij) = f(S'ik) = f(S'ki) = f(S'kj) = f(S'jk)$. 
\end{proof}

If follows that $=_f$ partitions $M$ into equivalence classes, where every two items are symmetric with respect to $f$ if and only if they are in the same equivalence class. We refer to these equivalence classes as {\em symmetry classes}.

Given $M$, we say that a partition $P_1$ is a {\em coarsening} of a partition $P_2$ (and $P_2$ is a {\em refinement} of $P_1$), if every class of $P_1$ is a union of (one or more) classes of $P_2$. Equivalently, every class of $P_2$ is contained in (or equal to) a class of $P_1$. (A partition is both a coarsening and a refinement of itself.)

The following theorem plays a key role in establishing large gaps between various functions and the class of GS functions.

\begin{theorem}
	\label{thm:symmetrization}
	Let $f$ be a valuation function and, for some $\rho \ge 1$, let $g$ be a GS function that approximates $f$ within a ratio of $\rho$. Then there is a GS function $g_{sym}$ that approximates $f$ within a ratio not worse than $\rho$, and moreover, the symmetry classes of $g_{sym}$ form a coarsening of the symmetry classes of $f$.
\end{theorem}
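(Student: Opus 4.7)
The plan is to build $g_{sym}$ by aggregating symmetry-equivalent relabelings of $g$ through a GS-preserving operation. Write $C_1,\ldots,C_r$ for the symmetry classes of $f$ and let $G$ be the group of permutations of $M$ that fix each $C_\ell$ setwise. Then $f$ is $G$-invariant, and the conclusion—that the symmetry classes of $g_{sym}$ coarsen those of $f$—is precisely the statement that $g_{sym}$ is $G$-invariant. For any $\sigma \in G$, set $g^\sigma(S) := g(\sigma^{-1}(S))$; since GS-ness is invariant under relabeling, every $g^\sigma$ is GS, and since $f \circ \sigma^{-1} = f$, the chain $g^\sigma(S) = g(\sigma^{-1}S) \le f(\sigma^{-1}S) = f(S) \le \rho\, g(\sigma^{-1}S) = \rho\, g^\sigma(S)$ shows that $g^\sigma$ is still a $\rho$-approximant of $f$. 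So we have a whole family $\{g^\sigma\}_{\sigma \in G}$ of GS $\rho$-approximants of $f$, and the task reduces to combining them into a single $G$-invariant member of the same class.

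The obstacle, and the reason why a new operation is needed, is that the natural aggregates all fail. Pointwise averaging does not preserve GS; pointwise min easily breaks submodularity (e.g. with $g(i)=1, g(j)=10, g(ij)=11$, swapping $i,j$ and taking min yields $\min(j\mid i)=10>1=\min(j)$); pointwise max preserves submodularity but typically breaks the triplet condition of Lemma~\ref{lem:triplet-char}. My plan is therefore to proceed iteratively, reducing the theorem to a pairwise merging lemma: if $h$ is a GS $\rho$-approximant of $f$ and $i=_f j$ but $i \neq_h j$, construct a new GS $\rho$-approximant $h'$ of $f$ with $i =_{h'} j$ while retaining all $f$-symmetries already enforced by $h$. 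Iterating this lemma over the finitely many $f$-symmetric pairs terminates with a GS $\rho$-approximant whose symmetry classes coarsen those of $f$, which is $g_{sym}$.

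The crux is the merging lemma itself. Given $h$ and its swap $h^\tau$ with $\tau=(i\,j)$—both GS $\rho$-approximants of $f$—one needs a combination $h'$ that is (a) symmetric in $i$ and $j$, (b) still a $\rho$-approximant of $f$, and (c) still GS. Property (a) is what the construction enforces, and (b) should be immediate once one shows $h'(S)$ is bracketed between values $h(S'),h(S'')$ for sets $S',S''$ with $f(S')=f(S'')=f(S)$. Property (c) is the heart of the argument: because min, max and average all fail, the combination must be something non-pointwise. I would look for an operation defined either (i) by specifying its demand at every price vector via the greedy procedure of Lemma~\ref{lem:greedy-char} applied to $h$ and $h^\tau$ jointly, or (ii) on the profile $(|S\cap C_\ell|)_\ell$, then verifying the triplet condition of Lemma~\ref{lem:triplet-char} case-by-case against the triplet condition for $h$ and $h^\tau$. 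Consistent with the paper's warning that the new operation destroys submodularity in general but preserves GS, the argument for (c) must genuinely exploit the GS (not merely submodular) hypothesis on $h$. Identifying this operation and verifying GS-preservation is, I expect, the main technical obstacle; once the merging lemma is in place, Theorem~\ref{thm:symmetrization} follows by induction on the number of $f$-symmetric pairs not yet identified.
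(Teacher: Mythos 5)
Your high-level framing matches the paper's: reduce to a pairwise merging step via a GS-preserving operation that is neither pointwise min, max, nor average. But there are two genuine gaps, one central and one subtle.

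The central gap is that you never produce the operation, and this is exactly where all the work lies. The paper's answer is \emph{two-item max-symmetrization}: for $S \subseteq M\setminus\{i,j\}$, set $g_1(S)=g(S)$, $g_1(Sij)=g(Sij)$, and $g_1(Si)=g_1(Sj)=\max[g(Si),g(Sj)]$. Note that this is precisely the pointwise maximum of $h$ and $h^{\tau}$ in your notation (with $\tau=(i\,j)$), so, contrary to what you wrote, you were one step from writing it down. However, your aside that ``pointwise max preserves submodularity'' is false --- the paper's Example~\ref{ex:notSubmodular} exhibits a GS function $g$ whose max-symmetrization is not even submodular --- so the operation cannot be justified by checking the triplet condition against that of $h$ and $h^\tau$, since the resulting function can fail submodularity and hence Lemma~\ref{lem:triplet-char} does not directly apply as a verification tool. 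The paper instead proves GS-preservation (Lemma~\ref{lem:twoItemSymmetrization}) by factoring the operation into two ``partial symmetrizations,'' each realized as a convolution of $g$ with an auxiliary unit-demand GS function followed by endowment with one item and a renaming; GS-preservation then follows from closure of GS under convolution and endowment. Your proposed routes (greedy characterization of the joint demand, or analysis on profiles $(|S\cap C_\ell|)_\ell$) are not what the paper does, and you give no indication they would succeed.

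The second gap is your termination argument. You require the merging lemma to retain ``all $f$-symmetries already enforced by $h$'' and conclude termination by a single pass over pairs. But max-symmetrization does \emph{not} satisfy this: if $h$ is $(i,\ell)$-symmetric and you max-symmetrize with respect to $(i,j)$, then $h_1(Si)=\max[h(Si),h(Sj)]$ while $h_1(S\ell)=h(S\ell)=h(Si)$, which differ whenever $h(Sj)>h(Si)$. So previously established symmetries inside the same $f$-class can be destroyed, and a one-pass induction over pairs does not terminate. The paper handles this with a potential argument: $\psi(g_t)=\sum_{S}g_t(S)$ strictly increases under each max-symmetrization, each $g_t$-value is always one of the original $2^m-1$ values of $g$, so $\psi$ takes at most $2^{2m}$ values and the sequence must stop. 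Any completion of your proof must either supply a merging operation that genuinely preserves previously-enforced symmetries (which would be a stronger result than the paper's and needs proof), or adopt a potential-style termination argument.
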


The standard approach towards proving a theorem such as Theorem~\ref{thm:symmetrization} is to construct additional versions of $g$ that differ from the original $g$ only in the sense that item names are permuted (using permutations that respect the symmetries of $f$), and then $g_{sym}$ is taken to be the average of all versions of $g$. This gives a function that respects the symmetries of $f$, and moreover, preserves the approximation ratio of $g$. However, this symmetrization technique does not apply to GS functions, because the average of two GS functions is not necessarily a GS function. For example, consider a BA function with budget 2 and item values 2,2,0. This function is GS. Applying the above symmetrization (with respect to the last two items) implies averaging it with a BA function with budget 2 and item values 2,0,2. The average of the two functions gives a BA function with budget 2 and values 2,1,1, and this function is not GS.

Despite the above, our proof of Theorem~\ref{thm:symmetrization} is based on
repeatedly applying to $g$ two-item symmetrization operations until we reach the desired function $g_{sym}$. Given a function $g$ and two items $i$ and $j$ such that $g$ is not $(i,j)$-symmetric, an $(i,j)$-symmetrization produces a new function $g_1$ that is $(i,j)$-symmetric. Necessarily, our two-item symmetrization procedure is based on an operation that is different from averaging. We refer to our operation as two-item max-symmetrization.

\begin{definition}
	\label{def:max-symmetrization}
	Given a set function $g$ and two items $i,j \in M$, {\em two-item max-symmetrization} with respect to $i$ and $j$ produces the function $g_1$ that for every set $S \subset M \setminus \{i,j\}$ satisfies:
	
	\begin{itemize}
		
		\item $g_1(S) = g(S)$.
		
		\item $g_1(Sij) = g(Sij)$.
		
		\item $g_1(Si) = g_1(Sj) = \max[g(Si), g(Sj)]$.
		
	\end{itemize}
	
\end{definition}


{If two-item max-symmetrization is applied to a submodular function, the resulting function need not be submodular, as shown by the following example.} 

\begin{example}
	\label{ex:notSubmodular}
	For the following monotone submodular function $g$ on items $\{a,b,c,d\}$, two-item max-symmetrization with respect to items $a$ and $b$ produces a function $g_1$ that is not submodular. 
	$g(\emptyset)=0,
	g(a)=3,
	g(b)=2,
	g(c)=1,
	g(d)=2,
	g(ab)=5,
	g(ac)=3,
	g(ad)=3,
	g(bc)=3,
	g(bd)=4,
	g(cd)=3,
	g(abc)=5,
	g(abd)=5,
	g(acd)=3,
	g(bcd)=5,
	g(abcd)=5
	$.
	One can verify that $g$ is submodular, but $g_1(c|a)=0$ while $g_1(c|da)=1$, violating submodularity.
\end{example}

{In contrast to Example~\ref{ex:notSubmodular}, we have the following lemma for GS functions.}

\begin{lemma}
	\label{lem:twoItemSymmetrization}
	If two-item max-symmetrization is applied to a GS function $g$, the resulting function $g_1$ is also GS.
\end{lemma}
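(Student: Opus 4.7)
The plan is to verify the triplet characterization of GS (Lemma~\ref{lem:triplet-char}) for $g_1$, namely that $g_1$ is submodular and satisfies the triplet condition. The organizing observation is $g_1(T) = \max[g(T),\,g(T^\pi)]$, where $\pi=(i\,j)$ is the transposition swapping $i$ and $j$ and $T^\pi$ is the image of $T$ under $\pi$. In particular, $g_1$ agrees with $g$ (and with the GS function $g^\pi$) on every set containing both or neither of $i,j$, and lifts them to their pointwise maximum on sets containing exactly one of $i,j$. Example~\ref{ex:notSubmodular} already shows that mere submodularity of $g$ does not survive this operation, so the triplet condition of $g$ must enter the argument essentially.

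For submodularity of $g_1$, I would split into cases based on how $X\subseteq Y$ and the added element $a$ meet $\{i,j\}$. In the clean cases, where the argmax defining $g_1$ behaves consistently across the four sets $X,Y,Xa,Ya$, the inequality $g_1(a\mid X)\ge g_1(a\mid Y)$ reduces directly to submodularity of $g$ or of $g^\pi$. In the flipped cases, where the argmax switches between $X$ and $Xa$ (or between $Y$ and $Ya$), one combines a submodularity inequality of $g$ with one of $g^\pi$ and invokes the triplet condition of $g$ applied to the base $X\setminus\{i,j\}$ (resp.\ $Y\setminus\{i,j\}$) with triplet items drawn from $\{i,j\}$ together with the switching third element. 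That triplet is precisely what rules out the type of argmax flip that destroys submodularity in Example~\ref{ex:notSubmodular}.

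For the triplet condition of $g_1$, I would similarly partition cases by how $S,a,b,c$ meet $\{i,j\}$. When $(S\cup\{a,b,c\})\cap\{i,j\}\in\{\emptyset,\{i,j\}\}$, both sides unfold inside the $g_1=g$ regime and the triplet of $g$ closes the case. The remaining cases are: exactly one of $i,j$ lies in $\{a,b,c\}$, or exactly one of $i,j$ lies in $S$ with $\{a,b,c\}\cap\{i,j\}=\emptyset$. In each, expand both sides as a max of sums, identify the binding pair on the RHS (by fixing whether $g$ or $g^\pi$ wins for each of $g_1(ab\mid S)$ and $g_1(c\mid S)$), and exhibit a matching pair on the LHS. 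Pure binding pairs on the RHS (both terms coming from $g$, or both from $g^\pi$) are dominated immediately by the triplet of $g$ applied to $S$ or to $S^\pi$ with items $\{a,b,c\}$.

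The main obstacle is the mixed binding pair on the RHS, e.g.\ $g(Sab)+g(S^\pi c)$, where $g$ beats $g^\pi$ on $Sab$ while $g^\pi$ beats $g$ on $Sc$. Neither the triplet of $g$ on $S$ nor on $S^\pi$ alone is enough in this case. To close it one must exploit GS beyond those two triplets: a natural route is to apply the triplet condition of $g$ to base sets containing exactly one of $i,j$, for instance $S'x$ with $S'=S\setminus\{i,j\}$ and $x\in\{a,b,c\}$, with triplet items drawn from $\{i,j\}$ together with a further element of $\{a,b,c\}$. Such ``crossing'' triplets link $g$-values on original sets to $g$-values on their $\pi$-swaps, and an appropriate combination produces a term on the LHS of the $g_1$-triplet that dominates the mixed RHS term. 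The bulk of the proof is the bookkeeping: enumerating the possible dominant-side patterns on the RHS and exhibiting the crossing triplet (or combination of triplets) of $g$ that closes the gap in each.
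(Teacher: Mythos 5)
Your proposal takes a genuinely different route from the paper's, which avoids the case analysis entirely with a structural argument. The paper introduces a \emph{partial $(x,y)$-symmetrization} operation and shows (Claim~\ref{claim:partial}) that it preserves GS by convolving $g$ (extended with a dummy item $z$ of zero marginal value) with an auxiliary unit-demand function on $\{x,y,z\}$, then endowing the result with $x$ and relabeling $z$ to $x$. Applying partial symmetrization twice --- once in the order $(x,y)$ and once as $(y,x)$ --- recovers the full max-symmetrization. Everything reduces to the known closure of GS under convolution, endowment, and relabeling; no per-set verification is needed.

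Your route --- verifying the triplet characterization (Lemma~\ref{lem:triplet-char}) for $g_1=\max[g,g^\pi]$ directly --- is in principle the natural one, and you correctly diagnose why a naive argument cannot suffice (submodularity of $g$ alone fails, as Example~\ref{ex:notSubmodular} shows) and correctly isolate the hard configuration. But the key step is exactly where you stop. For the mixed binding pair on the RHS (and likewise for the ``flipped'' cases in your submodularity check), you conjecture that ``crossing triplets'' of $g$ based at sets intersecting $\{i,j\}$ in exactly one element will close the gap, yet you never exhibit such a triplet nor argue that an appropriate combination exists for every dominant-side pattern. That is not a formality: the paper's acknowledgments record that this very lemma was originally established via a computer-assisted case check, which is strong evidence that the direct triplet verification is genuinely heavy. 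As written, your proposal is a credible plan rather than a proof --- the load-bearing assertion, that for every dominant-side pattern some combination of submodularity and triplet inequalities of $g$ dominates the $g_1$-triplet's right-hand side, remains unverified.
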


\begin{proof}
Let $g$ be a GS function, and let $x$ and $y$ be the two items with respect to which we apply the two-item max-symmetrization operation, producing the function $g_1$. We use the convention that $S$ denotes an arbitrary set that does not contain neither $x$ nor $y$.








We now introduce an operation that goes half way towards two-item max-symmetrization. It will be called {\em partial 2-symmetrization}, and for it, the order between $x$ and $y$ matters. That is, {\em partial 2-symmetrization} gives (possibly) different results when applied with $(x,y)$ or when applied with $(y,x)$. We specify the outcome function $q$ when partial $(x,y)$-symmetrization is applied to $g$.

\begin{itemize}

\item $q(S) = g(S)$.

\item $q(Sx) = g(Sx)$

\item $q(Sy) = \max[g(Sx),g(Sy)] = g_1(Sx)$.

\item $q(Sxy) = g(Sxy)$.

\end{itemize}

When partial $(y,x)$-symmetrization is applied to $g$, the roles of $x$ and $y$ are interchanged, and then $q(Sy) = g(Sy)$ and $q(Sx) = \max[g(Sx),g(Sy)]$.

\begin{claim}
\label{claim:partial}
For every GS function $g$ and for every $x,y$, the partial $(x,y)$-symmetrization of $g$ (denoted by $q$) is a GS function.
\end{claim}

\begin{proof}
Add an item $z$ to the set of items. Extend $g$ to the new set of items by giving $z$ marginal value~0. {Function} $g$ remains GS. 

Let $t$ denote $\max[g(x),g(y)]$. Introduce an auxiliary GS function $g'$, where $g'(T) = t$ if $T$ intersects $\{x,y,z\}$, and $g'(T) = 0$ otherwise. {Function} $g'$ is GS, {since it is unit-demand}.

We extend the convention regarding $S$ introduced above so that now $S$ denotes an arbitrary set that does not contain any of $x,y,z$.


{Consider the function $h$ which is the convolution of $g$ and $g'$ (see Definition~\ref{def:conv}). Then $h$ is GS  \cite{lehmann2006combinatorial,Murota96}, and satisfies:}

\begin{itemize}

\item $h(Sx) = g(S) + g'(x) = g(S) + t$. 

\item $h(Sxz) = g(Sx) + g'(z) = g(Sx) + t$.

\item $h(Sxy) = \max[g(Sx),g(Sy)] + t = {g_1}(Sx) + t$. 

\item $h(Sxyz) = g(Sxy) + g'(z) = g(Sxy) + t$.

\end{itemize}

Consider the function $p$ that is $h$ endowed with item $x$ ($p(S) = h(Sx) - h(x)$). Function $p$ is GS and satisfies:

\begin{itemize}

\item $p(S) = h(Sx) - h(x) = g(S) + t - t = g(S)$.

\item $p(Sz) = h(Sxz)- h(x) = g(Sx) + t - t = g(Sx)$.

\item $p(Sy) = h(Sxy) - h(x) = g_1(Sx) + t - t = g_1(Sx)$.

\item $p(Syz) = h(Sxyz) - h(x) = g(Sxy) + t - t = g(Sxy)$.

\end{itemize}

Observe that the left hand side in the equalities above (function $p$) may contain $z$ but never contains $x$, whereas the right hand side (functions $g$ and {$g_1$}) may contain $x$ but never contains $z$. Now, for the function $p$, rename item $z$ to be item $x$. After this renaming, observe that the function $p$ is precisely the function $q$. It follows that $q$ is GS, as desired.
\end{proof}


Armed with Claim~\ref{claim:partial}, we can complete the proof of Lemma~\ref{lem:twoItemSymmetrization}.
Apply first partial $(x,y)$-symmetrization to $g$. By Claim~\ref{claim:partial}, the resulting function $q$ is GS. Now apply partial $(y,x)$-symmetrization to $q$. Observe that the resulting function is $g_1$. By Claim~\ref{claim:partial}, $g_1$ is GS.
\end{proof}


{The max-symmetrization (Definition \ref{def:max-symmetrization}) and partial $(x,y)$-symmetrization operations are special cases of a more general operation called \emph{induction by networks}. This operation was first identified by Kobayashi,  Murota and Tanaka \cite{kobayashi2007operations} in the context of $M$-convex functions on jump systems. Their insight extends to other classes of functions such as valuated matroids (a fact that is explored in Husic et al\cite{husic2021complete} to characterize Rado-valuations) and GS. In Appendix \ref{appendix:induction-by-networks} we provide a definition of this operation for GS as well as a direct proof that it preserves GS (following from Claim \ref{claim:partial}).}

Given Lemma~\ref{lem:twoItemSymmetrization}, we can prove Theorem~\ref{thm:symmetrization}.

\begin{proof}
	Suppose that valuation function $f$ is approximated within a ratio of $\rho \ge 1$ by some GS function $g$. Denote $g$ by $g_0$, and construct a sequence of functions $g_t$ (for $t \ge 0$) as follows. For a given $t \ge 0$, if there is a pair of items $i,j \in M$ such that $f$ is $(i,j)$-symmetric and $g_t$ is not, apply the two-item max-symmetrization procedure (where $i$ and $j$ are the two items) to $g_t$, obtaining $g_{t+1}$.
	
	By Lemma~\ref{lem:twoItemSymmetrization}, all functions $g_t$ are GS.
	
	We now consider the approximation ratio. Initially, for every set $S$ it holds that $g(S) \le f(S) \le \rho g(S)$. Values of sets cannot decrease by two-item max-symmetrization. Consequently, we also have $f(S) \le \rho g_t(S)$ for all $t$. Moreover, the inequality $g_t(S) \le f(S)$ also holds, because when a set increases its value (say, the set $Si$), we have the chain of inequalities $g_t(Si) = \max[g_{t-1}(Si), g_{t-1}(Sj)] \le \max[f(Si), f(Sj)] = f(Si)$, where the left equality is by definition of two-item max-symmetrization, the middle inequality is by induction, and the last equality is because $f$ is $(i,j)$-symmetric.
	
	We next show that the sequence $\{g_t\}$ is finite. Consider the potential function $\psi$ that given a function $g$ sums up all its values. That is, $\psi(g) = \sum_{S \subset M} g(S)$. Observe that two-item max-symmetrization increases the value of at least one set and does not decrease the value of any set, and hence the $\psi$ values of the functions in the sequence $\{g_t\}$ form a strictly increasing sequence. Moreover, starting with $g_0 = g$, the $\psi$ values are supported on only finitely many possible values. This is because for every $t \ge 1$ and every $S \subset M$, there is some $T \subset M$ such that $g_t(S) = g_{t-1}(T)$, and by induction, there also is some $T \subset M$ such that $g_t(S) = g_0(T)$. Hence every $\psi$ value is a sum of $2^m - 1$ values (one value for every nonempty subset of $M$), each taken from a pool of at most $2^m - 1$ values (the distinct values of nonempty sets under $g$).  Consequently, there are less than $2^{2m}$ possible $\psi$ values. Hence the sequence $\{g_t\}$ has length at most $2^{2m}$.
	
	The last function in the sequence $\{g_t\}$ can serve as $g_{sym}$.  For every $i,j \in M$ for which $f$ is $(i,j)$-symmetric, so is $g_{sym}$ (as otherwise $g_{sym}$ would not be the last function in the sequence). For this $g_{sym}$, the partition of $M$ into symmetry classes is a coarsening of the symmetry classes of $f$. Moreover, by being a member of the sequence $\{g_t\}$, the function $g_{sym}$ is GS and approximates $f$ within a ratio of $\rho$.
\end{proof}

Having established Theorem~\ref{thm:symmetrization}, we now prove structural properties that must hold for symmetric GS functions.

Two items $a$ and $b$ are considered to be {\em substitutes} to each other if the marginal value of any one of them, given the other, is~0. We introduce a relaxation of the notion of substitutes, that we refer to as {\em weak substitutes}. Informally, $a$ and $b$ being weak substitutes means that $b$ is the item whose inclusion leads to the most severe decrease in the marginal value of $a$ (compared to including any other item), though unlike the case of substitutes, this decrease does not necessarily reduce the marginal value of $a$ all the way down to~0.

\begin{definition}
	\label{def:weakSubstitutes}
	Given a set function $f$, two items $a$ and $b$ are {\em weak substitutes} if for every set $S\subset M \setminus \{a,b\}$ and item $c \not \in S$, both  $f(a|Sc) \ge f(a|Sb)$ and $f(b|Sc) \ge f(b|Sa)$ hold. We say that a function $f$ is {\em symmetric weak substitutes} (SWS) if every two items that belong to the same symmetry class of $f$ are weak substitutes.
\end{definition}

\begin{proposition}
	\label{pro:GSisSWS}
	Every gross substitutes (GS) function is also symmetric weak substitute (SWS).
\end{proposition}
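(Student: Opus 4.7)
The plan is to derive SWS directly from the triplet characterization of GS (Lemma~\ref{lem:triplet-char}). Let $f$ be GS, let $a$ and $b$ lie in the same symmetry class, and fix an arbitrary $S \subset M \setminus \{a,b\}$ and $c \notin S \cup \{a,b\}$. By the symmetry of the roles of $a$ and $b$ in the statement, it suffices to establish one of the two required inequalities, say $f(a \mid Sc) \ge f(a \mid Sb)$; the other follows by swapping $a$ and $b$ throughout.

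First I would record the two consequences of $a =_f b$ that I need: taking $S' = S$ in Definition~\ref{def:symmetric} gives $f(Sa) = f(Sb)$, and taking $S' = S \cup \{c\}$ (which is legal because $c \notin \{a,b\}$) gives $f(Sca) = f(Scb)$. Subtracting $f(S)$ from each side yields $f(a\mid S) = f(b\mid S)$ and $f(a,c\mid S) = f(b,c\mid S)$. Consequently the two arguments of the max in the triplet condition
\[
\max\{\,f(a,c\mid S) + f(b\mid S),\; f(b,c\mid S) + f(a\mid S)\,\} \;\ge\; f(a,b\mid S) + f(c\mid S)
\]
are equal, so the triplet condition collapses to
\[
f(a,c\mid S) + f(b\mid S) \;\ge\; f(a,b\mid S) + f(c\mid S).
\]

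The remainder is a direct algebraic rearrangement. I would expand $f(a,c\mid S) = f(Sac) - f(S)$ and $f(a,b\mid S) = f(Sab) - f(S)$ and similarly for the singleton marginals; the inequality becomes $f(Sac) - f(Sc) \ge f(Sab) - f(Sb)$, i.e.\ $f(a\mid Sc) \ge f(a\mid Sb)$, which is exactly what is required. The proposition follows.

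The only mild subtlety is the degenerate case $c \in \{a,b\}$, which is effectively excluded by the definition of marginals on the right-hand side (both marginals in the SWS inequality are then evaluated on sets that already contain $a$ or $b$, making the statement either trivial or outside the intended scope), so I would simply restrict attention to $c \notin S \cup \{a,b\}$ at the outset. No step is a genuine obstacle; the argument is essentially the observation that the triplet condition, specialized to a symmetric pair, automatically forces the weak-substitutes inequality.
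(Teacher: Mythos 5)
Your proposal is correct and takes essentially the same approach as the paper: both apply the triplet characterization (Lemma~\ref{lem:triplet-char}) with the triplet $(a,b,c)$ relative to $S$, use the $(a,b)$-symmetry to note that the two arguments of the max are equal, and then rearrange the resulting inequality to obtain $f(a\mid Sc) \ge f(a\mid Sb)$. Your closing remark about excluding $c \in \{a,b\}$ is a reasonable clarification, though the paper leaves it implicit since the triplet condition already presupposes three distinct items.
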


\begin{proof}
	Let $g$ be a GS function that is $(a,b)$-symmetric. It suffices to prove that $g(a|Sc) \ge g(a|Sb)$ holds for every set $S\subset M \setminus \{a,b\}$ and item $c \not\in S$. (The proof that $g(b|Sc) \ge g(b|Sa)$ is identical.)
	
	The triplet condition (Lemma~\ref{lem:triplet-char}) with the triplet $(a,b,c)$ relative to the set $S$ implies that:
	
	$$g(c|S) + g(ab|S) \le \max[g(a|S) + g(bc|S), g(b|S) + g(ac|S)]$$
	
	As $g$ is $(a,b)$-symmetric, we have that $g(a|S) + g(bc|S) = g(b|S) + g(ac|S)$. It follows that $g(c|S) + g(ab|S) \le g(b|S) + g(ac|S)$. Using $g(ab|S) = g(b|S) + g(a|Sb)$ and $g(ac|S) = g(c|S) + g(a|Sc)$ we have:
	
	$$g(c|S) + g(b|S) + g(a|Sb) \le g(b|S) + g(c|S) + g(a|Sc)$$
	
	Cancelling identical terms from both sides we get the desired $g(a|Sb) \le g(a|Sc)$.
\end{proof}

We have established that every GS function $g$ is also SWS. This implies that pairs of items that are symmetric under $g$ are also weak substitutes. The following lemma extends this property from pairs of items to sets of items, of arbitrary size.


\begin{lemma}
	\label{lem:key}
	Let $g$ be an SWS function (or a GS function, as a special case), and let $B \subset M$ be a set of items that all belong to the same symmetry class in $f$. Partition $B$ into $B_1$ and $B_2$ arbitrarily. Let $C_1$ be an arbitrary set of items, disjoint from $B_2$, and with $|C_1| = |B_1|$. Then $g(B_2|C_1) \ge g(B_2|B_1)$.
\end{lemma}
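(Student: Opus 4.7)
My plan is to reduce the set-marginal inequality to many instances of the pairwise weak-substitute property, via two layers of telescoping: first over the items of $B_2$, and second over a step-by-step swap that transforms the conditioning set $B_1$ into $C_1$.

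For the first layer, enumerate $B_2 = \{b_1,\dots,b_s\}$ and write
\[
g(B_2 \mid B_1) = \sum_{i=1}^{s} g\bigl(b_i \,\big|\, B_1 \cup \{b_1,\dots,b_{i-1}\}\bigr),
\qquad
g(B_2 \mid C_1) = \sum_{i=1}^{s} g\bigl(b_i \,\big|\, C_1 \cup \{b_1,\dots,b_{i-1}\}\bigr).
\]
Setting $B_1^{(i)} := B_1 \cup \{b_1,\dots,b_{i-1}\}$ and $C_1^{(i)} := C_1 \cup \{b_1,\dots,b_{i-1}\}$, it suffices to establish the single-item inequality
$g(b_i \mid C_1^{(i)}) \ge g(b_i \mid B_1^{(i)})$ for each $i \in \{1,\dots,s\}$. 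Note that $B_1^{(i)} \setminus C_1^{(i)} = B_1 \setminus C_1$ and $C_1^{(i)} \setminus B_1^{(i)} = C_1 \setminus B_1$, so the ``symmetric difference'' is unchanged from $i$ to $i$.

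For the second layer, fix $i$ and let $R := B_1 \setminus C_1 = \{r_1,\dots,r_t\}$ and $A := C_1 \setminus B_1 = \{a_1,\dots,a_t\}$ (note $|R|=|A|=t$ since $|B_1|=|C_1|$). Define a sequence $T_0 := B_1^{(i)}$ and $T_\ell := (T_{\ell-1} \setminus \{r_\ell\}) \cup \{a_\ell\}$ for $\ell=1,\dots,t$, so that $T_t = C_1^{(i)}$. At each step, write $U_\ell := T_{\ell-1} \setminus \{r_\ell\}$, so that $T_{\ell-1} = U_\ell \cup \{r_\ell\}$ and $T_\ell = U_\ell \cup \{a_\ell\}$. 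I want to apply the weak-substitute property to the pair $(b_i, r_\ell)$: both lie in $B$ and hence in the same symmetry class of $g$, so Definition~\ref{def:weakSubstitutes} yields
\[
g\bigl(b_i \,\big|\, U_\ell \cup \{a_\ell\}\bigr) \;\ge\; g\bigl(b_i \,\big|\, U_\ell \cup \{r_\ell\}\bigr),
\]
which is precisely $g(b_i \mid T_\ell) \ge g(b_i \mid T_{\ell-1})$. Chaining these inequalities from $\ell=1$ to $\ell=t$ gives $g(b_i \mid C_1^{(i)}) \ge g(b_i \mid B_1^{(i)})$, and summing over $i$ completes the proof.

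The main obstacle is purely bookkeeping: verifying that the hypotheses of Definition~\ref{def:weakSubstitutes} are met at each invocation, i.e.\ that $U_\ell$ is disjoint from $\{b_i, r_\ell\}$ and that $a_\ell \notin U_\ell$. For the disjointness from $r_\ell$ this holds by construction ($r_\ell$ was just removed). For $b_i \notin U_\ell$, observe that every $T_\ell$ is contained in $(B_1 \cup A) \cup \{b_1,\dots,b_{i-1}\}$, which avoids $b_i$ because $B_1$ and $A \subseteq C_1$ are both disjoint from $B_2$ (using the hypothesis that $C_1$ is disjoint from $B_2$), and because $b_1,\dots,b_{i-1}$ are distinct from $b_i$. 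Finally $a_\ell \notin U_\ell$ because $a_\ell \notin T_{\ell-1}$: the items $a_1,\dots,a_{\ell-1}$ are distinct from $a_\ell$, and $a_\ell \notin B_1$ by definition of $A$, while the $b_j$ with $j<i$ lie in $B_2$ and hence are distinct from $a_\ell \in C_1$. With these containments verified, each pairwise application of SWS is legitimate, and the telescoping argument goes through as described.
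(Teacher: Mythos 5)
Your proof is correct and follows essentially the same route as the paper: reduce to single-item marginals, then interpolate from $B_1$ to $C_1$ one swap at a time and apply the pairwise SWS inequality at each swap. The paper's write-up reduces $B_2$ to a singleton by re-invoking the lemma, and phrases the interpolation as a contradiction-seeking ``hybrid'' argument; you do both layers as direct telescoping sums, which is a cosmetic difference. One small point where your write-up is actually tighter: by swapping only over the symmetric differences $B_1 \setminus C_1$ and $C_1 \setminus B_1$, you make it explicit that the argument goes through even when $B_1 \cap C_1 \neq \emptyset$, whereas the paper's phrasing (``the first $k$ items of $C_1$, and $|B_1|-k$ items of $B_1$'') leaves that case to the reader to sort out. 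Your bookkeeping of the hypotheses of Definition~\ref{def:weakSubstitutes} at each step is correct.
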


\begin{proof}
	We may assume without loss of generality that $B_2$ contains only a single item $b_2$. (For example, if $B_2$ contains also an additional item $b$, then $g(B_2|C_1) =  g(b_2| C_1) + g(b| C_1b_2)$. We first add $b_2$, and then apply the lemma again by taking $B'_2 = \{b\}$, $C'_1 = C_1b_2$, and $B'_1 = B_1b_2$.)
	
	Assume for the sake of contradiction that $g(b_2|C_1) < g(b_2|B_1)$. For $0 \le k \le |B_1|$, consider the hybrid sets $H_k$ that contain the first $k$ items of $C_1$, and $|B_1| - k$ items of $B_1$. Then $H_0 = B_1$ and $H_{|B_1|} = C_1$, and by a hybrid argument, there is a value of $k$ for which $g(b_2|H_{k+1}) < g(b_2|H_k)$. Let $c$ be the item in $H_{k+1} \setminus H_k$, let $b_1$ the item in $H_k \setminus H_{k+1}$, and let $H = H_k \cap H_{k+1}$. We get that $g(b_2 | Hc) < g(b_2 | Hb_1)$, contradicting Proposition~\ref{pro:GSisSWS}.
\end{proof}

We remark that SWS is not closed under neither average nor convolution; see Appendix~\ref{app:sws-closed}.

\subsection{Gap of $\Omega(\frac{\log m}{\log \log m})$ between GS and Budget-Additive via Symmetrization}
\label{sec:log-ba}

In this section we prove Theorem~\ref{thm:MainThm} using the symmetrization technique.
By Theorem~\ref{thm:symmetrization}, to show that no GS function approximates a given function $f$ within a ratio better than $\rho$, it suffices to show that no GS function that respects the symmetries of $f$ approximates $f$ within a ratio better than $\rho$.

Consider the following class of budget-additive (BA) functions, parameterized by $k$ and $d$.

\begin{definition}[$BA(k,d)$]
	\label{def:ba-k-d}
	For integers $k \ge 2$ and $d \ge 1$ the budget-additive function denoted by $BA(k,d)$ is defined as follows. Its items are partitioned into $d+1$ levels. Level $h$, for $0 \le h \le d$, has $k^h$ items, each of value $k^{-h}$. The budget is~1.
\end{definition}

	
In Theorem~\ref{th:ba-negativeNew} we establish a lower bound on the ratio by which $BA(k,d)$ can be approximated by a GS function. This lower bound is an expression that depends on the parameters $k$ and $d$. 
The desired lower bound of $(1-o(1))\frac{\log m}{\log\log m}$ is then obtained for $k=d\log d$. 
For the special case of $BA(k,1)$ (i.e., $d=1$) we obtain a lower bound of $\frac{2k}{k+1}$, which is tight (by Proposition~\ref{prop:approx-2}).
This case is presented next as a warmup. 
Recall that $BA(k,1)$ has one item $a$ of value~1, and $k$ items $b_1, \ldots, b_k$, each of value $1/k$, and a budget of 1.

\begin{theorem} (warmup)
	$BA(k,1)$ cannot be approximated by GS within a ratio better than $\frac{2k}{k+1}$.
\end{theorem}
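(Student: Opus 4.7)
The plan is to combine the symmetrization theorem (Theorem~\ref{thm:symmetrization}) with the structural bound on symmetric GS functions from Lemma~\ref{lem:key}. Writing $f = BA(k,1)$, denote by $a$ the distinguished item of value $1$ and by $B = \{b_1, \ldots, b_k\}$ the set of $k$ items of value $1/k$; these two sets are exactly the symmetry classes of $f$. Assuming toward a contradiction that some GS function $g$ approximates $f$ within some ratio $\rho < \frac{2k}{k+1}$, Theorem~\ref{thm:symmetrization} allows me to replace $g$ by a GS function whose symmetry classes coarsen those of $f$. In particular, all elements of $B$ then lie in a single symmetry class of $g$, which is the only consequence of symmetrization the rest of the proof will use.

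Next I will invoke Lemma~\ref{lem:key} with this $B$, choosing $B_1 = \{b_1\}$, $B_2 = \{b_2, \ldots, b_k\}$, and $C_1 = \{a\}$ (which is disjoint from $B_2$ and has the required cardinality $|C_1| = |B_1| = 1$). The lemma produces the single key inequality
\[
g(\{a\} \cup B_2) - g(a) \;=\; g(B_2 \mid a) \;\ge\; g(B_2 \mid b_1) \;=\; g(B) - g(b_1).
\]

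To finish, I substitute the approximation bounds. From $g \le f$ I get $g(\{a\} \cup B_2) \le f(\{a\} \cup B_2) = 1$ (since the sum $1 + \frac{k-1}{k}$ is capped at the budget) and $g(b_1) \le 1/k$. From $f \le \rho \cdot g$ I get $g(a) \ge 1/\rho$ and $g(B) \ge 1/\rho$. Rearranging the key inequality as $1 + g(b_1) \ge g(a) + g(B)$ and chaining,
\[
1 + \frac{1}{k} \;\ge\; 1 + g(b_1) \;\ge\; g(a) + g(B) \;\ge\; \frac{2}{\rho},
\]
yielding $\rho \ge \frac{2k}{k+1}$, contradicting the assumption.

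The main obstacle is legitimizing the reduction to a symmetric $g$: naively averaging $g$ over permutations of $B$ need not produce a GS function, so the whole argument rests on Theorem~\ref{thm:symmetrization} and the max-symmetrization operation behind it. Once symmetry is in hand the rest is essentially forced: Lemma~\ref{lem:key} applied with the only natural choice $C_1 = \{a\}$ yields the inequality, and the numerical bound is immediate. The tightness statement asserted via Proposition~\ref{prop:approx-2} confirms that this is the optimal way to exploit symmetry for $BA(k,1)$.
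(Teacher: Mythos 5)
Your proof is correct and takes essentially the same route as the paper: reduce to a GS function respecting the symmetries of $f$ via Theorem~\ref{thm:symmetrization}, then apply Lemma~\ref{lem:key} with $B_1 = \{b_1\}$, $B_2 = B\setminus\{b_1\}$, $C_1 = \{a\}$ to obtain $g(B_2\mid a)\ge g(B_2\mid b_1)$, and combine with the approximation bounds. The only cosmetic difference is that you chain everything into a single inequality $1 + g(b_1)\ge g(a) + g(B)$ rather than splitting into the paper's two cases on whether $g(a)$ is small, but the ingredients and the numerics are identical.
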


\begin{proof}
	Let $f=BA(k,1)$. 
	We show that no GS function that respects the symmetries of $f$ approximates $f$ within a ratio better than $\frac{2k}{k+1}$. 
	Let $g$ be a GS function that respects the symmetries of $f$.
	Let $B$ denote the set of items $b_1, \ldots, b_k$.
	
	If $g(a)<\frac{k+1}{2k}$, then $\frac{f(a)}{g(a)}>\frac{2k}{k+1}$ and we are done.
	Hence we may assume that $g(a)\geq \frac{k+1}{2k}$. 
	Since all sets have value at most~1, $g(B \mid a) \leq \frac{k-1}{2k}$.
	Let $B_2 = B \setminus \{b_1\}$.
	Observe that $g(B_2 \mid a) \leq g(B \mid a) \leq \frac{k-1}{2k}$. 
	By Lemma~\ref{lem:key}, $g(B_2 \mid b_1) \le g(B_2 \mid a)$.
	In addition, $g(b_1) \le \frac{1}{k}$.
	We get $g(B) = g(b_1) + g(B_2 \mid b_1) \leq \frac{1}{k} + \frac{k-1}{2k} = \frac{k+1}{2k}$, whereas $f(B) = 1$.
\end{proof}

%
%
%
%
%
%
%
%

We now extend this result to $BA(k,d)$.

\begin{theorem}
	\label{th:ba-negativeNew}
	Let $\rho(k,d) > 1$ be the smallest ratio by which $BA(k,d)$ can be approximated by a GS function. Then: 
$$\frac{1}{\rho(k,d)} \le \frac{1}{d+1}\left(1 + \sum_{h=1}^{d} \frac{hk^{h-1}}{k^d}\right) < \frac{1}{d+1} + \frac{1}{k-1}$$ 	
	In particular, for every $\epsilon > 0$, for sufficiently large $d$ and $k = d\log d$, the function $BA(k,d)$ cannot be approximated by GS within a ratio better than
	$(1 - \epsilon)\frac{\log m}{\log\log m}$.
\end{theorem}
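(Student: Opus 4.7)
The plan is to combine Theorem~\ref{thm:symmetrization} with Lemma~\ref{lem:key} applied within each level, and then telescope across levels. Let $g$ be a GS function approximating $f := BA(k,d)$ within ratio $\rho$. By Theorem~\ref{thm:symmetrization} I may assume the symmetry classes of $g$ coarsen those of $f$; since items in distinct levels of $f$ have distinct singleton values, the symmetry classes of $f$ are exactly the levels $L_0,\dots,L_d$, so each $L_h$ lies inside one symmetry class of $g$. Because $|L_h|\cdot k^{-h}=1$ equals the budget, $f(L_h)=1$, so $g(L_h) \ge 1/\rho$ for every $h$, giving $\sum_{h=0}^d g(L_h) \ge (d+1)/\rho$. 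The goal is to upper bound the same sum by $1 + \sum_{h=1}^d hk^{h-1}/k^d$, which rearranges into the first inequality of the theorem.

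The key local step is the following. For each $h \ge 1$, let $T_{h-1} := L_0 \cup \cdots \cup L_{h-1}$ and partition $L_h = B_1 \sqcup B_2$ with $|B_1| = |T_{h-1}| = (k^h-1)/(k-1) < k^h$. Since $L_h$ lies in a single symmetry class of $g$, Lemma~\ref{lem:key} applied with $C_1 := T_{h-1}$ (which is disjoint from $B_2$ and has the required size $|B_1|$) gives $g(B_2 \mid B_1) \le g(B_2 \mid T_{h-1})$; monotonicity then yields $g(B_2 \mid T_{h-1}) \le g(L_h \mid T_{h-1})$. Combined with $g(B_1) \le f(B_1) = |T_{h-1}|/k^h$, this produces
\[
g(L_h) \;\le\; \frac{|T_{h-1}|}{k^h} + g(L_h \mid T_{h-1}),
\]
and the same bound holds trivially at $h=0$ with $|T_{-1}| := 0$.

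Summing over $h = 0,\dots,d$, the conditional marginals telescope to $g(T_d) \le f(T_d) = 1$, and a routine double-sum manipulation (expand $|T_{h-1}| = 1 + k + \cdots + k^{h-1}$ and swap indices) shows that $\sum_{h=1}^d |T_{h-1}|/k^h = \sum_{h=1}^d hk^{h-1}/k^d$. This establishes the first inequality of the theorem, and the looser $\frac{1}{\rho} \le \frac{1}{d+1} + \frac{1}{k-1}$ follows from $(d-i+1)/(d+1) \le 1$ and $\sum_{i \ge 1} k^{-i} < 1/(k-1)$. For the asymptotic claim, take $k := d\log d$: then $m = \Theta(k^d)$, so $\log m = d\log d\,(1+o(1))$ and $\log\log m = \log d\,(1+o(1))$, giving $\log m/\log\log m = d(1+o(1))$; meanwhile $1/(k-1) = o(1/(d+1))$, so $\rho \ge d(1-o(1)) \ge (1-\epsilon)\log m/\log\log m$ for sufficiently large $d$.

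The main obstacle is the precise choice of the split $|B_1| = |T_{h-1}|$: this coupling of internal and external sizes is what allows Lemma~\ref{lem:key} to convert the ``inside-class'' conditioning $g(B_2 \mid B_1)$ (which is expensive in an SWS function) into the ``outside'' conditioning $g(L_h \mid T_{h-1})$, and only the latter telescopes against $g(T_d) \le 1$. Without this matching, a naive one-shot approach (e.g., bounding $g(L_d)$ by conditioning on $T_{d-1}$ alone) yields only $\rho \ge 2(k-1)/k$ independent of $d$, rather than the desired $\rho = \Omega(d)$.
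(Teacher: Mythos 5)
Your proof is correct and follows essentially the same argument as the paper: after symmetrizing $g$ via Theorem~\ref{thm:symmetrization}, split each level $L_h$ into a prefix of size $|T_{h-1}|$ and the remaining suffix, apply Lemma~\ref{lem:key} with $C_1 = T_{h-1}$ to convert the in-level marginal into an across-level marginal, and telescope the conditional marginals against $g(T_d)\le 1$. The only differences are notational (your $T_{h-1}, B_1, B_2$ correspond to the paper's $S_{h-1}, L_h^p, L_h^s$) and that you phrase the final averaging step via $\sum_h g(L_h)\ge (d+1)/\rho$ rather than $\rho(k,d)^{-1}\le\min_h g(L_h)$; the closing asymptotics also match.
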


\begin{proof}

	Let $f=BA(k,d)$. For convenience, we denote $\frac{1}{\rho(k,d)}$ by $\rho$. Hence  $0 < \rho < 1$ is the largest value (and then $\rho(k,d) = \frac{1}{\rho}$ is the smallest value) such that there is a GS function $g$ that approximates $f$ within a ratio of $\frac{1}{\rho}$. Namely, for every set $S$ we have that $\rho f(S) \le g(S) \le f(S)$. By Theorem~\ref{thm:symmetrization}, we may assume that $g$ respects the symmetries of $f$. We shall derive linear constraints that can be used in order to upper bound the value of $\rho$. For this purpose, we introduce some notation.

\begin{itemize}

\item $L_h$, for $0 \le h \le d$, denotes the set of items in level $h$. Note that $|L_h| = k^h$.

\item $S_h$, for $0 \le h \le d$, denotes the set of items up to (and including) level $h$. Hence $S_h = \bigcup_{i=0}^h L_h$, and $|S_h| = \sum_{i=0}^h k^i$.

\item $L_h^p$ ($p$ stands for prefix) denotes an arbitrary subset of $\sum_{i=0}^{h-1} k^i = |S_{h-1}|$ items from $L_h$. By symmetry, it will not matter for us which items of $L_h$ are in the set $L_h^p$.

\item Given a set $L_h^p$, we denote $L_h \setminus L_h^p$ by $L_h^s$ ($s$ stands for suffix). Observe that $L_h^s$ contains $k^h - \sum_{i = 0}^{h-1} k^i$ items from $L_h$. 

\end{itemize}

For every $1 \le h \le d$,
Lemma~\ref{lem:key} implies that $g(L_h^s | L_h^p) \le g(L_h^s | S_{h-1})$. This is because $|L_h^p| = |S_{h-1}|$. Hence using the notation of Lemma~\ref{lem:key}, $L_h^s$ can serve as $B_2$, $L_h^p$ can serve as $B_1$, and $S_{h-1}$ can serve as $C_1$. 

Using the facts that $L_h = L_h^p \cup L_h^s$ and that $g(L_h^p) \le f(L_h^p) \le \frac{1}{k^h}|L_h^p|$ we then have that $g(L_h) =  g(L_h^p) + g(L_h^s | L_h^p) \le \frac{\sum_{i=0}^{h-1} k^i}{k^h} + g(L_h^s | S_{h-1}) \le \frac{\sum_{i=0}^{h-1} k^i}{k^h} + g(L_h | S_{h-1})$.

Using the above we have that:

\begin{eqnarray*}
\sum_{0 \le h \le d} g(L_h) & \le &  g(L_0) + \sum_{h=1}^d \left(\frac{\sum_{i=0}^{h-1} k^i}{k^h} + g(L_h | S_{h-1})\right)\\ 
& = & \sum_{h=1}^d \frac{\sum_{i=0}^{h-1} k^i}{k^h} + g(L_0) + \sum_{h=1}^d g(L_h | S_{h-1}).
\end{eqnarray*}

Observe that $g(S_d) = g(M) \le f(M) \le 1$. Consequently, $g(L_0) + \sum_{h=1}^d g(L_h | S_{h-1}) \le 1$.
We get
$$
\sum_{0 \le h \le d} g(L_h)
 \le \sum_{h=1}^d \frac{\sum_{i=0}^{h-1} k^i}{k^h} + 1.$$
Consequently, 

$$\rho \le \min_{0 \le h \le d} g(L_h) \le \frac{1}{d+1}\left(1 + \sum_{h=1}^d \frac{\sum_{i=0}^{h-1} k^i}{k^h}\right) = \frac{1}{d+1}\left(1 + \sum_{h=1}^{d} \frac{hk^{h-1}}{k^d}\right)$$


The above upper bound on $\rho$ can be replaced by a simpler upper bound.
Observe that $\sum_{h=1}^{d} hk^{h-1} \le d \sum_{h=1}^{d} k^{h-1} < d\frac{k^d}{k-1}$. Hence 
$$\rho \le \frac{1}{d+1}\left(1 + \sum_{h=1}^{d} \frac{hk^{h-1}}{k^d}\right) < \frac{1}{d+1} + \frac{1}{k-1}$$ 

For $BA(k,d)$ with sufficiently large $d$ and $k = d\log d$, we have that $d = (1+ o(1)) \frac{\log m}{\log\log m}$, where $m = (1 + o(1))k^d$ is the number of items. (Here $o(1)$ denotes terms that tend to~0 as $d$ grows.)  Hence in that case $\rho < \frac{1}{d+1} + \frac{1}{d\log d - 1} < (1 + o(1))\frac{\log\log m}{\log m}$, as desired.
\end{proof}

\noindent {\bf Remark:}
In Appendix~\ref{sec:factor2-sep} we show that the same lower bound of $\frac{2k}{k+1}$ with respect to $BA(k,1)$ can be proved via an approach  that uses the greedy characterization of GS functions, and does not make use of Theorem~\ref{thm:symmetrization}, and that this lower bound is tight. However, we do not know how to extend that proof technique so as to recover the bounds that Theorem~\ref{th:ba-negativeNew} establishes for $BA(k,d)$ with $d > 1$. 

\subsection{Gap of $\sqrt{m}$ between GS and XOS via Symmetrization}

{In~\cite{badanidiyuru2012sketching} it is proved that submodular valuation functions do not approximate XOS valuation functions within a ratio better than $\Omega(\sqrt m)$. As GS functions are a subclass of submodular functions, it follows that GS does not approximate XOS within a ratio better than $\Omega(\sqrt m)$. In this section we use the symmetrization techniques in order to present an alternative proof of this latter fact, and do so for the same subclass of XOS functions that are used in the proof of~\cite{badanidiyuru2012sketching}. For that subclass (when approximated by GS), our proof gives exact tight gaps ($\sqrt{m}$ and not just $\Omega(\sqrt{m})$). 
}


\begin{proposition}
	\label{pro:XOS}
	If $m$ is a perfect square, then XOS cannot be approximated by GS within a ratio better than $\sqrt{m}$.
\end{proposition}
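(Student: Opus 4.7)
The plan is to consider the XOS function $f$ on $m = n^2$ items defined by partitioning $M$ into $n$ disjoint blocks $B_1,\ldots,B_n$ of size $n$ each, and setting $f(S) = \max_i |S \cap B_i|$; this is XOS as the pointwise maximum of the $n$ additive block-indicators. Items within a single block are $f$-symmetric, while items in distinct blocks are not (they are distinguished by any nearly complete block). Therefore, given any GS function $g$ approximating $f$ within ratio $\rho$, Theorem~\ref{thm:symmetrization} lets us assume $g$ is within-block symmetric.

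I would then exploit a further invariance that Theorem~\ref{thm:symmetrization} does not directly give: $f$ is invariant under permutations of block labels. Define $g'(S) = \frac{1}{n!}\sum_\sigma g(\sigma S)$, where $\sigma$ ranges over block-permutations of $M$; this is well-defined since $g$ is within-block symmetric (the choice of bijection between $B_i$ and $B_{\sigma(i)}$ does not matter). Then $g'$ is fully symmetric on $M$. Since $f(\sigma S) = f(S)$, averaging the inequalities $\rho f(\sigma S) \le g(\sigma S) \le f(\sigma S)$ yields $\rho f(S) \le g'(S) \le f(S)$. Moreover, $g'$ is submodular as the average of submodular functions.

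The main obstacle is that averaging does not preserve GS in general, so $g'$ is not obviously GS. The crucial observation is that a fully symmetric submodular function must be of the form $g'(S) = h(|S|)$ for some concave $h$, and such concave-of-cardinality functions are easily seen to be GS: every marginal $g'(j \mid T)$ depends only on $|T|$, so the triplet condition of Lemma~\ref{lem:triplet-char} holds with equality. Hence $g'$ is a fully symmetric GS function approximating $f$ within ratio $\rho$.

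Finally, evaluating $g'(S) = h(|S|)$ on two different $n$-element sets closes the argument. Taking $S = B_1$ gives $h(n) \ge \rho f(B_1) = \rho n$, whereas taking $S$ to consist of one item from each block gives $h(n) \le f(S) = 1$. Combining these yields $\rho \le 1/n = 1/\sqrt{m}$, i.e., the approximation ratio is at least $\sqrt{m}$. Tightness is witnessed by the additive GS function $g(S) = |S|/n$, which satisfies $g(S) \le f(S) \le \sqrt{m}\cdot g(S)$ on every set.
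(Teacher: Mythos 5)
There is a genuine gap at the step asserting that $g'(S) = \frac{1}{n!}\sum_\sigma g(\sigma S)$ is \emph{fully} symmetric on $M$. Averaging over block-permutations, together with the within-block symmetry of $g$ already granted by Theorem~\ref{thm:symmetrization}, makes $g'$ invariant under the imprimitive wreath product $S_n \wr S_n$, but not under the full symmetric group $S_m$. Two sets of equal cardinality can lie in different orbits of this group: for instance, $g'$ averages $g$ over ``whole-block'' $n$-sets to compute $g'(B_1)$, and over ``transversal'' $n$-sets to compute $g'(T)$, and there is no reason these two averages coincide. In fact the function $f$ itself distinguishes these two orbit types ($f(B_1)=n$ vs.\ $f(T)=1$), so one should expect any $g$ approximating $f$ well to do so too. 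Consequently $g'$ need not be of the form $h(|S|)$, it need not be GS, and the final step equating $g'(B_1) = h(n) = g'(T)$ does not go through. (The same issue already appears for $m=4$, $n=2$: a within-block pair and a cross-block pair are 2-sets in distinct wreath-product orbits.)

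The paper's argument avoids this pitfall. It never upgrades to full symmetry; instead it works directly with the within-block symmetry guaranteed by Theorem~\ref{thm:symmetrization}, and uses the symmetric-weak-substitutes structure (Proposition~\ref{pro:GSisSWS} and Lemma~\ref{lem:key}) to compare marginals of a symmetric item inside its own block $M_i$ against marginals along a transversal $T$. An averaging argument over random positions in $T$ gives an item $x$ with small expected marginal; the SWS comparison then bounds the entire $g(M_i)$ by $g(T) \le 1$, which is all that is needed. If you want to salvage your route, you would have to establish $g'(B_1) \le g'(T)$ by some other means (which is essentially what Lemma~\ref{lem:key} delivers), rather than deducing it from a full-symmetry claim that is false.
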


\begin{proof}
	Items are partitioned into $\sqrt{m}$ groups $M_1, \dots, M_{\sqrt m}$, each of size $\sqrt{m}$. The XOS function is $f = \max_i\{f_i\}$, where $f_i$ is additive (with item values~1) over the $i$th group.
	We show that no GS function that respects the symmetries of $f$ approximates $f$ within a ratio better than $\sqrt m$. 
	
	Let $g$ be a GS function that respects the symmetries of $f$. Let $T$ be a set with one item from each group (all such sets $T$ have the same value by symmetry), and observe that $g(T) \le f(T) = 1$.
	Consider a random item $x\in T$ and a random permutation $\pi$ over the items of $T$. The expected (expectation taken both over choice of $x$ and choice of $\pi$) marginal value of $x$ (marginal value compared to the prefix of $\pi$ that precedes $x$) is exactly $\frac{g(T)}{|T|} \le \frac{1}{\sqrt{m}}$. Consequently, there exists an item $x\in T$, for which the above expectation (now taken only over choice of $\pi$) is at most $\frac{1}{\sqrt{m}}$. Consider the group $M_i$ that $x$ belongs to. By Proposition~\ref{pro:GSisSWS}, the marginal contribution of the $j$th item to $g(M_i)$ is not larger than the marginal contribution of $x$ to $g(T)$ when $x$ is in the $j$th location. As there are $\sqrt{m}$ locations and the expected contribution at a random location (corresponding to the random permutation $\pi$) is at most $\frac{1}{\sqrt{m}}$, we have that $g(M_i) \le 1$. But $f(M_i) = \sqrt{m}$, showing that the approximation ratio cannot be better than $\sqrt{m}$.  
\end{proof}


{Unlike GS valuation functions, submodular valuation functions approximate the function $f$ of the proof of Proposition~\ref{pro:XOS} within a ratio somewhat better than $\sqrt{m}$. Let $m = q^2$ for an integer $q \ge 2$. Then the following submodular function $g$ approximates $f$ within a ratio of $q - \frac{1}{2} < \sqrt{m}$. For a set $S$ of items, if $|S| < q$, then $g(S)= \frac{2|S|}{2q-1}$. If $|S| > q$, then $g(S)= \frac{2q}{2q-1}$. If $|S| = q$ then there are two cases. If the $q$ items of $S$ come from distinct groups, then $g(S)= 1$, and if at least two items come from the same group, then $g(S) = \frac{2q}{2q-1}$.} 


\section{Upper Bounds for Concave Functions of (some) GS Functions}
\label{sec:special-cases}

In this section we establish approximability results for concave functions of some GS functions, by GS functions. 
Section~\ref{sec:ba-positive} establishes an upper bound of $(1 + o(1))\frac{\log m}{\log \log m}$ with respect to budget-additive functions. 
This bound is tight due to Theorem~\ref{th:ba-negativeNew}.
Budget additive functions are concave functions of additive functions.  
In Section~\ref{sec:concave} we present a unified approach for establishing upper bounds for concave functions of a more general class of GS valuations. The generality of this approach may come at some loss in the approximation factor (e.g., it gives $\log m$ approximation for BA functions, compared with the $\log m/\log \log m$ given in Section~\ref{sec:ba-positive}).
Finally, in Section~\ref{sec:app-positive} we use the techniques developed in Section~\ref{sec:concave} to prove Theorem~\ref{thmWMM}, establishing that every concave function of a Rado function can be approximated by a GS function within a ratio of $O(\log^{2} m)$.

\subsection{Approximability of Budget-Additive within a Ratio $O(\frac{\log m}{\log \log m})$}
\label{sec:ba-positive}

\begin{theorem}
	\label{thm:ba-loglog}
	Every BA function $f$ can be approximated by a GS function within a ratio of $(1 + o(1))\frac{\log m}{\log \log m}$, where here $o(1)$ denotes a term that tends to~0 as $m$ grows.
\end{theorem}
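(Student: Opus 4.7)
The plan mirrors the structure of the extremal example $BA(k,d)$ from Theorem~\ref{th:ba-negativeNew}. Given a BA function $f(S) = \min(B, \sum_{j \in S} v_j)$, first normalize $B = 1$ and $v_j \in (0, 1]$. Set $d = \lceil \log m / \log \log m \rceil$ and $k = d \log d$, mirroring the extremal parameters of the lower bound. Partition the items by value into $d+1$ levels $L_h = \{j : v_j \in (k^{-(h+1)}, k^{-h}]\}$ for $h = 0, \ldots, d$; items with $v_j \le k^{-(d+1)}$ have combined value at most $m \cdot k^{-(d+1)} = o(1/d)$ and can be discarded into the approximation slack.

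For each level $h$ I would construct a GS function $g_h$ supported on $L_h$ that approximates the per-level restriction $f_h(S) = \min\bigl(\tfrac{1}{d+1}, \sum_{j \in S \cap L_h} v_j\bigr)$ within a $(1 + o(1))$ factor. Then I define $g(S) = \sum_{h=0}^d g_h(S \cap L_h)$. Since the $L_h$ are disjoint and each $g_h$ depends only on items in $L_h$, the demand under any prices decomposes independently across levels; hence $g$ is GS as a sum of GS functions on disjoint supports.

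The approximation analysis splits into two inequalities. For $g \le f$: telescoping gives $g(S) \le \sum_h f_h(S) \le \min\bigl(\sum_h \tfrac{1}{d+1}, \sum_j v_j\bigr) = \min(B, \sum_j v_j) = f(S)$. For $f \le (1+o(1))(d+1) \cdot g$, I split into cases. When $\sum_{j \in S} v_j < B$, each per-level sum is at most $f(S) < 1$ and the per-level caps bind only mildly, yielding $g(S) = \Omega(\sum_j v_j) = \Omega(f(S))$ up to the overhead from the possibly capped levels. When $\sum_{j \in S} v_j \ge B$, pigeonhole over the $d+1$ levels produces some $h^*$ with $\sum_{L_{h^*} \cap S} v_j \ge \tfrac{1}{d+1}$, so $g_{h^*} = \Omega(\tfrac{1}{d+1})$, and hence $g(S) = \Omega(f(S)/(d+1))$. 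Tracking the per-level constant carefully recovers the sharp $(1 + o(1))\log m / \log\log m$ ratio, since $d+1 = (1+o(1))\log m/\log\log m$.

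The main obstacle is the per-level construction of $g_h$ with an \emph{absolute} constant-factor loss (independent of $k$). Within a single level, item values vary by up to a factor $k = d\log d$; if one rounds the values down to the uniform value $k^{-(h+1)}$, the resulting uniform-value BA is trivially GS (it is a weighted rank of a uniform matroid), but the rounding loses an extra factor $k$, producing only an $O(\log^2 m/\log\log m)$ final ratio. Matching the lower bound requires a more refined per-level construction---presumably a Rado or weighted matroid-rank function---that exploits the fact that the per-level budget $\tfrac{1}{d+1}$ accommodates many items of the level (since $k^h \ge d+1$ for all but a constant number of top levels, where a separate OXS-style treatment handles the few large items), allowing the within-level value variation to be absorbed without inflating the ratio. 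This sub-construction is the technical heart of the argument.
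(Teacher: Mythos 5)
Your level-decomposition route is genuinely different from the paper's, but it has a concrete gap and cannot deliver the stated constant. The entire proposal rests on a per-level sub-step that you defer---a GS function $g_h$ approximating $f_h$ within $(1+o(1))$---and this sub-step is unattainable. With per-level budget $\tfrac{1}{d+1}$, level $0$ effectively contains items whose (capped) values span a ratio of about $k/(d+1)\approx\log d\to\infty$, so one can embed $BA(\kappa,1)$ instances with $\kappa\to\infty$ inside a single level; by Proposition~\ref{prop:approx-2} no GS function approximates such an instance within a ratio better than $\tfrac{2\kappa}{\kappa+1}\to 2$. Hence the best per-level factor is bounded away from $1$, and since the budget split already incurs a tight multiplicative $(d+1)$ (take all value in one level with $\sum_j v_j \to 1$: then $\sum_h f_h(S)=\tfrac{1}{d+1}$ while $f(S)\to 1$), the combined ratio is at least $(2-o(1))(d+1)$, not $(1+o(1))(d+1)$. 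Your Case-1 reasoning that the per-level caps ``bind only mildly'' when $\sum_j v_j<B$ is incorrect for the same reason. Reducing the value ratio inside each level would require more levels, pushing you back toward the $\log m$ of Proposition~\ref{prop:ba-log}; the tension is genuine, not just bookkeeping.

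The paper's proof avoids the level decomposition altogether. It builds a single GS function globally, via a position-based OXS construction: pick a non-increasing threshold sequence with $T(1)=\tfrac{\ln\ln m}{\ln m}$ and $T(r)=\tfrac{1}{r\ln m}$ for $r>1$, define $\mu(j,r)=\min(v_j,T(r))$, and set $h(S)=\sum_{r=1}^{|S|}\mu(S_r,r)$ where $S_r$ is the $r$-th largest item of $S$. Proposition~\ref{prop:h-GS} shows $h$ is a maximum-weight bipartite matching (OXS) valuation, hence GS; the upper bound $h\le f$ follows because $\sum_r T(r)$ is at most the (rescaled) budget, and the lower bound $h(S)\ge(1-o(1))\tfrac{\ln\ln m}{\ln m}f(S)$ comes from a harmonic-sum estimate over a geometric window of positions. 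The $\log\log m$ saving is thus entirely an artifact of the shape of $T$, not of coarsening value classes. Your scheme is closer in spirit to Proposition~\ref{prop:ba-log}, which succeeds precisely because each power-of-two class has uniform values and is therefore already GS---exactly the property your coarser levels sacrifice.
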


Before proving Theorem~\ref{thm:ba-loglog}, we introduce a new family of GS functions. We then show that a member of this family serves to prove Theorem~\ref{thm:ba-loglog}.

\newcommand{\minn}[2]{\min\{#1,#2\}}

Let $g: M \rightarrow \reals^{\geq 0}$ be a function that assigns a real value to every item, and let $T: M \rightarrow \reals^{\geq 0}$ be a monotone non-increasing {\em threshold} function, that assigns a real value to every position $1, \ldots, m$.
Given an item $j$, and a position $r$, we define the marginal value of $j$ with respect to position $r$ as 
\begin{equation}
\label{eq:mu}
\mu(j,r) = \minn{g(j)}{T(r)}.
\end{equation}

The function $\mu$ satisfies the following property.

\begin{claim}
	\label{cl:exchange}
	Let $j,k$ be two items such that $g(j) \geq g(k)$. Then, for every position $r$,
	$$
	\mu(j,r)+\mu(k,r+1) \geq \mu(k,r)+\mu(j,r+1).
	$$
\end{claim}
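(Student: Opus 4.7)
The plan is to rewrite the desired inequality as
\[
\mu(j,r) - \mu(j,r+1) \;\ge\; \mu(k,r) - \mu(k,r+1),
\]
which has the intuitive reading ``when the threshold drops from $T(r)$ to $T(r+1)$, the marginal value of the heavier item $j$ drops at least as much as that of the lighter item $k$.'' Once put in this form, I would prove a clean monotonicity statement: for fixed reals $s \ge t \ge 0$, the single-variable function
\[
\phi_{s,t}(x) \;\eqdef\; \min\{x,s\} - \min\{x,t\}
\]
is monotone non-decreasing in $x \ge 0$. Applying this with $s = T(r)$ and $t = T(r+1)$ (using the assumption that $T$ is monotone non-increasing, so $s \ge t$) to the two values $g(j) \ge g(k)$ then yields exactly the inequality above.

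To verify the monotonicity of $\phi_{s,t}$, I would just observe that it is piecewise linear with three pieces: $\phi_{s,t}(x) = 0$ for $x \le t$, $\phi_{s,t}(x) = x - t$ for $t \le x \le s$, and $\phi_{s,t}(x) = s - t$ for $x \ge s$. Each piece is non-decreasing and the pieces agree at the breakpoints, so $\phi_{s,t}$ is non-decreasing on all of $\reals^{\ge 0}$. This is essentially a one-line check, so there is no serious obstacle; the only thing to be careful about is invoking the hypothesis $T(r) \ge T(r+1)$ so that the definition of $\phi_{s,t}$ makes sense with $s \ge t$ and the middle piece has non-negative slope.

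I expect this claim to be the technical workhorse for later lemmas that establish that the associated valuation is GS, via the greedy/exchange characterization (Lemma~\ref{lem:greedy-char}) or the triplet condition (Lemma~\ref{lem:triplet-char}). In particular, the inequality expresses a ``sorting'' exchange property: swapping the heavier item $j$ into the earlier position $r$ and the lighter item $k$ into the later position $r+1$ weakly increases the sum of marginals, which is the sort of local step needed to argue that a greedy-by-value assignment is optimal. No additional machinery is required for the claim itself.
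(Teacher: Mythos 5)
Your proof is correct, and it takes a genuinely different (though closely related) route from the paper's. The paper proves the abstract inequality $\min\{a,\alpha\}+\min\{b,\beta\}\ge\min\{b,\alpha\}+\min\{a,\beta\}$ (for $a\ge b$, $\alpha\ge\beta$) by a direct two-case analysis on whether $b\le\beta$ or $\beta<b$, simplifying both sides in each case. You instead rearrange the target into the difference form $\mu(j,r)-\mu(j,r+1)\ge\mu(k,r)-\mu(k,r+1)$ and establish that $x\mapsto\min\{x,s\}-\min\{x,t\}$ is non-decreasing (for $s\ge t$) by exhibiting its three linear pieces. Both arguments are elementary and essentially one line; the paper's is a bare case check, while yours makes the underlying structure more explicit, namely that $\mu$ has the lattice-supermodularity property that drives the rearrangement/exchange argument. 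Your framing arguably composes more naturally with the remark following Proposition~\ref{prop:h-GS} (that the construction only needs the exchange property, not the specific $\min$ formula), and with the greedy intuition you correctly anticipate. Your final paragraph is accurate: the claim is exactly the local exchange step used in the proof of Proposition~\ref{prop:h-GS} to show that the sorted-by-$g$ assignment is a maximum-weight matching, hence that $h$ coincides with an OXS function and is therefore GS. One small caveat: you should say explicitly that the comparison is applied at the two points $g(j)\ge g(k)$, both of which lie in $\reals^{\ge 0}$ by the assumption on $g$, so the monotonicity of $\phi_{s,t}$ on $\reals^{\ge 0}$ suffices; this is implicit in your writeup but worth a clause.
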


\begin{proof}
	We show that for every $a,b,\alpha,\beta$ such that $a\geq b, \alpha \geq \beta$, 
	\begin{equation}
	\label{eq:exchange}
	\minn{a}{\alpha}+\minn{b}{\beta} \geq \minn{b}{\alpha} + \minn{a}{\beta}.
	\end{equation}
	Indeed, if $b \leq \beta$, then (\ref{eq:exchange}) holds iff $\minn{a}{\alpha} + b \geq b + \minn{a}{\beta}$, which holds by $\alpha \geq \beta$.
	Otherwise, $\beta < b$, and (\ref{eq:exchange}) holds iff $\minn{a}{\alpha}+\beta \geq \minn{b}{\alpha} + \beta$, which holds by $a \geq b$. 
\end{proof}

Given a set of items $S$, sort them in a non-increasing order according to $g$ (breaking ties arbitrarily), and let $S_r$ be the item in position $r$. 
Consider the function $h: 2^{M} \rightarrow \reals^{\geq 0}$ defined as
\begin{equation}
\label{eq:h}
h(S)=\sum_{r=1}^{|S|}\mu(S_r,r).
\end{equation}
I.e., $h$ sums up the marginal values of its items w.r.t. their corresponding positions.


\begin{proposition}
	\label{prop:h-GS}
	For every function $g:M \rightarrow \reals^{\geq 0}$ and non-increasing threshold function $T: [m] \rightarrow \reals^{\geq 0}$, the function $h$ defined in Equation~(\ref{eq:h}) is monotone GS. 
\end{proposition}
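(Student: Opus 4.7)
The plan is to recognize $h$ as a max-weight bipartite matching valuation, i.e., as an OXS function, since $OXS \subsetneq Rado \subsetneq GS$ by the containment relations listed in the preliminaries. To this end, consider the bipartite graph on $M \cup \{1, \ldots, m\}$ (items on one side, positions on the other) with edge weight $\mu(j, r) = \min\{g(j), T(r)\}$, which is non-negative. Let $\tilde h(S)$ be the weight of a maximum-weight matching in this graph restricted to items of $S$ (the position side remains $\{1, \ldots, m\}$). Valuations of this form are exactly OXS, since each position $r$ is a unit-demand agent with value $\mu(\cdot, r)$ on items, and $\tilde h$ is the OR of these unit-demand valuations. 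Hence it suffices to prove $h(S) = \tilde h(S)$ for every $S \subseteq M$.

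The identity $h(S) = \tilde h(S)$ follows from a two-step rearrangement argument. First, a max-weight matching can be assumed to use positions $\{1, \ldots, |S|\}$: if some position $r' > |S|$ is matched while some $r \le |S|$ is unused, reassigning the corresponding item to position $r$ weakly increases its edge weight, because $T$ is non-increasing and therefore $\mu(j, r) \ge \mu(j, r')$. Second, among matchings using the positions $\{1, \ldots, |S|\}$, the one that places items in non-increasing order of $g$-value at positions $1, 2, \ldots, |S|$ is optimal. For this I would invoke the supermodularity-of-min inequality~\eqref{eq:exchange} directly (it is stated for arbitrary $a \ge b$ and $\alpha \ge \beta$, not only for consecutive positions): with $a = g(k), b = g(j), \alpha = T(r), \beta = T(r')$ it yields $\mu(k, r) + \mu(j, r') \ge \mu(j, r) + \mu(k, r')$ whenever $g(k) \ge g(j)$ and $r < r'$, so any inversion in the assignment can be swapped without decreasing the total weight. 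Iterating the two swap operations converges (each strictly changes a bounded non-negative integer potential such as the sum of used positions, or the number of inversions) to the sorted matching, whose total weight is exactly $h(S)$.

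Once $h = \tilde h$ is established, OXS-ness, and hence GS-ness, follow immediately. Monotonicity is also immediate from the matching interpretation, since enlarging $S$ only enlarges the set of feasible matchings. The main conceptual obstacle is identifying the OXS/matching interpretation in the first place; once that is in hand, everything else reduces to a standard rearrangement argument grounded in inequality~\eqref{eq:exchange}, with no further subtleties even in the presence of ties in $g$ (swapping two items with equal $g$-value leaves the total weight unchanged).
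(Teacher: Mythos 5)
Your proof is correct and takes essentially the same route as the paper: both recognize $h$ as the max-weight bipartite matching (OXS) valuation on the item--position graph with edge weights $\mu(j,r)$, and both reduce to showing the sorted matching is optimal via the exchange inequality of Claim~\ref{cl:exchange}. You spell out the rearrangement argument (first moving used positions down, then removing inversions) more explicitly than the paper, which simply asserts that Claim~\ref{cl:exchange} implies the sorted matching is maximum-weight.
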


{
\begin{proof}
Consider a bipartite graph $G(M,V;E)$ with $|V| = |M|$ and with non-negative weights on the edges, where for every $j \in M$ and $r \in V$, the weight of edge $(j,r)$ is $\mu(j,r)$ (with $\mu$ as in the definition of $h$). Let $h'$ be the set function where for every $S \subset M$, $h'(S)$ is the total weight of the maximum weight matching on the subgraph induced by $S$ and $V$. Then by definition (see Definition~\ref{def:oxs}) $h'$ is an OXS function. We show below that for every set $S\subset M$, $h(S) = h'(S)$, concluding that $h$ is an OXS function, and hence monotone and GS, proving Proposition~\ref{prop:h-GS}.

Consider an arbitrary $S \subset M$, and recall that $h(S) = \sum_{r=1}^{|S|} \mu(S_r,r)$. Consider the matching $MS$ that for every $1 \le r \le |S|$ matches item $S_r\in M$ with vertex $r \in V$. Claim~\ref{cl:exchange} implies that $MS$ is a maximum weight matching on the subgraph induced by $S$ and $V$. Consequently, $h'(S) = \sum_{r=1}^{|S|} \mu(S_r,r) = h(S)$, as desired.
\end{proof}
}

\noindent {\bf Remark:}
The proof of Proposition~\ref{prop:h-GS} does not use the specific expression of $\mu$ in Equation~\ref{eq:mu}, but only the fact that it satisfies the property specified in Claim~\ref{cl:exchange}. Thus, for any $\mu$ function satisfying this property, the function $h$ defined in Equation~\ref{eq:h} is monotone GS.

\vspace{0.1in}
We are now ready to prove Theorem~\ref{thm:ba-loglog}.

\begin{proof}
In this proof, it will be convenient for us to use natural logarithms, denoted by $\ln$. Observe that even though $\log m$ and $\ln m$ differ by a constant multiplicative factor, the values $\frac{\ln m}{\ln\ln m}$ and $\frac{\log m}{\log\log m}$ are the same, up to additive terms that become negligible as $m$ grows. Hence in the statement of Theorem~\ref{thm:ba-loglog} we may replace $\frac{\log m}{\log\log m}$ by $\frac{\ln m}{\ln\ln m}$. 

Consider an arbitrary BA function $f$.  By scaling, we may assume without loss of generality that the budget of $f$ is $1+ \frac{\ln\ln m}{\ln m}$. (Choosing this value, which is $1 + o(1)$ instead of the more natural value of~1, simplifies the definition of $T$.) 
	Consider the threshold function $T: [m] \rightarrow R$, where $T(1) = \frac{\ln\ln m}{\ln m}$, and for every $i > 1$ we have $T(i) = \frac{1}{i \ln  m}$. 
Observe that $\sum_{j\le m} T(j) \le 1 + \frac{\ln\ln m}{\ln m}$.

	Given the BA function $f$, let $g:M \rightarrow \reals^{\geq 0}$ be the induced function on the singletons, and let $h$ be the function defined in Eq.~(\ref{eq:h}) with respect to $g$ and $T$. $h$ is GS by Proposition~\ref{prop:h-GS}. 
	
	Consider an arbitrary set $S$ of items. {We slightly abuse notation and write $g$ and $T$ also for the corresponding extensions of $g$ and $T$ into additive set functions.} By the definitions of $h$ and $g$ we have $h(S) \le g(S)$. We also have that $g(S) = f(S)$, unless $g(S) > 1 + \frac{\ln\ln m}{\ln m}$, in which case $f(S) = 1 + \frac{\ln\ln m}{\ln m}$. However, $h(S) \le 1 + \frac{\ln\ln m}{\ln m}$ for every set $S$, because $T(S) \le 1 + \frac{\ln\ln m}{\ln m}$.	This shows that $h(S) \le f(S)$. It remains to show that $h(S) \ge f(S)\cdot (1 - o(1))\frac{\ln\ln m}{\ln m}$. 
	
	Recall that $S_1$ denotes the highest ranked item in $S$. We may assume that $g(S_1) < \frac{\ln\ln m}{\ln m}$, as otherwise we have $h(S) \ge h(S_1) \ge \frac{\ln\ln m}{\ln m} \ge (1 - o(1))\frac{\ln\ln m}{\ln m}f(S)$, as desired.
	
	Let $i$ be the smallest position for which $T(i) < \frac{\ln\ln m}{\ln m} g(S_i)$. (We may assume that such a position exists, as otherwise $h(S) \ge \frac{\ln\ln m}{\ln m} g(S) \ge \frac{\ln\ln m}{\ln m} f(S)$.) As $g(S_i) \le g(S_1) < \frac{\ln\ln m}{\ln m}$, the inequality $\frac{1}{i \ln  m} = T(i) < \frac{\ln\ln m}{\ln m} g(S_i)$ implies that $i > \frac{\ln m}{2(\ln\ln m)^2}$.  Let $k = \frac{2(\ln\ln m)^2}{\ln m}i$. Observe that $k > 1$, that $T(k) \le g(S_i) \le g(S_k)$, and that $\ln \frac{i}{k} = (1 - o(1))\ln\ln m$. It follows that
	
	$$h(S) \ge \sum_{k \le  j \le i} T(j) = \frac{1}{\ln m} \sum_{k \le  j \le i} \frac{1}{j} \simeq \frac{1}{\ln m} \ln \frac{i}{k} \simeq \frac{\ln\ln m}{\ln m} \ge (1 - o(1))\frac{\ln\ln m}{\ln m} f(S).$$ 
\end{proof}

\subsection{A Unified Approach for Approximating Concave Functions of (some) GS Functions}
\label{sec:concave}

To prove that a given function $f$ can be approximated by a GS function within a polylogarithmic ratio, we follow an approach that is fairly standard for approximating a function by a function from a different class. However, implementing this approach in the context of GS functions involves various subtleties that are not commonly encountered when other classes are concerned.

Let $f$ be a set function that is not GS, and we wish to find a GS function $h$ that approximates $f$ well. Our approach starts off as follows. For a parameter $T$ that is at most polylogarithmic in $m$, we find GS functions $h_1, \ldots, h_T$ with the following {\em sandwich property} holding for every set $S \subset M$:

$$\max_{t \le T} h_t(S)  \le f(S) \le \sum_{t \le T} h_t(S)$$

We remark that sometimes we use an extension of the sandwich property in which either the right hand side or the left hand side is multiplied by some constant, but this does not significantly affect the discussion below.

The sandwich property naturally suggests considering a lower bound function $h_L$ defined as $h_L(S) = \max_{t \le T} h_t(S)$ and an upper bound function $h_U$ defined as $h_U(S) = \sum_{t \le T} h_t(S)$. We have that $h_L(S) \le f(S) \le h_U(S)$, and for every $S$ we have $h_U(S) \le T \cdot h_L(S)$. Consequently, both $h_L$ and $\frac{1}{T}h_U$ approximate $f$ from below within a factor of $T$. The problem is that neither $h_L$ nor $h_U$ need to be GS, as GS is not preserved neither under the $\max$ operation (in fact, using the $\max$ operation one gets from GS the whole class XOS), nor under summation.

To overcome this problem, we use one of two approaches. The first approach is to select the functions $h_t$ in such a way that each function is supported over a subset of the items (items not in the support have value~0 under $h_t$), with no intersections between the subsets that support two different functions. Under this condition, the function $h_U$ is indeed GS, by the following observation (that can easily be verified using the triplet condition).

\begin{observation}
	\label{obs:gs-disjoint-sum}
	Let $h_1, \ldots, h_T$ be GS functions over $T$ pairwise disjoint sets of items $M_1, \ldots, M_T$. Then, the function $h_U=\sum_{t=1}^{T}h_t$ is GS over the set of items $\biguplus_{i=1}^{T}M_i$.
\end{observation}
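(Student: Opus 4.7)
My plan is to use the original Kelso--Crawford demand-based characterization of GS (Definition~\ref{def:gs_kc}) rather than the triplet characterization. The key observation is that because the supports $M_1,\dots,M_T$ are pairwise disjoint, the utility, and hence the demand correspondence, factors across the blocks.

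First I would unpack the utility. For any price vector $\prices\in\reals^M$ and set $S\subseteq M = \biguplus_t M_t$, the fact that each $h_t$ depends only on $S\cap M_t$ (assigning $0$ marginal value to items outside $M_t$) gives
\[
    h_U(S)-\sum_{j\in S}p_j \;=\; \sum_{t=1}^{T}\Bigl(h_t(S\cap M_t)-\sum_{j\in S\cap M_t}p_j\Bigr).
\]
Since each summand depends on a disjoint block of coordinates, maximizing the left-hand side over $S$ is equivalent to maximizing each summand independently. I would conclude that $D_{h_U}(\prices) = \bigl\{\bigcup_t S_t : S_t \in D_{h_t}(\prices|_{M_t})\text{ for each }t\bigr\}$.

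Having established this factorization, verifying the GS condition is then essentially coordinate-wise bookkeeping. Given $\prices\le\vect{q}$ and $S\in D_{h_U}(\prices)$, I would write $S = \bigcup_t S_t$ with $S_t\in D_{h_t}(\prices|_{M_t})$. Applying the GS property of each $h_t$ to the (still monotone) price change $\prices|_{M_t}\le \vect{q}|_{M_t}$ yields a set $T_t \in D_{h_t}(\vect{q}|_{M_t})$ that retains every item of $S_t$ whose price did not change. Taking $T=\bigcup_t T_t$, the factorization from the previous step places $T$ in $D_{h_U}(\vect{q})$, and by construction $T$ contains every item of $S$ whose price stayed the same, which is exactly the GS condition for $h_U$.

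There is no real obstacle here; the only subtlety to be careful about is making sure the factorization of the demand correspondence is genuinely an equality, not just an inclusion, so that every selection of per-block demands lifts to a demanded set of $h_U$. Once that is in hand, the argument is a one-line reduction to the GS property of each $h_t$. Alternatively, one could give a purely combinatorial verification via the triplet characterization (Lemma~\ref{lem:triplet-char}) by case analysis on how $i,j,k$ distribute among the blocks, using submodularity of each $h_t$ to collapse the mixed cases; I would prefer the demand-based proof because it avoids the case analysis entirely.
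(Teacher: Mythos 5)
Your proof is correct, and it takes a genuinely different route from the one the paper suggests. The paper states the observation and remarks parenthetically that it ``can easily be verified using the triplet condition'' (Lemma~\ref{lem:triplet-char}), which would mean checking submodularity of $h_U$ and then doing a small case analysis on how $i,j,k$ distribute over the blocks $M_1,\dots,M_T$: if all three lie in one block the inequality is inherited from that $h_t$, and if they are split across blocks the inequality collapses to (at worst) an identity because marginals from distinct blocks are additively independent. You instead go back to the Kelso--Crawford demand definition (Definition~\ref{def:gs_kc}) and show that, by additive separability of utility over disjoint supports, the demand correspondence of $h_U$ factors exactly as the product of the per-block demand correspondences, after which the GS condition for $h_U$ reduces block-by-block to the GS condition for each $h_t$. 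You correctly flag the one point that needs care --- that the factorization of the demand is an equality in both directions --- and that does hold here since the maximum of a sum of functions on disjoint coordinate blocks is the sum of the per-block maxima, with maximizers in bijection. The demand route is arguably the more conceptual one (no case analysis, and it makes transparent why disjointness is the operative hypothesis), while the triplet route is more local and mechanical; both are fine, and your writeup supplies the detail that the paper leaves implicit.
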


In many interesting cases, we do not know how to implement the first approach (selecting the functions $h_t$ in such a way that each function is supported over a disjoint subset of the items) while keeping $T$ polylogarithmic in $m$. Consequently, we need to deal with the situation in which $h_U$ is not GS (and neither is $h_L$). Our second approach is to find a GS function that is sandwiched between $h_L$ and $h_U$. Luckily, the existence of such a function is guaranteed by the following argument. 

View each function $h_t$ as a valuation function of a bidder in a combinatorial auction, and consider the maximum welfare $W(S)$ function  for the auction with items $M$ and the above bidders. For every set $S$ of items, $W(S)$ is the maximum welfare that can be obtained by allocating $S$ to the bidders. (This function $W$ is also referred to as the {\em convolution} of the collection of functions $h_t$, {where Definition~\ref{def:conv} is the special case when the collection has only two functions}.) As each $h_t$ is GS, so is the welfare function $W$ (see~\cite{lehmann2006combinatorial,PaesLeme17,Murota96}). Moreover, $\max_{t \le T} h_t(S) \le W(S) \le \sum_{t \le T} h_t(S)$. As both $f$ and $H$ enjoy the same sandwich property, it follows that $\frac{1}{T}W$ approximates $f$ from below by a ratio no worse than $T^2$. (Note that here the approximation ratio is $T^2$, whereas in the first approach it was $T$. The difference stems from the following fact. In both approaches we identify a GS function that is within a factor of $T$ of $f$, but in the first approach this function $h_U$ satisfies $\frac{1}{T}h_U(S) \le f(S) \le h_U(S)$ for every $S$, whereas in the second approach this function $W$ satisfies $\frac{1}{T}W(S) \le f(S) \le T \cdot W(S)$ for every $S$.) 

The above discussion is summarised (and slightly generalized to an extended sandwich property) in the following lemma (whose proof can easily be completed by the reader, given the above discussion).

\begin{lemma}
	\label{lem:approach}
	Suppose that for a function $f$, there are 
	{$0< \alpha \le 1$ and $\beta \ge 1$} and a collection of GS functions $h_1, \ldots, h_T$ satisfying the following extended sandwich property for every set $S \subset M$:
	
	$$\alpha \cdot \max_{t \le T} h_t(S)  \le f(S) \le \beta \cdot \sum_{t \le T} h_t(S)$$
	
	Let $h_U$ be the function defined as $h_U(S) = \sum_{t \le T} h_t(S)$, and let $W$ be the welfare function (the convolution of the functions $h_t$) as defined above. Then:
	
	\begin{enumerate}
		\item $W$ is a GS function,  and the GS function $\frac{\alpha}{T}W$ approximates $f$ from below within a ratio of $\frac{\beta}{\alpha}T^2$.
		\item If every item is in the support of at most one of the functions $h_t$, then $h_U$ is a GS function, and the GS function $\frac{\alpha}{T}h_U$ approximates $f$ from below within a ratio of $\frac{\beta}{\alpha}T$.
	\end{enumerate}
\end{lemma}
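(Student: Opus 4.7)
The plan is to verify the two parts by chasing the two inequalities in the extended sandwich property through the definitions of $W$ and $h_U$, using previously cited facts about convolution and disjoint sums of GS functions.

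For part~1, I first invoke the fact (cited in the excerpt from \cite{lehmann2006combinatorial,PaesLeme17,Murota96}) that the convolution of GS functions is GS, so $W$ is GS. The two basic bounds I would use are the obvious \emph{elementwise} inequalities $\max_{t\le T} h_t(S) \le W(S) \le \sum_{t\le T} h_t(S)$: the left inequality comes from the allocation that gives $S$ to whichever bidder maximizes $h_t(S)$, while the right inequality comes from the fact that $W$ is a maximum over partitions $(S_1,\ldots,S_T)$ of $S$ and each term $h_t(S_t)$ is at most $h_t(S)$ by monotonicity. Combining with the sandwich hypothesis, $W(S) \le \sum_t h_t(S) \le T\max_t h_t(S) \le (T/\alpha) f(S)$, so $\tfrac{\alpha}{T}W(S) \le f(S)$; and $W(S) \ge \max_t h_t(S) \ge \tfrac{1}{T}\sum_t h_t(S) \ge f(S)/(\beta T)$, so $f(S) \le \beta T\cdot W(S) = \tfrac{\beta T^2}{\alpha}\cdot\tfrac{\alpha}{T}W(S)$. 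This sandwiches $f$ between $\tfrac{\alpha}{T}W$ and $\tfrac{\beta}{\alpha}T^2\cdot\tfrac{\alpha}{T}W$, as required.

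For part~2, the support-disjointness hypothesis lets me apply Observation~\ref{obs:gs-disjoint-sum} to conclude immediately that $h_U=\sum_t h_t$ is GS. The approximation bounds are even easier here. The upper side is the hypothesis itself: $f(S) \le \beta \sum_t h_t(S) = \beta\,h_U(S)$. The lower side uses only the trivial inequality $h_U(S)=\sum_t h_t(S) \le T\max_t h_t(S) \le (T/\alpha) f(S)$, which gives $\tfrac{\alpha}{T}h_U(S)\le f(S)$. Writing the upper bound as $\beta h_U(S) = \tfrac{\beta T}{\alpha}\cdot\tfrac{\alpha}{T}h_U(S)$ yields the claimed ratio of $\tfrac{\beta}{\alpha}T$.

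There is essentially no obstacle: each of the two main facts I need (convolution of GS is GS; disjoint sum of GS is GS) is already stated in the excerpt, and everything else is routine manipulation of the two inequalities in the extended sandwich property with the elementary bounds $\max_t h_t \le W \le \sum_t h_t$ and $\sum_t h_t \le T\max_t h_t$. The only point that deserves a sentence of care is reminding the reader why $W(S)\le \sum_t h_t(S)$ (monotonicity of each $h_t$ applied to the parts of an arbitrary partition), which is what links the convolution back to $h_U$.
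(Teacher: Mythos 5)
Your proposal is correct and follows exactly the approach the paper intends: the paper explicitly states that the proof ``can easily be completed by the reader, given the above discussion,'' and that discussion is precisely the argument you reconstruct (GS-ness of $W$ via closure under convolution, GS-ness of $h_U$ via Observation~\ref{obs:gs-disjoint-sum}, the elementary bounds $\max_t h_t \le W \le \sum_t h_t \le T\max_t h_t$, and chaining these with the two sides of the extended sandwich property). You have merely carried the $\alpha,\beta$ constants through explicitly, which is exactly what the paper leaves to the reader.
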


As an illustration of the use of Lemma~\ref{lem:approach}, we prove Proposition~\ref{prop:ba-log} (which proves a weaker bound than that proved in Theorem~\ref{thm:ba-loglog}, but does so via a simpler proof).

\begin{proposition}
	\label{prop:ba-log}
	Every BA function $f$ can be approximated by a GS function within a ratio of $O(\log m)$.
\end{proposition}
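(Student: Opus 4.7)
The plan is to apply Lemma~\ref{lem:approach} in its disjoint-supports version (part 2) with $T = O(\log m)$ carefully chosen GS building blocks. After rescaling so that the budget equals $1$, I may assume without loss of generality that every singleton value $v_i$ lies in $(0,1]$ (any $v_i > 1$ can be replaced by $1$ without changing $f$). I then partition the items by magnitude: set $T = \lceil \log_2(2m) \rceil$, and for $1 \le t < T$ let $M_t = \{i : v_i \in (2^{-t}, 2^{-(t-1)}]\}$, while $M_T$ absorbs all remaining items with $v_i \le 2^{-(T-1)}$. This choice of $T$ guarantees $\sum_{i \in M_T} v_i \le m \cdot 2^{-(T-1)} \le 1$, which will be crucial when handling the bottom bucket.

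For each bucket I introduce a GS function supported on $M_t$. For $t < T$, define $h_t(S) = \min\{1,\ 2^{-t}|S \cap M_t|\}$; this is a symmetric concave function of the cardinality of $S \cap M_t$, hence an OXS (and thus GS) function. For $t = T$, define $h_T(S) = \sum_{i \in S \cap M_T} v_i$, which is additive (and therefore GS); note $h_T(S) \le 1$ by the choice of $T$. Since the supports $M_1,\ldots,M_T$ are pairwise disjoint, Observation~\ref{obs:gs-disjoint-sum} yields that $h_U = \sum_{t=1}^T h_t$ is GS.

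Next I would verify the sandwich $\max_t h_t(S) \le f(S) \le 2 \sum_t h_t(S)$. The lower inequality is immediate: each $h_t(S) \le \min\{1, \sum_i v_i\} = f(S)$ by construction. For the upper inequality I would use that $v_i \le 2 \cdot 2^{-t}$ for $i \in M_t$ (with $t<T$), together with the elementary facts $\sum_t \min(1,x_t) \ge \min(1, \sum_t x_t)$ and $\min(1, 2x) \le 2 \min(1, x)$. A short case split on whether $A := \sum_{t<T} 2^{-t}|S \cap M_t|$ is above or below $1/2$ handles both regimes: when $A \ge 1/2$ the non-tiny buckets already contribute at least $1/2$ to $\sum_t h_t(S)$, and when $A < 1/2$ we have $f(S) \le \min(1, 2A + h_T(S)) \le 2A + 2h_T(S) \le 2\sum_t h_t(S)$ directly. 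Invoking Lemma~\ref{lem:approach} part 2 with $\alpha = 1$ and $\beta = 2$ then gives that $g := \tfrac{1}{T} h_U$ is GS and satisfies $g \le f \le 2T\, g$, for a ratio of $2T = O(\log m)$.

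The only delicate point is the treatment of the tiny items. If one gave them their own geometric bands, the number of buckets would blow up past $O(\log m)$; if one used the capped template $\min(1, 2^{-T}|S \cap M_T|)$ for them, the lower sandwich $h_T \le f$ could fail because such items may have values much smaller than $2^{-T}$. Handling them with the additive $h_T$ is what makes the bucket count and both sandwich inequalities go through simultaneously; after that, the argument reduces to the clean sandwich framework already established.
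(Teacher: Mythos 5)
Your proposal is correct and takes essentially the same approach as the paper: partition items into $O(\log m)$ geometric value buckets, use a capped-uniform (BA with equal item values, hence GS) function on each non-tail bucket, an additive function on the tail bucket of tiny items, and then invoke the disjoint-support case of Lemma~\ref{lem:approach}. The only cosmetic difference is that the paper first rounds item values down to powers of two (producing an auxiliary valuation $f'$) and takes $h_t = f'|_{M_t}$, whereas you define the capped-uniform building blocks directly from the bucket endpoints; your explicit case split on $A \gtrless 1/2$ when verifying the upper sandwich fills in a small step the paper leaves implicit.
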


\begin{proof}
	Suppose without loss of generality that the budget limitation of $f$ is~1, that there is no item of value larger than~1, and that $m$ is a power of~2. Round the value of each item down to the nearest power of~2 (namely, to~1, to $\frac{1}{2}$, and so on), thus obtaining $f'$. For every set $S$ we have that $f'(S) \le f(S) \le 2f'(S)$. Partition items into $T = 1 + \log m$ classes $M_0, \ldots M_{\log m}$ by their value, where all items of value $2^{-t}$ are in class $M_t$, and class $M_{\log m}$ contains also all items of value smaller than $\frac{1}{m}$. Let $h_t$ be the function $f'$ restricted to the items of $M_t$. Observe that $h_{\log m}$ is an additive function (as the sum of item values cannot reach the budget), and hence GS.  Each of $h_t$ for $t < \log m$ remains BA, but is also GS. This is because all items in $M_t$ for $t < \log m$ have the same value, and hence the triplet condition holds even with the budget limit. We have the extended sandwich property:
	
	$$\max_{0 \le t \le \log m} h_t(S)  \le f(S) \le 2\sum_{0 \le t \le \log m} h_t(S)$$
	
	Let $h_U$ be the function satisfying $h_U(S) = \sum_{0 \le t \le \log m} h_t(S)$. By item~2 of Lemma~\ref{lem:approach}, the function $\frac{1}{T}h_U(S)$ is GS and approximates $f$ (from below) within a ratio of $2(1 + \log m)$. 
\end{proof}


Lemma~\ref{lem:approach} does not suffice for the proof of Theorem~\ref{thmWMM} (approximating any concave function $g(f)$ of a Rado function $f$ by GS within a ratio of $O(\log^2 m)$). One problem is that the range of values of $f$ can span values that differ from each other by more than a polynomial factor in $m$. (The function $g(f)$ that we wish to approximate may still have a range of values that is polynomial in $m$, if composing the concave function $g$ with $f$ shrinks the range of values.) Handling such situations requires a version of Lemma~\ref{lem:approach} in which $T$ is polynomial in $m$ rather than logarithmic in $m$. Lemma~\ref{lem:approach1}, whose proof is considerably more complicated than the proof of Lemma~\ref{lem:approach}, provides such a version. 

\begin{lemma}
	\label{lem:approach1}
	Let $H = \{h_0, \ldots, h_T\}$ be a collection of submodular functions, where for every function $h_t$, all marginals are either~0 or~$2^t$. (This condition is known to imply that $h_t$ is an MRF, scaled by $2^t$.)
	Suppose that for a valuation function $f$, there are there are 
{$0< \alpha \le 1$ and $\beta \ge 1$} satisfying the following extended sandwich property for every set $S \subset M$:
	
	$$\alpha \cdot \max_{t \le T} h_t(S)  \le f(S) \le \beta \cdot \sum_{t \le T} h_t(S)$$
	
	Let $g$ be an arbitrary normalized monotone concave function. Then:
	
	\begin{enumerate}
		\item The function $g(f)$ can be approximated by a GS function within a ratio of $O(\frac{\beta}{\alpha} (\log m)^2)$.
		\item If every item is in the support of at most one of the functions $h_t$, then the function $g(f)$ can be approximated by a GS function within a ratio of $O(\frac{\beta}{\alpha} \log m)$.
	\end{enumerate}
\end{lemma}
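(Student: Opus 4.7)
My plan is to extend the sandwich-and-convolve strategy of Lemma~\ref{lem:approach} to the regime where $T$ can be polynomial in $m$, leveraging the concavity of $g$ to restrict attention to polylogarithmically many scales.

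First, I would decompose $g$ into a positive combination of threshold functions: since $g$ is monotone concave with $g(0)=0$, we may write (exactly, or up to a constant factor via dyadic approximation of its derivative) $g(x)=\sum_j c_j \min(x, v_j)$ with $c_j\ge 0$ and $v_j=2^j$. Concavity implies $g(2x)\le 2g(x)$, so for any fixed $S$ only $O(\log m)$ of the dyadic thresholds $v_j$ contribute substantially to $g(f(S))$ (namely those within a polynomial window around $f(S)$; thresholds much above the maximum or below the minimum value of interest are negligible). This reduces the task to approximating $\min(f, v)$ by a GS function at each relevant $v$, and then combining the results.

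Second, for each relevant $v$ I would truncate each $h_t$ by defining $\tilde h_{t,v}(S) := \min(h_t(S),\, v/\alpha)$. Since $h_t=2^t r_t$ with $r_t$ an MRF, this truncation caps the matroid rank at $\lfloor v/(\alpha 2^t)\rfloor$, and a truncated MRF is itself an MRF, so each $\tilde h_{t,v}$ is GS. The sandwich for $f$ lifts directly to the capped functions:
\[\alpha \max_t \tilde h_{t,v}(S)\;\le\;\min(f(S), v)\;\le\;\beta \sum_t \tilde h_{t,v}(S).\]
As in Lemma~\ref{lem:approach}, taking the welfare function (convolution) of $\{\tilde h_{t,v}\}_t$ yields a GS function $W_v$ sandwiched between their max and their sum.

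The main obstacle, and the reason this lemma is considerably harder than Lemma~\ref{lem:approach}, is bounding the sum-to-max ratio of the capped collection when $T$ is polynomial in $m$. The crucial observation is that at a fixed scale $v$ only $O(\log m)$ genuinely distinct truncated MRFs survive: all $\tilde h_{t,v}$ with $2^t \ge v/\alpha$ take only the two values $\{0, v/\alpha\}$ and can be merged into a single rank-$1$ MRF on the union of their supports, while $\tilde h_{t,v}$ with $2^t\cdot m\ll v/\alpha$ are small enough to be absorbed geometrically into a few composite MRFs. After this consolidation the effective number of terms per scale is $O(\log m)$, so Lemma~\ref{lem:approach} gives a per-scale GS approximation $W_v$ of $\min(f,v)$ within $O(\frac{\beta}{\alpha}\log m)$ in general, and within $O(\frac{\beta}{\alpha})$ in the disjoint-support case where $\sum_t \tilde h_{t,v}$ is itself GS by Observation~\ref{obs:gs-disjoint-sum}. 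Finally, I would combine the $W_{v_j}$ across the $O(\log m)$ relevant thresholds by convolving the scaled copies $c_j W_{v_j}$; this convolution is GS by \cite{lehmann2006combinatorial,Murota96}, and by the Lemma~\ref{lem:approach} sandwich argument it approximates $\sum_j c_j \min(f, v_j) \asymp g(f)$ within one further factor of $O(\log m)$, giving a total ratio of $O(\frac{\beta}{\alpha}(\log m)^2)$ in the general case and $O(\frac{\beta}{\alpha}\log m)$ in the disjoint-support case.
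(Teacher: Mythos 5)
Your route---decomposing $g$ into dyadic threshold functions $g(x)\approx\sum_j c_j\min(x,v_j)$, then truncating and consolidating the $h_t$'s at each threshold $v_j$, and finally convolving across thresholds---is genuinely different from the paper's, which instead colors the scale index $t$ by its residue mod $\lceil\log m\rceil$, establishes a \emph{constant-factor} GS approximation of $g(f^i)$ within each color class $C_i$ (using a $\tau$-based reassignment of items to their top scale plus a merge operation driven by where $g$ is flat), and then applies Lemma~\ref{lem:approach} to the $\log m$ resulting per-color functions. However, your proposal has a gap that I do not see how to close.

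The central issue is your claim that ``only $O(\log m)$ of the dyadic thresholds $v_j$ contribute substantially.'' This is false as a uniform statement, and in fact fails precisely in the regime the lemma is designed to handle. Take $g(x)=\ln(1+x)$: the coefficient $c_j$ is the drop in $g'$ across $v_j$, so $c_j v_j=\Theta(1)$ for every dyadic $v_j<f(S)$, i.e.\ all thresholds below $f(S)$ contribute comparably. The paper explicitly motivates Lemma~\ref{lem:approach1} by noting that the range of $f$ may span values differing by more than $\text{poly}(m)$, so the number of dyadic thresholds can be super-polylogarithmic. Your observation is a per-$S$ statement (a sliding window of $O(\log m)$ thresholds around $f(S)$), but you need a \emph{single} GS function correct for all $S$ simultaneously; the window shifts with $S$ and the union of windows is not $O(\log m)$. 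The sum-to-max ratio in your final convolution across thresholds is therefore unbounded, and the $O(\log^2 m)$ total does not follow. The paper sidesteps this by never expanding in $g$'s output space; instead it works on the scale index $t$ and invokes a merge operation that collapses consecutive scales whenever $g(2^{t_1})$ and $g(2^{t_s})$ round to the same power of two---i.e., precisely where $g$ is flat enough for many scales to coalesce.

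A secondary gap: at a fixed threshold $v$, your merge of the high scales ($2^t\ge v/\alpha$) into a single rank-$1$ MRF $q_v$ is actually sound---one can check directly that $q_v(S)>0$ forces $\min(f(S),v)=v\le v/\alpha=q_v(S)$, so the sandwich survives with $q_v$ alone standing in for the whole high block. But the low scales ($2^t m\ll v/\alpha$) are only asserted to be ``absorbed geometrically into a few composite MRFs''; no merged GS function is defined, and one is needed because when the middle and high blocks vanish on $S$, the low block is the entire upper-bound side of the sandwich. Merely noting that the low-scale values are small does not produce a GS object, and a max over them is not GS. The paper handles the analogous difficulty via the $\tau$-reassignment (each item charged only to its top scale) and by showing the resulting $u^i_j$ bounds form geometric series. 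Finally, even granting $O(\log m)$ consolidated terms per threshold with generally overlapping supports, item~1 of Lemma~\ref{lem:approach} gives $O(\log^2 m)$, not the $O(\log m)$ per-scale ratio you quote, so your overall exponent would be off by one even after filling the above holes.
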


The proof of Lemma~\ref{lem:approach1} is deferred to Appendix~\ref{sec:advanced-lemma}.

\subsection{Approximability of Concave of Rado within a Ratio $O(\log^{2} m)$}
\label{sec:app-positive}



In this section we prove Theorem~\ref{thmWMM}, concerning approximating a concave function of a Rado function by a GS function within a ratio of $O(\log^2 m)$. 

Before doing so, we consider several natural subclasses of Rado functions, namely matroid rank functions, additive functions, and weighted matroid rank functions.
For additive and weighted matroid rank functions we provide a better approximation ratio of $O(\log m)$.
For matroid rank functions, it is a known fact that applying a concave function to such functions results in a GS function; we provide a proof for completeness. 



\begin{proposition}
	\label{lem:concave-mrf-gs}
	Let $f$ be a matroid rank function. Let $g$ be a function obtained by composing a concave function with $f$. Then $g$ is GS.
\end{proposition}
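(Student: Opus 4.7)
My plan is to verify the triplet characterization of GS (Lemma~\ref{lem:triplet-char}) directly. First I will note that submodularity of $g=\phi\circ f$ (where $\phi$ is the concave monotone function and $f$ is the matroid rank function) is standard: since the marginal $\phi(f(T \cup \{j\})) - \phi(f(T))$ is a non-increasing function of $f(T)$ (by concavity of $\phi$) composed with a non-decreasing function of $T$ (by monotonicity of $f$), and $f(j\mid T)$ itself is non-increasing in $T$, the composed marginals are non-increasing, so $g$ is submodular. I may also assume $\phi(0)=0$ by shifting, which does not affect whether $g$ is GS.

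The heart of the proof is the triplet condition. Fix $S$ and distinct $i,j,k \notin S$, let $a = f(S)$, and for $T \subseteq \{i,j,k\}$ write $r_T = f(S \cup T) - f(S)$. After subtracting $3\phi(a)$, the triplet inequality becomes
\[
\max\bigl\{\phi(a+r_{ik}) + \phi(a+r_j),\ \phi(a+r_{jk}) + \phi(a+r_i)\bigr\} \ \ge\ \phi(a+r_{ij}) + \phi(a+r_k).
\]
Since $f$ is a matroid rank function, $r_i,r_j,r_k \in \{0,1\}$, and each $r_{xy} \in \{\max(r_x,r_y),\ldots,r_x+r_y\}$.

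The next step is a case analysis driven by the matroid structure of $f$ contracted by $S$. If some $r_x = 0$ (say $r_i=0$), then $i$ lies in the span of $S$, so $r_{ij}=r_j$ and $r_{ik}=r_k$; the first argument of the $\max$ on the left-hand side then equals the right-hand side exactly. If $r_i=r_j=r_k=1$, use the fact that ``parallel in the contracted matroid'' is an equivalence relation on $\{i,j,k\}$, where two elements $x,y$ are parallel iff $r_{xy}=1$. Hence the pair-rank pattern on $\{i,j,k\}$ is constrained: either all three pairs have rank $1$ (all parallel), or exactly one pair has rank $1$ and the other two have rank $2$ (one parallel pair plus a separate singleton), or all three pairs have rank $2$. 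In the all-parallel case both sides equal $2\phi(a+1)$; in the all-rank-$2$ case both sides equal $\phi(a+2)+\phi(a+1)$; and in the mixed case (say $r_{ij}=1$, $r_{ik}=r_{jk}=2$) the left-hand side is $\phi(a+2)+\phi(a+1)$ while the right-hand side is $2\phi(a+1)$, and the inequality reduces to monotonicity of $\phi$.

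The main obstacle is just keeping the case analysis organized and invoking the correct matroid fact (that parallelism in the contraction is an equivalence relation) to rule out impossible rank patterns; there is no delicate inequality beyond monotonicity of $\phi$. Together with the submodularity established in the first step, the triplet characterization of Lemma~\ref{lem:triplet-char} then yields that $g$ is GS.
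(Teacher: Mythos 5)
Your approach — reduce to the triplet characterization, observe $f$'s marginals are $0/1$, and case on the rank pattern of $\{i,j,k\}$ in the matroid contracted by $S$ — is essentially the same route the paper takes. The ``parallelism is an equivalence relation'' observation neatly packages the exchange-property argument the paper uses.

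However, there is a genuine gap in your case analysis, and it hides the one place where concavity is actually essential. The triplet inequality
\[
\max\bigl\{\phi(a+r_{ik}) + \phi(a+r_j),\ \phi(a+r_{jk}) + \phi(a+r_i)\bigr\} \ \ge\ \phi(a+r_{ij}) + \phi(a+r_k)
\]
is symmetric under swapping $i\leftrightarrow j$, but $k$ is distinguished: it is the item that appears as a singleton on the right-hand side. So when some $r_x=0$, you cannot ``say $r_i=0$'' without loss of generality. Your argument covers $r_i=0$ (or symmetrically $r_j=0$), where indeed the first argument of the $\max$ equals the right-hand side. But the case $r_k=0$ with $r_i=r_j=1$ is not covered, and it behaves differently. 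There, $k$ is a loop after contracting $S$, so $r_{ik}=r_i=1$ and $r_{jk}=r_j=1$, and both arguments of the $\max$ equal $2\phi(a+1)$. If additionally $i$ and $j$ are independent over $S$ (so $r_{ij}=2$), the right-hand side is $\phi(a+2)+\phi(a)$, and the required inequality is $2\phi(a+1)\ge \phi(a+2)+\phi(a)$ — exactly concavity of $\phi$. This also shows your closing remark that ``there is no delicate inequality beyond monotonicity of $\phi$'' is false; the paper's proof invokes concavity precisely for this subcase (its $f(i|S)=0$ branch). If you fill in this missing case (equivalently, argue via the symmetric ``no unique maximizer'' reformulation of the triplet condition and then note that when one element is a loop and the other two are independent, concavity is what rules out a unique maximizer), the proof is complete.
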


\begin{proof}
	Clearly, $g$ is submodular; thus, by the local characterization of GS functions it suffices to prove Inequality (\ref{eq:no-unique-max}).
	If $f(j,k|S)=0$, then (\ref{eq:no-unique-max}) follows by the DC property of matroids.
	Similarly, if $f(j,k|S)=1$, then all terms in the RHS of (\ref{eq:no-unique-max}) are at least $g(1)$, and the inequality follows.
	It remains to prove the inequality for the case where $f(j,k|S)=2$.
	By the DC property of matroids, $f(j|S)=f(k|S)=1$.
	If $f(i|S)=1$, then by the exchange property of matroids, either $f(i,j|S)=2$ or $f(i,k|S)=2$. The LHS of (\ref{eq:no-unique-max}) is $g(1)+g(2)$, and one of the terms in the RHS of (\ref{eq:no-unique-max}) is at least $g(1)+g(2)$ as well, as desired.
	If $f(i|S)=0$, then the LHS of (\ref{eq:no-unique-max}) is $g(0)+g(2)$, and both terms in the RHS of (\ref{eq:no-unique-max}) are $g(1)+g(1)$. By concavity of $g$, $g(0)+g(2) \leq g(1)+g(1)$, as desired.
\end{proof}

Proposition~\ref{lem:concave-mrf-gs} turns out to be very useful in our context, because every submodular function (and hence also every GS function) in which the marginals are either 0 or 1 is an MRF. It will be used in the proof of Lemma~\ref{lem:approach1}, where we encounter GS functions that are scaled versions of MRFs (marginals are~0 or $c$ for some value $c$). Consequently, applying a concave function on these functions also results in a GS function.

We next consider additive and weighted matroid rank functions.

\begin{proposition}
\label{pro:additive}
	Let $g$ be a normalized monotone concave function and let $f$ be an additive set function. The function $g(f)$ can be approximated by a GS function within a ratio of $O(\log m)$.
\end{proposition}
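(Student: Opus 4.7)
My plan is to reduce directly to Lemma~\ref{lem:approach1}(2) by decomposing the additive function $f$ into scaled matroid rank functions with pairwise disjoint supports. Let $v_1,\ldots,v_m$ be the item values of $f$; after rescaling I may assume $\max_i v_i = 1$ and ignore items of value zero. For each integer $t$, let $M_t = \{i : v_i \in [2^t, 2^{t+1})\}$ and define $h_t(S) = 2^t \cdot |S \cap M_t|$. Then each $h_t$ is $2^t$ times the rank function of the free matroid on $M_t$, so it is submodular with all marginals in $\{0, 2^t\}$, matching the hypothesis of Lemma~\ref{lem:approach1}. The sets $M_t$ partition the support of $f$, so each item is in the support of at most one $h_t$, which is exactly the hypothesis needed to invoke case~(2) of the lemma (after an affine reindexing $t \mapsto t + t_0$ that places the nonempty buckets in $\{0,1,\ldots,T\}$).

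Next I would verify the extended sandwich property with $\alpha = 1$ and $\beta = 2$. The left inequality follows from $h_t(S) = 2^t |S \cap M_t| \le \sum_{i \in S \cap M_t} v_i \le f(S)$, using $v_i \ge 2^t$ on $M_t$; hence $\max_t h_t(S) \le f(S)$. The right inequality uses $v_i < 2^{t+1}$ on $M_t$:
\[
f(S) \;=\; \sum_t \sum_{i \in S \cap M_t} v_i \;<\; \sum_t 2^{t+1}\,|S \cap M_t| \;=\; 2 \sum_t h_t(S).
\]
Invoking Lemma~\ref{lem:approach1}(2) with $\alpha = 1,\ \beta = 2$ for the given normalized monotone concave $g$ then yields a GS function that approximates $g \circ f$ within the ratio $O((\beta/\alpha)\log m) = O(\log m)$, which is exactly the conclusion of Proposition~\ref{pro:additive}.

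There is no genuinely hard step here; the proof is essentially a clean check of the hypotheses of Lemma~\ref{lem:approach1}(2). The only subtlety worth flagging is that the number of nonempty value buckets $T$ may in principle exceed $\log m$ if item values span a super-polynomial range; however, the guarantee in Lemma~\ref{lem:approach1} depends on $\log m$ rather than on $T$, so the argument is unaffected. If one preferred to force $T = O(\log m)$ explicitly, it would suffice to lump all items with $v_i < v_{\max}/m^2$ into a single ``tail'' bucket whose total contribution is at most $v_{\max}/m$, which an easy error analysis shows is absorbed into the $O(\log m)$ factor.
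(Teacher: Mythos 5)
Your proof takes essentially the same approach as the paper's: dyadic bucketing of items by value into scaled matroid rank functions with pairwise disjoint supports, verification of the extended sandwich property, and an invocation of Lemma~\ref{lem:approach1}(2). One small remark: a pure translation $t \mapsto t + t_0$ of the bucket indices does not by itself preserve the lemma's requirement that $h_t$ have marginals in $\{0,2^t\}$; you should also rescale $f$ by $2^{t_0}$ (precomposing $g$ with the inverse scaling, which preserves normalization, monotonicity and concavity), or simply normalize so that the smallest positive item value equals $1$, as the paper does.
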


\begin{proof}
	Let $R$ be the ratio between the maximum value and minimum value item in $f$. Without loss of generality, we assume that the smallest value that an item has is~1.

Round the value of each item down to the nearest power of~2, thus obtaining $f'$. For every set $S$ we have that $f'(S) \le f(S) \le 2f'(S)$. By monotonicity and concavity of $g$, it also holds that $g(f'(S)) \le g(f(S)) \le 2g(f'(S))$. Consequently, for simplicity of notation and losing only a factor of~2 in the approximation ratio, we assume that in $f$ the value of every item is a power~2.

Partition items into $T = 1 + \log R$ classes $M_0, \ldots M_{\log R}$ by their value, where all items of value $2^{t}$ are in class $M_t$. For every $0 \le t \le \log R$, let $f_t$ be the the function defined by $f_t(S) = f(S \cap M_t)$. 
Then for every set $S\subset M$ we have the sandwich property
$$\max_{0 \le t \le R} f_t(S) \le f(S) \le \sum_{0 \le t \le R} f_t(S)$$ 
The proof of the proposition now follows from item~2 of Lemma~\ref{lem:approach1}.
\end{proof}

\begin{proposition}
	\label{pro:wmrf}
	Let $g$ be a normalized monotone concave function and let $f$ be a weighted MRF function. $g(f)$ can be approximated by a GS function within a ratio of $O(\log m)$.
\end{proposition}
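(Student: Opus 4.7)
The plan is to mimic the proof of Proposition~\ref{pro:additive}, replacing the trivial ``item weight'' decomposition of the additive case by a weight-class decomposition of the weighted MRF, and then apply item~2 of Lemma~\ref{lem:approach1}.

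First, since $g$ is monotone and concave, a multiplicative factor~2 perturbation of $f$ changes $g(f)$ by at most a factor~2. So round the weight of every item down to the nearest power of~2 to obtain a weighted MRF $f'$ satisfying $f'(S) \le f(S) \le 2 f'(S)$ for all $S$, and henceforth work with $f'$. Partition the items into classes $M_0,\ldots,M_T$, where $M_t$ is the set of items of rounded weight $2^t$, and define
$$f_t(S) \;=\; f'(S \cap M_t).$$
Since all items in $M_t$ have the same weight $2^t$, $f_t$ equals $2^t$ times the rank function of the matroid underlying $f'$ restricted to $M_t$; in particular, every marginal of $f_t$ is either $0$ or $2^t$, exactly as required by Lemma~\ref{lem:approach1}. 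Moreover, each item lies in exactly one class $M_t$, so the supports of the $f_t$'s are pairwise disjoint.

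Next, I would verify the sandwich property
$$\max_{0 \le t \le T} f_t(S) \;\le\; f'(S) \;\le\; \sum_{0 \le t \le T} f_t(S).$$
The left inequality is immediate from monotonicity of $f'$, since $f_t(S) = f'(S \cap M_t) \le f'(S)$. For the right inequality, let $I^\star \subseteq S$ be an independent set in the underlying matroid with $f'(S) = w(I^\star)$ (where $w$ denotes the rounded weights). Since $\{M_t\}$ partitions the items, $I^\star = \bigsqcup_t (I^\star \cap M_t)$, and each $I^\star \cap M_t$ is independent (as a subset of an independent set) and contained in $S \cap M_t$. Hence $w(I^\star \cap M_t) \le f'(S \cap M_t) = f_t(S)$, and summing over $t$ gives $f'(S) = w(I^\star) = \sum_t w(I^\star \cap M_t) \le \sum_t f_t(S)$. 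Thus the sandwich property holds with $\alpha = \beta = 1$.

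Now I would invoke item~2 of Lemma~\ref{lem:approach1}: since the $f_t$ have pairwise disjoint supports and marginals $0$ or $2^t$, and since the sandwich property above holds with $\alpha=\beta=1$, the function $g(f')$ is approximated by a GS function within a ratio of $O(\log m)$. Combining with the factor~2 relating $g(f')$ and $g(f)$, we conclude that $g(f)$ is approximated by a GS function within a ratio of $O(\log m)$, as claimed.

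The main obstacle is conceptual rather than technical: one must recognize that the correct decomposition of a weighted MRF is by weight class (rather than, say, by some partition of the matroid's ground set), and that each resulting piece $f_t$ is genuinely a scaled matroid rank function on its support --- this is precisely the hypothesis that Lemma~\ref{lem:approach1} is engineered to exploit. Once this is in place, the upper half of the sandwich inequality follows from the matroid structure via the standard greedy optimality, and the rest is bookkeeping.
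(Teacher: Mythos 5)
Your proof is correct and follows essentially the same approach as the paper: round weights to powers of two, partition items by weight class so each restriction $f_t$ is a scaled MRF with marginals in $\{0,2^t\}$ and disjoint support, establish the sandwich property, and invoke item~2 of Lemma~\ref{lem:approach1}. The paper presents this more tersely (by reference to Proposition~\ref{pro:additive}), whereas you spell out the verification that the sandwich's upper inequality follows from decomposing a maximum-weight independent set across weight classes, which is a helpful detail the paper leaves implicit.
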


\begin{proof}
The proof follows that of Proposition~\ref{pro:additive}, but with one change. Recall the classes $M_t$ defined in the proof of Proposition~\ref{pro:additive}. Every item in the class $M_t$ had $f$ value $2^t$. In the proof of Proposition~\ref{pro:additive}, $f$ restricted to the items of $M_t$ was an additive function. In our current context, when $f$ is a WMRF, $f$ restricted to the items of $M_t$ is an MRF (scaled by $2^t$). Consequently, item~2 of Lemma~\ref{lem:approach1} still applies.
\end{proof}

We now prove Theorem~\ref{thmWMM}, establishing that if $g$ is a normalized monotone concave function and $f$ is a Rado function, then the function $g(f)$ can be approximated by a GS function within a ratio of $O(\log^{2} m)$.

Recall the definition of Rado valuations. 
A set function $f:2^M \rightarrow \reals^+$ is a Rado valuation if there is a bipartite graph $G(M,V;E)$ with non-negative weights on the edges, and a matroid $\mathcal{M}=(V,\mathcal{I})$, such that for every $S\subseteq M$, $f(S)$ is the total weight of the max weight matching on the subgraph induced by $S$ and a subset of $V$ that belongs to $\mathcal{I}$.

\begin{proof} (of Theorem~\ref{thmWMM})
Let $f$ be a Rado function. Recall that its representation involves a bipartite graph $G(U,V;E)$ with non-negative weights on the edges, with $m = |U|$ left side vertices, that we shall refer to as {\em items}.  

Let $R$ be the ratio between the maximum weight and minimum weight edge in $G$. Without loss of generality, we assume that the smallest weight that an edge has is~1, and that $R$ is a power of~2. 

Round down the weight of each edge to the nearest power of~2, thus obtaining $f'$. For every set $S$ we have that $f'(S) \le f(S) \le 2f'(S)$. By monotonicity and concavity of $g$, it also holds that $g(f'(S)) \le g(f(S)) \le 2g(f'(S))$. Consequently, for simplicity of notation and losing only a factor of~2 in the approximation ratio, we assume that in $G$ the weight of every edge is a power~2.


Partition the graph into a collection $\{G_t\}$ of uniform weight graphs, where for every $t$ graph $G_t$ contains all edges of weight $2^t$. 
The corresponding Rado function for each such graph $G_t$ will be referred to as $f_t$, and it is a matroid rank function (scaled by $2^t$, because $f_t$ is submodular, and its marginals are~0 and~$2^t$). 

We have the sandwich property $\max_{t\le \log R} f_t(S) \le f(S) \le \sum_{t\le \log R} f_t(S)$. 
Using item~1 of Lemma~\ref{lem:approach1} we have that there is a GS function that approximates $g(f)$ within a ratio of $O((\log m)^2)$. 
\end{proof}

\section*{Acknowledgements}

{In a previous version of this paper, Lemma~\ref{lem:twoItemSymmetrization} was proved using a computer assisted proof. We are grateful to Noam Guterman and Ittay Toledo who each independently wrote code that was used in that previous proof.}

\bibliographystyle{plain}
\bibliography{bib-file}

\begin{thebibliography}{10}

\bibitem{assadi2021improved}
Sepehr Assadi, Thomas Kesselheim, and Sahil Singla.
\newblock Improved truthful mechanisms for subadditive combinatorial auctions:
  Breaking the logarithmic barrier.
\newblock In {\em Proceedings of the 2021 ACM-SIAM Symposium on Discrete
  Algorithms (SODA)}, pages 653--661. SIAM, 2021.

\bibitem{assadi2019improved}
Sepehr Assadi and Sahil Singla.
\newblock Improved truthful mechanisms for combinatorial auctions with
  submodular bidders.
\newblock In {\em 2019 IEEE 60th Annual Symposium on Foundations of Computer
  Science (FOCS)}, pages 233--248. IEEE, 2019.

\bibitem{ausubel2002ascending}
Lawrence~M Ausubel and Paul~R Milgrom.
\newblock Ascending auctions with package bidding.
\newblock {\em The BE Journal of Theoretical Economics}, 1(1), 2002.

\bibitem{badanidiyuru2012sketching}
Ashwinkumar Badanidiyuru, Shahar Dobzinski, Hu~Fu, Robert Kleinberg, Noam
  Nisan, and Tim Roughgarden.
\newblock Sketching valuation functions.
\newblock In {\em Proceedings of the twenty-third annual ACM-SIAM symposium on
  Discrete Algorithms}, pages 1025--1035. SIAM, 2012.

\bibitem{balcan2011learning}
Maria-Florina Balcan and Nicholas~JA Harvey.
\newblock Learning submodular functions.
\newblock In {\em Proceedings of the forty-third annual ACM symposium on Theory
  of computing}, pages 793--802, 2011.

\bibitem{BalkanskiL18}
Eric Balkanski and Renato~Paes Leme.
\newblock On the construction of substitutes.
\newblock In {\'{E}}va Tardos, Edith Elkind, and Rakesh Vohra, editors, {\em
  Proceedings of the 2018 {ACM} Conference on Economics and Computation,
  Ithaca, NY, USA, June 18-22, 2018}, page 643. {ACM}, 2018.

\bibitem{bhawalkar2011welfare}
Kshipra Bhawalkar and Tim Roughgarden.
\newblock Welfare guarantees for combinatorial auctions with item bidding.
\newblock In {\em Proceedings of the twenty-second annual ACM-SIAM symposium on
  Discrete Algorithms}, pages 700--709. SIAM, 2011.

\bibitem{dobzinski2007two}
Shahar Dobzinski.
\newblock Two randomized mechanisms for combinatorial auctions.
\newblock In {\em Approximation, Randomization, and Combinatorial Optimization.
  Algorithms and Techniques}, pages 89--103. Springer, 2007.

\bibitem{dobzinski2010approximation}
Shahar Dobzinski, Noam Nisan, and Michael Schapira.
\newblock Approximation algorithms for combinatorial auctions with
  complement-free bidders.
\newblock {\em Mathematics of Operations Research}, 35(1):1--13, 2010.

\bibitem{DT1995}
A.~Dress and W.~Terhalle.
\newblock Rewarding maps: On greedy optimization of set functions.
\newblock {\em Advances in Applied Mathematics}, 16:863--872, 04 1995.

\bibitem{feige2009maximizing}
Uriel Feige.
\newblock On maximizing welfare when utility functions are subadditive.
\newblock {\em SIAM Journal on Computing}, 39(1):122--142, 2009.

\bibitem{feldman2013simultaneous}
Michal Feldman, Hu~Fu, Nick Gravin, and Brendan Lucier.
\newblock Simultaneous auctions are (almost) efficient.
\newblock In {\em Proceedings of the forty-fifth annual ACM symposium on Theory
  of computing}, pages 201--210, 2013.

\bibitem{feldman2013representation}
Vitaly Feldman, Pravesh Kothari, and Jan Vondr{\'a}k.
\newblock Representation, approximation and learning of submodular functions
  using low-rank decision trees.
\newblock In {\em Conference on Learning Theory}, pages 711--740. PMLR, 2013.

\bibitem{GargHV21}
Jugal Garg, Edin Husic, and L{\'{a}}szl{\'{o}}~A. V{\'{e}}gh.
\newblock Approximating nash social welfare under rado valuations.
\newblock In Samir Khuller and Virginia~Vassilevska Williams, editors, {\em
  {STOC} '21: 53rd Annual {ACM} {SIGACT} Symposium on Theory of Computing,
  Virtual Event, Italy, June 21-25, 2021}, pages 1412--1425. {ACM}, 2021.

\bibitem{goemans2009approximating}
Michel~X Goemans, Nicholas~JA Harvey, Satoru Iwata, and Vahab Mirrokni.
\newblock Approximating submodular functions everywhere.
\newblock In {\em Proceedings of the twentieth annual ACM-SIAM symposium on
  Discrete algorithms}, pages 535--544. SIAM, 2009.

\bibitem{gul1999walrasian}
Faruk Gul and Ennio Stacchetti.
\newblock Walrasian equilibrium with gross substitutes.
\newblock {\em Journal of Economic theory}, 87(1):95--124, 1999.

\bibitem{husic2021complete}
Edin Husi{\'c}, Georg Loho, Ben Smith, and L{\'a}szl{\'o}~A V{\'e}gh.
\newblock On complete classes of valuated matroids.
\newblock {\em arXiv preprint arXiv:2107.06961}, 2021.

\bibitem{iwata2001combinatorial}
Satoru Iwata, Lisa Fleischer, and Satoru Fujishige.
\newblock A combinatorial strongly polynomial algorithm for minimizing
  submodular functions.
\newblock {\em Journal of the ACM (JACM)}, 48(4):761--777, 2001.

\bibitem{kelso1982job}
Alexander~S Kelso~Jr and Vincent~P Crawford.
\newblock Job matching, coalition formation, and gross substitutes.
\newblock {\em Econometrica: Journal of the Econometric Society}, pages
  1483--1504, 1982.

\bibitem{kobayashi2007operations}
Yusuke Kobayashi, Kazuo Murota, and Ken’ichiro Tanaka.
\newblock Operations on m-convex functions on jump systems.
\newblock {\em SIAM Journal on Discrete Mathematics}, 21(1):107--129, 2007.

\bibitem{lehmann2006combinatorial}
Benny Lehmann, Daniel Lehmann, and Noam Nisan.
\newblock Combinatorial auctions with decreasing marginal utilities.
\newblock {\em Games and Economic Behavior}, 55(2):270--296, 2006.

\bibitem{PaesLeme17}
Renato~Paes Leme.
\newblock Gross substitutability: An algorithmic survey.
\newblock {\em Games and Economic Behavior}, 106:294--316, 2017.

\bibitem{Murota96}
Kazuo Murota.
\newblock Convexity and steinitz's exchange property.
\newblock In William~H. Cunningham, S.~Thomas McCormick, and Maurice Queyranne,
  editors, {\em Integer Programming and Combinatorial Optimization, 5th
  International {IPCO} Conference, Vancouver, British Columbia, Canada, June
  3-5, 1996, Proceedings}, volume 1084 of {\em Lecture Notes in Computer
  Science}, pages 260--274. Springer, 1996.

\bibitem{nemhauser1978analysis}
George~L Nemhauser, Laurence~A Wolsey, and Marshall~L Fisher.
\newblock An analysis of approximations for maximizing submodular set
  functions—i.
\newblock {\em Mathematical programming}, 14(1):265--294, 1978.

\bibitem{nisan2006communication}
Noam Nisan and Ilya Segal.
\newblock The communication requirements of efficient allocations and
  supporting prices.
\newblock {\em Journal of Economic Theory}, 129(1):192--224, 2006.

\bibitem{ostrovsky2015gross}
Michael Ostrovsky and Renato Paes~Leme.
\newblock Gross substitutes and endowed assignment valuations.
\newblock {\em Theoretical Economics}, 10(3):853--865, 2015.

\bibitem{Oxley2011}
James Oxley.
\newblock {\em Matroid Theory (2nd ed.)}.
\newblock Oxford Univerity Press, 2011.

\bibitem{ReijniersePG2002}
Hans Reijnierse, Jos Potters, and Anita Gellekom.
\newblock Verifying gross substitutability.
\newblock {\em Economic Theory}, 20:767--776, 04 2002.

\bibitem{Shioura2012}
Akiyoshi Shioura.
\newblock Matroid rank functions and discrete concavity.
\newblock {\em Japan Journal of Industrial and Applied Mathematics},
  29(3):535--546, October 2012.

\bibitem{vondrak2008optimal}
Jan Vondr{\'a}k.
\newblock Optimal approximation for the submodular welfare problem in the value
  oracle model.
\newblock In {\em Proceedings of the fortieth annual ACM symposium on Theory of
  computing}, pages 67--74, 2008.

\end{thebibliography}

\medskip

\appendix
\section*{APPENDIX}
\section{Appendix for Section~\ref{sec:preliminaries}}
\label{sec:app-prelim}

\begin{proof}[Proof of Lemma~\ref{lem:local-approx}:]
	We distinguish between two cases.
	
	\noindent Case (i): $f(i,j|S) < f(k|S)$. First observe that
	$\max\{f(i|S), f(j|S)\} \ge \frac{1}{2} (f(i|S) + f(j|S)) \geq \frac{1}{2}f(i,j|S)$,
	where the last inequality follows by submodularity.
	Combining this with monotonicity and the assumption of case (i), we get that
	$\max\{f(i,k|S) + f(j|S), f(j,k|S) + f(i|S)\} \geq \max\{f(k|S) + f(j|S), f(k|S)+ f(i|S)\} = f(k|S) + \max\{f(j|S), f(i|S)\} \geq f(k|S)+\frac{1}{2}f(i,j|S) \geq \frac{3}{4}(f(i,j|S) + f(k|S))$.
	
	\noindent Case (ii): $f(i,j|S) \geq f(k|S)$.
	By submodularity, $f(i,k|S) + f(j,k|S) \ge f(i,j,k|S) + f(k|S)$ and also $f(i|S) + f(j|S) \ge f(i,j|S)$. By these inequalities and monotonicity we get that $f(i,k|S) + f(j|S) + f(j,k|S) + f(i|S) \ge 2f(i,j|S) + f(k|S)$. Hence $\max[f(i,k|S) + f(j|S), f(j,k|S) + f(i|S)] \ge f(i,j|S) + \frac{1}{2}f(k|S) \ge \frac{3}{4}(f(i,j|S) + f(k|S))$, where the last inequality follows by the assumption of case (ii).
	
	To show that $\frac{3}{4}$ is the best possible ratio, consider a budget additive function $f$ with $f(i) = f(j) = 1$, $f(k)=2$, and a budget of~2.
	For $S=\emptyset$, it holds that $\max\{f(i,k|S) + f(j|S), f(j,k|S) + f(i|S)\}=3$, whereas $f(i,j|S) + f(k|S))=4$.
\end{proof}

\subsection{Matroids and Additional Valuation Functions}
\label{sec:matroids-prelim}

\newcommand{\mat}{\mathcal{M}}
\newcommand{\ind}{\mathcal{I}}

A {\em matroid} $\mat$ is a pair $(M,\ind)$, where $M$ is a finite set of elements, and $\ind$ is a non-empty collection of subsets of $M$ (often termed the collection of {\em independent sets})  satisfying the following two conditions: 
\begin{itemize}
	\item Downward-closed (DC): If $S \in \ind$ and $S' \subseteq S$, then $S' \in \ind$.
	\item Exchange property: For any two sets $S,T \in \ind$ such that $|S|<|T|$, there exists an element $j \in T \setminus S$ such that $S \cup \{j\} \in \ind$.
\end{itemize}

Given a matroid $\mat=(M,\ind)$, the rank function of $\mat$ is a function $rank_{\mat}:2^{M}\rightarrow \mathbb{N}$, where $rank_{\mat}(S) = \max_{T\in \ind}|S \cap T|$.  


\begin{definition}
	\label{def:mrf}
A set function $f:2^{M}\rightarrow \reals$ is a {\em matroid rank function} (MRF) if there exists a matroid $\mat = (M,\ind)$ such that $f(S) = rank_{\mat}(S)$ for every $S\subset M$.
\end{definition}

A more general class of valuations is {\em weighted matroid rank} (WMRF) functions. 
Given a matroid $\mat=(M,\ind)$ and a weight function $w$ that associates a non-negative weight $w_j$ with every element $j \in M$, the weighted rank function of $\mat$ with respect to $w$ is a function $wrank_{\mat,w}:2^{M} \rightarrow \reals$, where 
$wrank_{\mat}(S) = \max_{T\in \ind, T \subseteq S}\sum_{j \in T}w_j$.
%

\begin{claim}\cite{Oxley2011}
	\label{cl:mrf-sm-01}
	A valuation function is MRF if and only if it is a submodular valuation with binary marginal values. 
\end{claim}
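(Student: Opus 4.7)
The plan is to prove the two implications of the equivalence separately, with the forward direction being essentially a reminder of well-known matroid properties, and the reverse direction being the substantive content.

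For the forward direction, I would start from $f = \text{rank}_{\mathcal{M}}$ for some matroid $\mathcal{M} = (M, \mathcal{I})$ and verify both submodularity and binary marginals. Submodularity is a classical consequence of the exchange axiom (for $S \subseteq T$ and $j \notin T$, any maximum independent subset of $T \cup \{j\}$ restricts to an independent set in $S \cup \{j\}$ via a matroid-intersection/augmenting-path argument, yielding $\text{rank}(j \mid T) \le \text{rank}(j \mid S)$). Binary marginals follow because adding a single element to the ground set can enlarge a maximum independent subset by at most one element, so $f(j \mid S) \in \{0,1\}$.

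The substantive direction is the converse. Given a monotone, normalized, submodular $f$ whose marginals are all in $\{0,1\}$, I would define $\mathcal{I} := \{S \subseteq M : f(S) = |S|\}$ and verify that this family satisfies the matroid axioms. For the downward-closed property: if $S \in \mathcal{I}$ and $T \subseteq S$, pick any ordering of $S$ that lists the elements of $T$ first, and use the telescoping identity $f(S) = \sum_i f(s_i \mid \{s_1, \ldots, s_{i-1}\})$. Since $f(S) = |S|$ and every marginal lies in $\{0,1\}$, every summand must equal $1$; taking just the prefix corresponding to $T$ gives $f(T) = |T|$, so $T \in \mathcal{I}$. For the exchange axiom, let $S, T \in \mathcal{I}$ with $|S| < |T|$. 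Iterated submodularity gives
\[
f(S \cup T) - f(S) \le \sum_{j \in T \setminus S} f(j \mid S),
\]
and monotonicity gives $f(S \cup T) \ge f(T) = |T|$, hence $\sum_{j \in T \setminus S} f(j \mid S) \ge |T| - |S| > 0$. Because each marginal is $0$ or $1$, some $j \in T \setminus S$ has $f(j \mid S) = 1$, and then $S \cup \{j\} \in \mathcal{I}$, as required.

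It remains to verify that $f(S) = \text{rank}_{\mathcal{M}}(S)$ for every $S$. Let $T \subseteq S$ be any maximal subset with $T \in \mathcal{I}$, so $|T| = \text{rank}_{\mathcal{M}}(S)$. By the binary-marginal property, maximality of $T$ in $S$ forces $f(j \mid T) = 0$ for every $j \in S \setminus T$ (otherwise $T \cup \{j\}$ would lie in $\mathcal{I}$). Submodularity then propagates this vanishing: $f(j \mid T') \le f(j \mid T) = 0$ for every $T \subseteq T' \subseteq S \setminus \{j\}$, so building $S$ from $T$ one element at a time adds zero at every step, yielding $f(S) = f(T) = |T| = \text{rank}_{\mathcal{M}}(S)$.

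I don't expect a real obstacle here: the whole proof is a clean bootstrapping of the $\{0,1\}$-marginal hypothesis together with submodularity. The only place requiring a bit of care is the downward-closed argument, where one has to exploit the fact that a sum of $\{0,1\}$ values equalling its maximum possible total forces each summand to be $1$, and the exchange axiom, where monotonicity of $f$ (built into the paper's notion of valuation) is what allows $f(S \cup T) \ge f(T)$.
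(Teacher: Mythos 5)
Your proof is correct. Note, however, that the paper itself offers no proof of Claim~\ref{cl:mrf-sm-01} --- it is stated as a known fact with a citation to Oxley's matroid theory text --- so there is no in-paper argument to compare against. Your derivation (taking $\mathcal{I} = \{S : f(S) = |S|\}$, using telescoping plus $\{0,1\}$-marginals for downward closure, iterated submodularity plus monotonicity for exchange, and then propagating zero marginals past a maximal independent subset to recover $f = \mathrm{rank}_\mathcal{M}$) is precisely the standard textbook proof and each step checks out. The only tiny omission is that you should observe $\emptyset \in \mathcal{I}$ (since $f$ is normalized) so that $\mathcal{I}$ is nonempty, as required by the matroid axioms as the paper states them in Appendix~\ref{sec:matroids-prelim}.
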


With a slight abuse of notation, we shall refer to a submodular valuation with marginal values in $\{0,c\}$ (for some constant $c$) as MRF as well.

\begin{definition}
	\label{def:wmrf}
	A set function $f:2^{M}\rightarrow \reals$ is a {\em weighted matroid rank function} (WMRF) if there exist a matroid $\mat = (M,\ind)$, and a weight function $w$ over the elements in $M$ such that $f(S)=wrank_{\mat,w}(S)$ for every $S \subseteq M$.
\end{definition}

Finally we define the class of {\em matroid rank sum} (MRS), which is the class of sum over matroid rank functions.

\begin{definition}
	\label{def:mrs}
	A set function $f:2^{M}\rightarrow \reals$ is a {\em matroid rank sum} (MRS) if there exists a collection of MRF functions $f_1, \ldots f_k : 2^{M}\rightarrow \reals$, and associated non-negative weights $w_1, \ldots, w_k \in \reals^+$ such that $f(S) = \sum_{j=1}^{k}w_{j}f_j(S)$ for every $S \subset M$.
\end{definition}

Note that coverage valuations are MRS over matroid rank functions of rank 1. 

\begin{definition}
	\label{def:oxs}
	A set function $f:2^{M}\rightarrow \reals$ is {\em OXS} if there exists a bipartite graph $G(M,V;E)$ with non-negative weights on the edges, such that for every $S \subseteq M$, $f(S)$ is the total weight of the maximum weighted matching on the subgraph induced by $S$ and $V$.
\end{definition}

{OXS functions form a subclass of GS functions~\cite{lehmann2006combinatorial}.} 

\section{Gap of $\frac{2k}{k+1}$ between GS and Submodular via the Greedy Characterization}
\label{sec:factor2-sep}
In this section we show that the submodular (and in fact, budget additive, BA) function $BA(k,1)$ (see Definition~\ref{def:ba-k-d}) cannot be approximated by a GS function within a better factor than $\frac{2k}{k+1}$. We do so without making use of symmetrization (Theorem~\ref{thm:symmetrization}).
%
Hereafter, $\rho_f$ denotes the best ratio by which any GS function can approximate $f$.

\begin{proposition}
	\label{prop:approx-2}
	Let $f$ be the budget additive function with budget~1, one item $a$ of value~1, and a set $B$ of $k$ items $b_1, \ldots, b_k$, each of value $1/k$.
	Then $\rho_f = \frac{2k}{k+1}$.
\end{proposition}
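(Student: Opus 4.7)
The plan is to prove $\rho_f = \frac{2k}{k+1}$ by establishing the two inequalities separately.

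\textbf{Upper bound ($\rho_f \le \frac{2k}{k+1}$).} I would exhibit an explicit GS function $g$ with $g \le f \le \frac{2k}{k+1}\, g$. A natural candidate combines a weighted matroid rank structure on $B$ with values for $a$-containing sets tuned so that the submodular constraint $g(b_i\mid a) \le g(b_i)$ and the budget constraint $g(\{a\}\cup S') \le 1$ are simultaneously consistent with the triplet condition. For instance, for $k=2$ the function defined by $g(a)=\tfrac34$, $g(b_i)=\tfrac38$, $g(\{a,b_i\})=1$, $g(\{b_1,b_2\})=\tfrac34$, $g(M)=1$ can be directly checked to satisfy submodularity and the triplet condition (Lemma~\ref{lem:triplet-char}), hence is GS; an analogous construction generalizes to all $k$, with $g|_B$ realized as a uniform matroid rank function of rank $\lceil (k+1)/2\rceil$ scaled by $\tfrac1k$, and $a$-containing values chosen to saturate against $f \equiv 1$ as early as submodularity allows.

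\textbf{Lower bound ($\rho_f \ge \frac{2k}{k+1}$).} Let $g$ be GS with $g \le f \le \rho g$. The strategy is to derive
\begin{equation*}
g(B) \le \tfrac{1}{k} + \bigl(1 - g(a)\bigr),
\end{equation*}
using only the greedy characterization (Lemma~\ref{lem:greedy-char}) and the triplet condition (Lemma~\ref{lem:triplet-char}), thus avoiding Theorem~\ref{thm:symmetrization}. The starting observation is that the triplet applied to $(a,b_i,b_j)$ at $S=\emptyset$ rearranges to the pairwise substitutability statement: for every $i\ne j$, either $g(b_i\mid a) \ge g(b_i\mid b_j)$ or $g(b_j\mid a) \ge g(b_j\mid b_i)$. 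This defines a tournament on $B$. Next, I would invoke the greedy characterization with carefully chosen uniform prices on the $b$-items (tuned so that both $\{a\}$ and $B$ arise as valid demands) to identify, via the GS substitutes property, some $b^* \in B$ for which the SWS-like inequality $g(B\setminus\{b^*\}\mid b^*) \le g(B\setminus\{b^*\}\mid a)$ holds. Since $g(B\setminus\{b^*\}\mid a) \le g(M) - g(a) \le 1 - g(a)$ (using $g\le f$) and $g(b^*) \le \tfrac{1}{k}$, we obtain the claimed inequality. Combined with $g(a),g(B)\ge 1/\rho$, it follows that $2/\rho \le (k+1)/k$, i.e., $\rho \ge \tfrac{2k}{k+1}$.

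\textbf{Main obstacle.} The hardest step is identifying $b^*$ and establishing the SWS-like inequality for that specific $b^*$ using only greedy. In the warmup proof of Section~\ref{sec:log-ba} any $b^*$ works because Theorem~\ref{thm:symmetrization} lets one assume $g$ respects the symmetries of $f$, after which Proposition~\ref{pro:GSisSWS} applies pairwise to any two items of $B$. Here, for an asymmetric $g$, the tournament structure arising from the triplet condition must be combined with the demand consistency guaranteed by greedy to pick a valid $b^*$. This argument does not generalize to $BA(k,d)$ with $d>1$ because the corresponding telescoping across multiple levels $L_h$ requires the group-SWS property (Lemma~\ref{lem:key}), whose proof depends on symmetrization---this is precisely the reason the main text uses the symmetrization-based proof of Theorem~\ref{th:ba-negativeNew}.
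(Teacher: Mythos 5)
Your explicit $k=2$ candidate is not GS. With $g(a)=\tfrac34$, $g(b_1)=g(b_2)=\tfrac38$, $g(\{a,b_i\})=1$, $g(\{b_1,b_2\})=\tfrac34$, apply the triplet condition (Lemma~\ref{lem:triplet-char}) at $S=\emptyset$ with $(i,j,k)=(b_1,b_2,a)$: one needs
$\max\{g(b_1,a)+g(b_2),\ g(b_2,a)+g(b_1)\}\ge g(b_1,b_2)+g(a)$, i.e.\ $\max\{11/8,\,11/8\}\ge 3/2$, which fails. (Equivalently, among the three sums the maximum $g(b_1,b_2)+g(a)=3/2$ is strict, violating the ``no unique max'' form.) The suggested generalization via a uniform matroid of rank $\lceil(k+1)/2\rceil$ scaled by $1/k$ is also inconsistent with these values: for $k=2$ it would give $g(\{b_1,b_2\})=1$, not $3/4$. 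The construction the paper uses keeps $g(b_j)=1/k$ \emph{exactly} (no shaving of the $b$-singletons), sets $g(a)=\frac{k+1}{2k}$, and gives each additional $b$-item marginal $\frac{1}{2k}$ until saturation; this makes all three relevant sums in the triplet at $S=\emptyset$ tie, so GS holds. The issue with your candidate is precisely that shaving $g(b_i)$ below $1/k$ while keeping $g(b_1,b_2)+g(a)$ large breaks the tie.

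\textbf{Lower bound.} Your overall plan (bound $g(a)+g(B)\le\frac{k+1}{k}$ via the greedy characterization, without symmetrization) is sound, and your final arithmetic is correct. But the step you flag as hardest --- extracting a single $b^*$ satisfying an SWS-like inequality from the tournament plus demand consistency --- is both under-specified and unnecessary. The paper's argument reaches a slightly weaker (but sufficient) intermediate conclusion directly: let $b_1\in B$ have the maximum $g$-value among the $b$'s and assume $g(a)>g(b_1)$ (otherwise the ratio is already $\ge k$). Set $p_a=g(a)-g(b_1)$ and $p_{b_j}=0$. Then $a$ has maximal marginal utility $g(b_1)$, so the greedy algorithm picks $a$ first; its final profit is at most $g(M)-p_a\le 1-g(a)+g(b_1)$. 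Since this equals the optimal demand utility (Lemma~\ref{lem:greedy-char}) and the set $B$ has utility $g(B)$, we get $g(B)\le 1-g(a)+g(b_1)\le 1-g(a)+\frac1k$, i.e.\ $g(a)+g(B)\le\frac{k+1}{k}$. No SWS-like inequality and no tournament selection of a special $b^*$ are needed; the profit comparison does the work in one step. If you do want to pursue your route, you must actually exhibit the chosen prices and show how greedy/GS forces $g(B\setminus\{b^*\}\mid b^*)\le g(B\setminus\{b^*\}\mid a)$ for a concrete $b^*$ --- as written this is a genuine gap, not a detail.
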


\begin{proof}
	For presentation simplicity, for an item $j$, we write $g(j)$ instead of $g(\{j\})$ to denote its value under valuation $g$.
	
	To see that $\rho_f \le \frac{2k}{k+1}$, consider the function $g$, where $g(a)=\frac{k+1}{2k}$, $g(b_j)=\frac{1}{k}$ for every $j \in [k]$, and the marginal value of every additional item from $B$ is $\frac{1}{2k}$ (except for the last item that contributes 0). 
	That is, for a set $S \subseteq B$ of size $\ell \geq 1$, $g(S)=\frac{\ell+1}{2k}$; for a set $S$ that contains $a$ and $\ell$ items from $B$, $g(S)=\min[\frac{k+1+\ell}{2k},1]$.
	Clearly, $g$ approximates $f$ within a ratio $\frac{2k}{k+1}$. 
	
	The function $g$ is clearly submodular. To see that it is GS, we show that it satisfies the triplet condition.
	Since the items in $B$ are symmetric, it suffices to prove that for every $i,j \in \{1,\ldots,k\}$ and item set $S$ s.t. $a,b_i,b_j \not\in S$, $g(a\mid S)+g(b_ib_j\mid S)=g(b_i\mid S)+g(ab_j\mid S)$.
	If $S=\emptyset$, then $g(a\mid S)+g(b_ib_j\mid S)=g(b_i\mid S)+g(ab_j\mid S)=\frac{k+4}{2k}$. 
	Else ($1 \leq |S| \leq k-2)$, $g(a\mid S)+g(b_ib_j\mid S)=g(b_i\mid S)+g(ab_j\mid S)=\frac{k+2}{2k}$.
	It follows that $g$ is GS.
	
	To prove that $\rho_f \ge \frac{2k}{k+1}$, let $g$ be a GS function that approximates $f$ from below. 
	W.l.o.g., $g(b_1) \ge g(b_j)$ for all $2 \le j \le k$. Also, we may assume that $g(a) > g(b_1)$, as otherwise the approximation ratio of $g$ is no better than $k \ge 2 > \frac{2k}{k+1}$.
	
	Consider a vector of prices $\prices$ with $p_{a} = g(a) - g(b_1)$, and $p_{b_j} = 0$ for all $j \in [k]$. The greedy algorithm first selects $a$ (and pays $g(a) - g(b_1)$), and then the remaining items. As $g$ is upper bounded by~1, the profit is at most $1 - g(a) + g(b_1)$. Alternatively, one can select the set $B$ and obtain a profit of $g(B)$. By Lemma~\ref{lem:greedy-char} (the greedy characterization of GS functions), we have that $1 - g(a) + g(b_1) \ge g(B)$, thus $1 + g(b_1) \ge g(B) + g(a)$. Since $g(b_1)\leq f(b_1) = \frac{1}{k}$ it follows that $\frac{k+1}{k} \ge g(B) + g(a)$. Consequently, $\min[g(a),g(B)] \le \frac{k+1}{2k}$, whereas $f(a) = f(B) = 1$.
\end{proof}


\section{Induction By Networks}
\label{appendix:induction-by-networks}

In Section \ref{sec:log-gap} we identified two operations (max-symmetrization and partial-symmetrization) that preserve GS.
Those form a special case of an operations called \emph{induction by networks} introduced by Kobayashi,  Murota and Tanaka \cite{kobayashi2007operations} in the context of $M$-convex functions on jump systems. Although the operation they identified is not explicitly about GS, their results can be mapped to an operation on GS. Here we provide a version of their definition as well as a direct proof (not going through $M$-convexity and jump systems) that it preserves GS. The proof will follow as a consequence of Claim \ref{claim:partial}.

\begin{definition}\label{def:induction_networks}
Consider a bipartite graph $G(U,V;E)$ with weights $w_e \in \reals$ for each edge $e \in E$ and a set function $v : 2^V \rightarrow \reals$. Given a  subset $M \subset E$ define $\partial_U(M) \subseteq U$ and $\partial_V(M) \subseteq V$ as the set of incident vertices on $U$ and $V$ respectively. With that we can define the induction of $v$ by $G$ as the function $f:2^U \rightarrow \reals$ such that for each $S \subseteq U$
$$f(S) = \max_{\text{matching } M \text{ s.t. } \partial_U(M) \subseteq S} \left[ v(\partial_V (M)) + \sum_{e \in M} w_e \right]$$
\end{definition}

We note that both max-symmetrization and partial-symmetrization are special cases of induction by the graphs in Figure \ref{fig:two_examples_of_induction}.

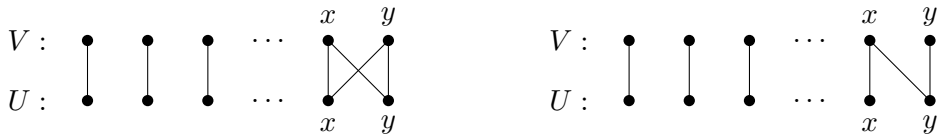
\begin{figure}[h]
\centering
\begin{tikzpicture}[scale=.8, inner sep=1.5pt] 
 \begin{scope}
 \node[circle,fill] at (0,0) {};
 \node[circle,fill] at (1,0) {};
 \node[circle,fill] at (2,0) {};
 \node at (3,0) {$\hdots$};
 \node[circle,fill] at (4,0) {};
 \node[circle,fill] at (5,0) {};
 \node at (4,-.4) {$x$};
 \node at (5,-.4) {$y$};
 
 \node[circle,fill] at (0,1) {};
 \node[circle,fill] at (1,1) {};
 \node[circle,fill] at (2,1) {};
 \node at (3,1) {$\hdots$};
 \node[circle,fill] at (4,1) {};
 \node[circle,fill] at (5,1) {};
 \node at (4,1.4) {$x$};
 \node at (5,1.4) {$y$};
 \node at (-1,0) {$U:$};
 \node at (-1,1) {$V:$};
 
 \draw (0,0)--(0,1);
 \draw (1,0)--(1,1);
 \draw (2,0)--(2,1);
 \draw (4,0)--(4,1);
 \draw (4,0)--(5,1);
 \draw (5,0)--(5,1);
 \draw (5,0)--(4,1);
\end{scope}

 \begin{scope}[xshift=9cm]
 \node[circle,fill] at (0,0) {};
 \node[circle,fill] at (1,0) {};
 \node[circle,fill] at (2,0) {};
 \node at (3,0) {$\hdots$};
 \node[circle,fill] at (4,0) {};
 \node[circle,fill] at (5,0) {};
 \node at (4,-.4) {$x$};
 \node at (5,-.4) {$y$};
 
 \node[circle,fill] at (0,1) {};
 \node[circle,fill] at (1,1) {};
 \node[circle,fill] at (2,1) {};
 \node at (3,1) {$\hdots$};
 \node[circle,fill] at (4,1) {};
 \node[circle,fill] at (5,1) {};
 \node at (4,1.4) {$x$};
 \node at (5,1.4) {$y$};
 \node at (-1,0) {$U:$};
 \node at (-1,1) {$V:$};

 \draw (0,0)--(0,1);
 \draw (1,0)--(1,1);
 \draw (2,0)--(2,1);
 \draw (4,0)--(4,1);
 \draw (5,0)--(5,1);
 \draw (5,0)--(4,1);
\end{scope}

\end{tikzpicture}
\caption{Max-symmetrization (left) and partial symmetrization (right) as inductions by the {corresponding} graphs with zero weight on the edges.}
\label{fig:two_examples_of_induction}
\end{figure}

\begin{theorem}\label{thm:induction_networks}
In the context of the Definition \ref{def:induction_networks}, if function $v$ is GS, then its induction by $G$ is also GS.
\end{theorem}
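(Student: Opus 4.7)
My plan is to realize the induced function $f$ as the output of a sequence of operations on $v$, each known to preserve GS: (i) adjoining dummy items with zero marginal, (ii) convolution with a valued unit-demand (hence OXS) auxiliary function, (iii) endowment with a single item, (iv) renaming, and (v) restriction (item deletion). The construction is a direct generalization of the proof of Claim~\ref{claim:partial}.

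First, pad $G$ so that it contains a distinguished perfect matching $\pi$ of zero-weight edges covering all vertices: for each $u \in U$ adjoin a dummy $v_u \in V$ with zero-weight edge $(u,v_u)$ and set $\pi(u)=v_u$; symmetrically, for each $v \in V$ adjoin a dummy $u_v \in U$ with $\pi(u_v)=v$ of zero weight. Extending $v$ trivially to the new $V$-items preserves GS. The original $f$ is recovered from the induction on the padded graph by restricting to sets $S \subseteq U_{\mathrm{orig}}$, and restriction preserves GS. Using only $\pi$, the induction is $f^{(0)}(S)=v(\pi(S)\cap V_{\mathrm{orig}})$, which is $v$ renamed, hence GS.

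The remaining (non-matching) edges of $G$ are processed one at a time. A direct matching case-analysis shows that including an edge $e=(u,v')$ of weight $w_e$, where $v' = \pi(u')$, transforms the current function $g$ into $q$ given by $q(S)=g(S)$ when $\{u,u'\}\cap S=\emptyset$; $q(S\cup\{u'\})=g(S\cup\{u'\})$; $q(S\cup\{u\})=\max\bigl[g(S\cup\{u\}),\ g(S\cup\{u'\})+w_e\bigr]$; and $q(S\cup\{u,u'\})=\max\bigl[g(S\cup\{u,u'\}),\ g(S\cup\{u'\})+w_e\bigr]$. Denote this operation by $\Omega_{u,u',w_e}$. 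It refines the weighted analogue of partial $(u',u)$-symmetrization by also updating sets that already contain both $u$ and $u'$, capturing the option to ``drop $u$ for the bonus $w_e$'' even when $u'$ is already present.

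The key step is to show $\Omega_{u,u',w}$ preserves GS, and this follows the template of the proof of Claim~\ref{claim:partial}. Adjoin a fresh item $z$ and a valued unit-demand function $\phi$ on $\{u, u', z\}$ with $\phi(\{u\})=t+w$ and $\phi(\{u'\})=\phi(\{z\})=t$ for a $t$ large enough to dominate all marginals of $g$. Valued unit-demand functions are OXS and thus GS, so the convolution $h=g \star \phi$ is GS; endowing with $u'$ to form $p(S)=h(S \cup \{u'\})-h(\{u'\})$ and renaming $z$ back to $u'$ yields, by a short computation that mirrors the one in the proof of Claim~\ref{claim:partial} with careful accounting of the weight $w$, precisely $\Omega_{u,u',w}(g)$. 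Iterating over all non-matching edges and then restricting to $U_{\mathrm{orig}}$ builds $f$ from $v$ via GS-preserving operations. The main obstacle is correctly identifying the operation $\Omega$ — noticing that it is a strict strengthening of weighted partial symmetrization — and verifying that the convolve-endow-rename construction produces exactly this operation.
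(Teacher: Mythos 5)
Your proposal takes a genuinely different route from the paper's. The paper decomposes induction by networks into three named primitives --- \emph{splitting} (copying each $V$-vertex once per incident edge), \emph{additive perturbation} (adding edge weights), and \emph{aggregation} (merging, per $U$-vertex, all its edges back into a single item via iterated partial symmetrization). You instead pad $G$ with a zero-weight perfect matching $\pi$, treat the base induction as a renaming of $v$, and then process the non-$\pi$ edges one at a time via a weighted variant $\Omega_{u,u',w}$ of partial symmetrization, realizing each $\Omega$ by the same convolve--endow--rename template as Claim~\ref{claim:partial}. Both proofs have Claim~\ref{claim:partial} as the engine; yours replaces the three-operation pipeline by a single parameterized local update applied repeatedly, and the verification that $\Omega$ is GS-preserving (via the valued unit-demand $\phi$ with $\phi(u)=t+w$, $\phi(u')=\phi(z)=t$) is correct --- the option ``$\phi$ takes $z$'' supplies the $g(S\cup\{u,u'\})$ term in the max, which is easy to drop but needed.

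There is, however, a genuine gap in the step that I would flag even though it mirrors a subtlety already present in the paper's own decomposition: the claim that ``a direct \emph{matching} case-analysis'' yields $\Omega$ is not accurate under the definition as written. Read literally, a matching $M$ in Definition~\ref{def:induction_networks} has distinct $V$-endpoints, and then $q(S\cup\{u\})=\max[g(S\cup\{u\}),\,g(S\cup\{u'\})+w_e]$ over-counts whenever the optimal solution for $g(S\cup\{u'\})$ already reaches $v'=\pi(u')$ through a heavier non-matching edge from some $u''\in S$: one cannot additionally route $u\to v'$. Concretely, with $v$ additive on $V=\{v_1,v_2,v_3\}$, $U=\{u_1,u_2\}$, edges $(u_1,v_1,0),(u_2,v_3,0),(u_2,v_2,100),(u_1,v_2,50)$, the matching induction gives $f(\{u_1,u_2\})=102$, while iterating $\Omega$ (in any order) yields $151$. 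What $\Omega$ \emph{does} compute is the ``$U$-side matching'' version, where $V$-endpoints may repeat, $v$ is evaluated on the \emph{set} of endpoints, and all edge weights are summed; this is exactly what the paper's split+aggregate pipeline computes as well (after splitting, the copies of a shared $V$-vertex become perfect substitutes, so two heavy edges into the same original $V$-vertex both have their weights collected while $v$ credits the vertex only once). So your proof is consistent with the paper's proof in what it establishes, but you should not advertise the derivation of $\Omega$ as a matching case-analysis --- either state explicitly that the definition enforces disjointness only on the $U$-side, or add an argument reducing the bipartite-matching version to the $U$-side version.
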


We will prove Theorem \ref{thm:induction_networks} by writing induction by networks as a composition of three elementary operations that preserve GS. We will borrow the terminology in \cite{kobayashi2007operations} and call the first operation \emph{splitting}. Recall that we are using the notation $Sx$ to represent $S \cup \{x\}$.\\

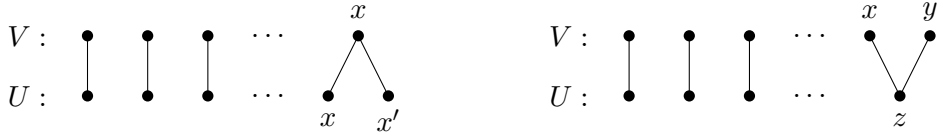
\begin{figure}[h]
\centering
\begin{tikzpicture}[scale=.8, inner sep=1.5pt] 
 \begin{scope}[xshift=9cm]
 \node[circle,fill] at (0,0) {};
 \node[circle,fill] at (1,0) {};
 \node[circle,fill] at (2,0) {};
 \node at (3,0) {$\hdots$};
 \node[circle,fill] at (4.5,0) {};
 \node at (4.5,-.4) {$z$};

 \node[circle,fill] at (0,1) {};
 \node[circle,fill] at (1,1) {};
 \node[circle,fill] at (2,1) {};
 \node at (3,1) {$\hdots$};
 \node[circle,fill] at (4,1) {};
 \node[circle,fill] at (5,1) {};
 \node at (4,1.4) {$x$};
 \node at (5,1.4) {$y$};
 \node at (-1,0) {$U:$};
 \node at (-1,1) {$V:$};
 
 \draw (0,0)--(0,1);
 \draw (1,0)--(1,1);
 \draw (2,0)--(2,1);
 \draw (4.5,0)--(4,1);
 \draw (4.5,0)--(5,1);
 \end{scope}

 \begin{scope}
 \node[circle,fill] at (0,0) {};
 \node[circle,fill] at (1,0) {};
 \node[circle,fill] at (2,0) {};
 \node at (3,0) {$\hdots$};
 \node[circle,fill] at (4,0) {};
 \node[circle,fill] at (5,0) {};
 \node at (4,-.4) {$x$};
 \node at (5,-.4) {$x'$};
 
 \node[circle,fill] at (0,1) {};
 \node[circle,fill] at (1,1) {};
 \node[circle,fill] at (2,1) {};
 \node at (3,1) {$\hdots$};
 \node[circle,fill] at (4.5,1) {};
 \node at (4.5,1.4) {$x$};
 \node at (-1,0) {$U:$};
 \node at (-1,1) {$V:$};

 \draw (0,0)--(0,1);
 \draw (1,0)--(1,1);
 \draw (2,0)--(2,1);
 \draw (4,0)--(4.5,1);
 \draw (5,0)--(4.5,1);
\end{scope}

\end{tikzpicture}
\caption{Splitting (left) and aggregation (right).}
\label{fig:splitting_and_aggregation}
\end{figure}

\begin{lemma}[Splitting]
Given a GS function $v:2^V \rightarrow \reals$ and an item $x \in V$, create a copy $x'$ of item $x$ and define $f:2^{Vx'} \rightarrow \reals$ such that for all sets $S \subseteq V$ we have $f(S) = v(S)$ and $f(Sx') = v(Sx)$. Then function $f$ is also GS.
\end{lemma}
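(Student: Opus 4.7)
The plan is to realize $f$ as the output of the two-item max-symmetrization operation of Definition~\ref{def:max-symmetrization} applied to a suitable GS function on the ground set $V \cup \{x'\}$, and then invoke Lemma~\ref{lem:twoItemSymmetrization} to conclude that $f$ is GS. The key observation that motivates this approach is that the defining formula for $f$ already makes $f$ an $(x, x')$-symmetric function, since for every $S \subseteq V \setminus \{x\}$ both $f(Sx) = v(Sx)$ and $f(Sx') = v(Sx)$ hold.

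First I would extend $v$ to $\tilde v : 2^{V \cup \{x'\}} \to \reals$ by declaring $x'$ a dummy item of zero marginal: $\tilde v(T) := v(T \setminus \{x'\})$. Because every marginal of $\tilde v$ involving $x'$ vanishes, the triplet condition of Lemma~\ref{lem:triplet-char} for $\tilde v$ either holds trivially or reduces to the corresponding triplet condition for $v$ after deleting $x'$; hence $\tilde v$ is GS. Next, I would apply two-item max-symmetrization to $\tilde v$ with respect to the pair $(x, x')$, obtaining a function $h$ that is GS by Lemma~\ref{lem:twoItemSymmetrization}. It then remains to check $h = f$. Unfolding Definition~\ref{def:max-symmetrization}, for every $S \subseteq V \setminus \{x\}$ one obtains $h(S) = \tilde v(S) = v(S)$, $h(Sxx') = \tilde v(Sxx') = v(Sx)$, and $h(Sx) = h(Sx') = \max\{\tilde v(Sx), \tilde v(Sx')\} = \max\{v(Sx), v(S)\} = v(Sx)$, where the last equality uses only the monotonicity of $v$. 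The definition of $f$ produces the same four values: $f(S) = v(S)$, $f(Sxx') = v((Sx) \cup \{x\}) = v(Sx)$, $f(Sx) = v(Sx)$, and $f(Sx') = v(Sx)$. Thus $h \equiv f$, and $f$ is GS.

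The main obstacle I anticipate is conceptual rather than technical: the short argument above hinges on recognizing that adding a copy of $x$ is precisely the max-symmetrization of the dummy extension $\tilde v$, which lets us route the proof through a single invocation of Lemma~\ref{lem:twoItemSymmetrization}. A direct attempt via the triplet condition of Lemma~\ref{lem:triplet-char} applied to $f$ is feasible but requires a tedious case split on where $x$ and $x'$ fall relative to the triplet $\{i,j,k\}$ and the background set $S$, and offers no insight beyond what the short argument above already gives. A final sanity check I would perform is that the monotonicity step $\max\{v(Sx), v(S)\} = v(Sx)$ is the only place where the monotonicity assumption on $v$ is used, confirming that the argument does not silently need more than GS of $v$.
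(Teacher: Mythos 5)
Your proof is correct, and it takes a genuinely different route from the paper's. The paper argues directly from the demand characterization (Definition~\ref{def:gs_kc}): it observes that $x$ and $x'$ are perfect substitutes under $f$, so the demand $D_f(\prices)$ can be read off from $D_v(\tilde{\prices})$ where $\tilde p_x = \min(p_x, p_{x'})$, and the GS condition transfers along this correspondence. You instead reduce the lemma to the paper's earlier machinery: extend $v$ by a zero-marginal dummy $x'$ (standard, and used inside the paper's own proof of Claim~\ref{claim:partial}), apply two-item max-symmetrization to the pair $(x,x')$, and verify by an easy four-case unfolding (using only monotonicity of $v$ for the step $\max\{v(Sx),v(S)\}=v(Sx)$) that the result is exactly $f$; then Lemma~\ref{lem:twoItemSymmetrization} gives GS. Your route is shorter and arguably more conceptual, exposing the splitting lemma as a special case of max-symmetrization; the paper's direct demand argument is more self-contained and does not lean on Lemma~\ref{lem:twoItemSymmetrization} (which was itself proved via convolution and endowment). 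There is no circularity in your approach: Lemma~\ref{lem:twoItemSymmetrization} is established independently of the splitting lemma, and the appendix uses Claim~\ref{claim:partial} (not splitting) to prove aggregation. One small stylistic note: since the paper's presentation in Appendix~\ref{appendix:induction-by-networks} treats splitting as an elementary, from-scratch building block that sits alongside the symmetrization lemmas, your reduction reverses that narrative direction, but this is a matter of exposition, not correctness.
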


\begin{proof}
Since items $x$ and $x'$ are identical and perfect substitutes we can obtain the demand of $f$ under any given price from the demand of $v$. 
To compute the demand of $f$ under a price vector $\mathbf{p} \in \mathbb{R}^{Vx'}$, first let $\mathbf{\tilde p}$ be a vector in $\mathbb{R}^V$ where $\tilde p_{x} = \min (p_x, p_{x'})$ and $\tilde p_{i} = p_i$ for any other $i$. Now, $D_f(\mathbf{p})$ can be obtained by $D_v(\mathbf{\tilde p})$ by replacing $x$ in each set $S \in D_v(\mathbf{\tilde p})$ by $x$ or $x'$, whichever is cheaper. If they have the same price, replace each set containing $x$ by two sets, one containing $x$ and the other containing $x'$. 
Hence, the characterization of GS in Definition \ref{def:gs_kc} directly extends from $v$ to $f$.
\end{proof}

\begin{lemma}[Aggregation]
Given a GS function $v:2^V \rightarrow \reals$ and items $x,y \in V$, consider an extra item $z$ and define $U = Vz \setminus \{x,y\}$. Now define  $f:2^{U} \rightarrow \reals$ such that for all sets $S \subseteq U \setminus \{z\}$ we have $f(S) = v(S)$ and $f(Sz) = \max[v(Sx), v(Sy)]$. Then function $f$ is also GS.
\end{lemma}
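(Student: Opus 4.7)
The plan is to express aggregation as a composition of two operations that each preserve GS: the partial $(x,y)$-symmetrization from Claim~\ref{claim:partial}, followed by the restriction of a GS function to a subset of its items (with relabeling). The key observation is that aggregation is exactly ``partial $(x,y)$-symmetrization, keep only the $Sy$-slot, then rename $y$ to $z$.''

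First, I would apply partial $(x,y)$-symmetrization to $v$ to obtain a function $q$ on $V$ that satisfies, for every $S \subseteq V \setminus \{x,y\}$: $q(S) = v(S)$, $q(Sx) = v(Sx)$, $q(Sy) = \max[v(Sx), v(Sy)]$, and $q(Sxy) = v(Sxy)$. By Claim~\ref{claim:partial}, $q$ is GS.

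Second, I would argue that the restriction of a GS function to subsets of a prescribed sub-universe is again GS. This is immediate from the triplet characterization of Lemma~\ref{lem:triplet-char}: submodularity is inherited by restriction, and any instance of the triplet condition whose set $S$ and items $i,j,k$ all lie in the restricted universe is a special case of the triplet condition for the original function. (Equivalently, via Definition~\ref{def:gs_kc}, one may drive the prices of removed items to $+\infty$ and inherit the gross-substitutes property for the remaining prices.) Applying this to $q$, restrict to subsets of $V \setminus \{x\}$ and rename item $y$ to $z$; write $\hat{q}$ for the resulting function on $U = (V \setminus \{x,y\}) \cup \{z\}$. For every $S \subseteq V \setminus \{x,y\}$ we then have $\hat{q}(S) = q(S) = v(S) = f(S)$ and $\hat{q}(Sz) = q(Sy) = \max[v(Sx), v(Sy)] = f(Sz)$. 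Hence $\hat{q} = f$, so $f$ is GS.

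There is no real obstacle here once Claim~\ref{claim:partial} is available: the aggregation value $\max[v(Sx), v(Sy)]$ is precisely what partial $(x,y)$-symmetrization produces in the $Sy$-slot, and the only additional fact needed is that dropping item $x$ from the universe preserves GS, which is immediate from either the triplet or the demand characterization. An alternative route, essentially parallel to the proof of Claim~\ref{claim:partial}, would be to convolve $v$ (extended to have $z$ as a zero-marginal item) with the unit-demand function assigning a suitable common value to $\{x,y,z\}$, and then endow the resulting GS function by $\{x\}$ and rename $z$ to $z$; but the partial-symmetrization-plus-restriction route is shorter and reuses existing machinery verbatim.
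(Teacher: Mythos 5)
Your proof takes essentially the same route as the paper: apply partial $(x,y)$-symmetrization to obtain a GS function $q$, then restrict to a sub-universe and relabel one item to $z$. In fact you have the orientation right where the paper's written proof contains a slip: the paper says to restrict $q$ to $V\setminus\{y\}$ and rename $x$ to $z$, which under the definition in Claim~\ref{claim:partial} (where the $Sy$-slot, not the $Sx$-slot, carries $\max[v(Sx),v(Sy)]$) would yield $f(Sz)=v(Sx)$ rather than the intended maximum; your version (restrict to $V\setminus\{x\}$, rename $y\to z$) produces exactly $f$. Your justification that restriction and relabeling preserve GS, via the triplet characterization or via driving removed prices to $+\infty$, is sound and matches what the paper invokes implicitly.
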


\begin{proof}
Let $q$ be the partial $(x,y)$-symmetrization of $v$ defined in Claim \ref{claim:partial}. The function $q$ is known to be GS by that claim. Now restrict $q$ to $V \setminus \{y\}$ and rename item $x$ to $z$. The obtained function is exactly $f$. Since restriction and item-renaming preserve GS, then $f$ is also GS.
\end{proof}

\begin{lemma}[Additive perturbation]
Given a GS function $v:2^V \rightarrow \reals$ and weights $w_i \in \reals$ for each $i \in V$, define $f:2^V \rightarrow \reals$ such that $f(S) = v(S) + \sum_{i \in S} w_i$. Then function $f$ is also in $GS$.
\end{lemma}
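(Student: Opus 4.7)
The plan is to exploit the Kelso--Crawford definition (Definition~\ref{def:gs_kc}) and observe that adding an additive valuation to $v$ merely shifts the price vector at which each demand set is attained. Concretely, for any set $S$ and price vector $\mathbf{p} \in \reals^V$, the $f$-utility satisfies
$$u_f(S,\mathbf{p}) \;=\; f(S) - \sum_{i\in S} p_i \;=\; v(S) - \sum_{i\in S}(p_i - w_i) \;=\; u_v(S,\,\mathbf{p}-\mathbf{w}),$$
where $\mathbf{p}-\mathbf{w}$ denotes coordinate-wise subtraction. Taking $\arg\max$ over subsets of $V$ on both sides gives $D_f(\mathbf{p}) = D_v(\mathbf{p}-\mathbf{w})$.

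With this identification of demand sets, the GS property transfers immediately. Suppose $\mathbf{p} \le \mathbf{q}$ and $S \in D_f(\mathbf{p}) = D_v(\mathbf{p}-\mathbf{w})$. Since $\mathbf{p}-\mathbf{w} \le \mathbf{q}-\mathbf{w}$ and $v$ is GS, there exists a set $T \in D_v(\mathbf{q}-\mathbf{w}) = D_f(\mathbf{q})$ that contains every item $i$ with $(p-w)_i = (q-w)_i$, i.e.\ every item $i$ with $p_i = q_i$. This is exactly the condition required by Definition~\ref{def:gs_kc} applied to $f$, so $f$ is GS.

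The main point is that there is no real obstacle: the additive perturbation is absorbed into the prices and the argument collapses, in one step, to the GS property of $v$. An equally short alternative is to verify the triplet characterization (Lemma~\ref{lem:triplet-char}) directly. Submodularity is preserved under addition of an additive function, and for any $S\subseteq V$ and any $T\subseteq V\setminus S$ the marginals satisfy $f(T\mid S) = v(T\mid S) + \sum_{i\in T} w_i$. Hence for distinct $i,j,k\notin S$, the extra contribution $w_i + w_j + w_k$ appears identically on both sides of the triplet inequality and cancels, so the triplet condition for $f$ reduces verbatim to that for $v$.
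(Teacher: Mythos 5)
Your primary argument --- absorbing the additive perturbation into the price vector via the identity $D_f(\mathbf{p}) = D_v(\mathbf{p}-\mathbf{w})$ --- is correct and takes a genuinely different route from the paper, which instead invokes the triplet characterization (Lemma~\ref{lem:triplet-char}) and notes that the term $w_i+w_j+w_k$ cancels from both sides of Equation~\eqref{eq:no-unique-max}; that triplet-cancellation argument is exactly the ``equally short alternative'' you sketch at the end. The two proofs buy slightly different things. The price-shift argument is more conceptual: it explains \emph{why} additive perturbation preserves GS (it is just a reparametrization of the price space), and it shows at one stroke that any property expressible solely in terms of the demand correspondence is invariant under such perturbations. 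The triplet argument is more local, and is better matched to the way the paper actually verifies GS elsewhere --- by checking Equation~\eqref{eq:no-unique-max} on explicitly constructed functions. One detail your argument uses implicitly and correctly: Definition~\ref{def:gs_kc} places no sign constraint on prices, so $\mathbf{p}-\mathbf{w}$ is always an admissible price vector even when some $w_i$ are negative, which is the paper's stated convention.
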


\begin{proof}
It follows directly from Lemma \ref{lem:triplet-char} since in each side of equation \eqref{eq:no-unique-max} a term $w_i + w_j + w_k$ will be added so they will cancel out. Hence for any weights $w_i$ (positive or negative) equation \eqref{eq:no-unique-max} holds for $f$ if and only if it holds for $v$. 
\end{proof}

\begin{proof}[Proof of Theorem \ref{thm:induction_networks}]
Induction by a bipartite graph $G$ is equivalent to an application of splitting, additive perturbation and aggregation. This can be done in three steps. See Figure \ref{fig:proof_of_induction_thm} for an illustration of the procedure. In the first step, apply splitting repeatedly to produce $k_v$ copies of each item $v \in V$ where $k_v$ is the number of edges  incident to $v$ in $G$. This leads to a function $f_1:2^E \rightarrow \reals$ given by $f_1(T) = v(\partial_V(T))$ for all $T \subseteq E$.

In the second step, apply the additive perturbation technique to obtain $f_2:2^E \rightarrow \reals$ such that $f_2(T) = f_1(T) + \sum_{e \in T} w_e$.

Finally, apply aggregation repeatedly so that all edges $e$ that have the same endpoint in $U$ are aggregated to the same item. Rename this aggregated item to the corresponding element $u \in U$. Let $f:2^U \rightarrow \reals$ be the function obtained by this procedure. Note that $f$ is exactly the induction of $v$ by $G$. Since each operation preserves GS, then its composition is also GS.
\end{proof}


\begin{figure}[h]
\centering
\begin{tikzpicture}[scale=.8, inner sep=1.5pt] 
 \begin{scope}
 \node[circle,fill] at (0,0) {};
 \node[circle,fill] at (1,0) {};
 \node[circle,fill] at (2,0) {};

 \node[circle,fill] at (0,1) {};
 \node[circle,fill] at (1,1) {};
 \node[circle,fill] at (2,1) {};
 \node at (-1,0) {$U:$};
 \node at (-1,1) {$V:$};
 
 \draw (0,0)--(0,1);
 \draw (1,0)--(1,1);
 \draw (1,0)--(0,1);
 \draw (2,0)--(2,1);
 \draw (2,0)--(1,1);
 \draw (2,0)--(0,1);
 
 \node at (0,1.4) {$a$};
 \node at (1,1.4) {$b$};
 \node at (2,1.4) {$c$};
 \node at (0,-.4) {$x$};
 \node at (1,-.4) {$y$};
 \node at (2,-.4) {$z$};

 \end{scope}

 \begin{scope}[xshift=7cm,yshift=-1.5cm]
 \node at (1,4.4) {$a$};
 \node at (3.5,4.4) {$b$};
 \node at (5,4.4) {$c$};

 \node[circle,fill] at (1,4) {};
 \node[circle,fill] at (3.5,4) {};
 \node[circle,fill] at (5,4) {};

 \node[circle,fill] at (0,3) {};
 \node[circle,fill] at (1,3) {};
 \node[circle,fill] at (2,3) {};
 \node[circle,fill] at (3,3) {};
 \node[circle,fill] at (4,3) {};
 \node[circle,fill] at (5,3) {};

 \node at (-1,4) {$V:$};
 \node at (-1,3) {$E:$};
 \node at (-1,1) {$E:$};
 \node at (-1,0) {$U:$};
  
 \draw (1,4)--(0,3);
 \draw (1,4)--(1,3);
 \draw (1,4)--(2,3);
 \draw (3.5,4)--(3,3);
 \draw (3.5,4)--(4,3);
 \draw (5,4)--(5,3);

 \node[circle,fill] at (0,1) {};
 \node[circle,fill] at (1,1) {};
 \node[circle,fill] at (2,1) {};
 \node[circle,fill] at (3,1) {};
 \node[circle,fill] at (4,1) {};
 \node[circle,fill] at (5,1) {};

 \draw[dashed] (0,1)--(0,3);
 \draw[dashed] (1,1)--(1,3);
 \draw[dashed] (2,1)--(2,3);
 \draw[dashed] (3,1)--(3,3);
 \draw[dashed] (4,1)--(4,3);
 \draw[dashed] (5,1)--(5,3);
 
 \node at (0,2) {\small{$w_{ax}$}};
 \node at (1,2) {\small{$w_{ay}$}};
 \node at (2,2) {\small{$w_{az}$}};
 \node at (3,2) {\small{$w_{by}$}};
 \node at (4,2) {\small{$w_{bz}$}};
 \node at (5,2) {\small{$w_{cz}$}};

 \node[circle,fill] at (0,0) {};
 \node[circle,fill] at (2,0) {};
 \node[circle,fill] at (3.5,0) {};
 \node at (0,-.4) {$x$};
 \node at (2,-.4) {$y$};
 \node at (3.5,-.4) {$z$};

 \draw (0,0) -- (0,1);
 \draw (2,0) -- (1,1);
 \draw (2,0) -- (3,1);
 \draw (3.5,0) -- (2,1);
 \draw (3.5,0) -- (4,1);
 \draw (3.5,0) -- (5,1);
 
 

\end{scope}

\end{tikzpicture}
\caption{Graph $G$ (left) and decomposition in the proof of Theorem \ref{thm:induction_networks} (right).}
\label{fig:proof_of_induction_thm}
\end{figure}
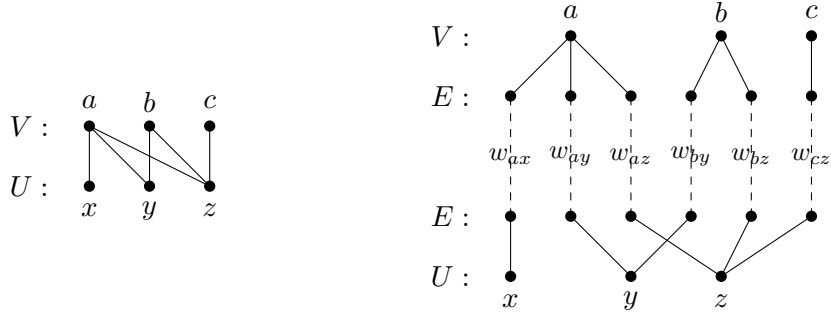

\section{(Non)-Closedness of SWS Valuations}
\label{app:sws-closed}
\begin{claim}
	The class of SWS functions is not closed under average, neither is it closed under convolution.
\end{claim}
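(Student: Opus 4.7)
The plan is to prove both non-closure statements by exhibiting small explicit counterexamples on three items $\{a,b,c\}$.

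For the average, I would reuse the very example already flagged in the discussion immediately after the statement of Theorem~\ref{thm:symmetrization}: let $f_1$ be the BA function with item values $(v_a,v_b,v_c)=(2,2,0)$ and budget $2$, and let $f_2$ be the BA function with values $(2,0,2)$ and budget $2$. Both are GS, hence SWS by Proposition~\ref{pro:GSisSWS} (this is a short triplet-condition check on three items). Their average is the BA function with values $(2,1,1)$ and budget $2$. In this average, items $b$ and $c$ are symmetric by inspection, but $f(b\mid a)=f(ab)-f(a)=0$ while $f(b\mid c)=f(bc)-f(c)=1$, so the weak-substitutes inequality $f(b\mid a)\ge f(b\mid c)$ is violated and the average is not SWS.

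For the convolution, the obstacle is that the convolution of two GS functions is GS (hence SWS), so at least one of the two functions must be non-GS while still SWS. The strategy is to exploit the observation that any monotone, normalized set function whose singleton values are pairwise distinct has no symmetric pair at all, and is therefore \emph{vacuously} SWS (note that SWS, as defined in Definition~\ref{def:weakSubstitutes}, places no submodularity requirement). I would then engineer two such functions $f_1,f_2$ on $\{a,b,c\}$ so that the convolution $h=f_1\star f_2$ creates a symmetric pair: namely, arrange $\max(f_1(a),f_2(a))=\max(f_1(b),f_2(b))$ to force $h(a)=h(b)$, and matching equalities at the $ac,bc$ level to force $h(ac)=h(bc)$, while simultaneously inflating $h(ab)$ by choosing $f_1(ab)$ or $f_2(ab)$ much larger than $f_1(a)+f_1(b)$. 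The non-submodular freedom on the pairs is essential, since it decouples $h(ab)$ from $h(ac), h(bc)$.

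A concrete instantiation that I expect to work is $f_1=(0,4,3,2,10,5,4,10)$ and $f_2=(0,3,4,2,10,4,5,10)$, listing values on $(\emptyset,a,b,c,ab,ac,bc,abc)$. Each function is monotone and has three distinct singleton values, so both are vacuously SWS. A direct calculation of $h=f_1\star f_2$ (taking a max over all partitions for each set) yields $h(a)=h(b)=4$, $h(c)=2$, $h(ab)=10$, $h(ac)=h(bc)=6$, so $a$ and $b$ are symmetric under $h$; yet $h(a\mid b)=h(ab)-h(b)=6$ while $h(a\mid c)=h(ac)-h(c)=4$, so $h(a\mid c)<h(a\mid b)$ violates weak substitutes. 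The hard part is the convolution case, specifically recognizing that one must step outside both GS and submodular inputs and exploit the vacuous-SWS loophole; once the mechanism is in place, verification is a routine computation of a few convolution values.
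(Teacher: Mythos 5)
Your average counterexample is exactly the one the paper uses (the $(2,2,0)$ and $(2,0,2)$ budget-$2$ BA functions), so that part is the same.

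For the convolution case you take a genuinely different route, and your instance checks out: with $f_1=(0,4,3,2,10,5,4,10)$ and $f_2=(0,3,4,2,10,4,5,10)$ on $(\emptyset,a,b,c,ab,ac,bc,abc)$, both are monotone and normalized with pairwise distinct singleton values (so no symmetric pair, hence vacuously SWS), and the convolution $h$ indeed has $h(a)=h(b)=4$, $h(c)=2$, $h(ab)=10$, $h(ac)=h(bc)=6$, giving $a=_h b$ yet $h(a\mid c)=4<6=h(a\mid b)$. Both you and the paper exploit the same ``vacuous SWS'' loophole (SWS places no constraint when there is no symmetric pair), but you do it differently. The paper's construction is a reusable recipe: take any non-SWS valuation $f$ on $M$, extend it to $f_1$ on $M\cup\{j\}$ by giving $j$ marginal contributions that destroy all symmetries of $f$, take $f_2$ additive and supported only on $j$ with huge value, and observe that the convolution restricted to $2^M$ equals $f$ while adding no new symmetries on subsets of $M$. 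Your construction is an ad hoc numerical example engineered so that the pairwise $\max$'s create a symmetry at the singleton and pair level while a large $f_i(ab)$ value breaks weak substitutes; it is shorter to verify but does not generalize as transparently. Both hinge on stepping outside GS (indeed outside submodular: the paper's $f_1$ also has $f_1(d\mid b)>f_1(d\mid\emptyset)$), which is necessary since the convolution of GS functions is GS and hence SWS. Either proof is acceptable; yours carries the same insight in a more concrete but less structural form.
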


\begin{proof}
	To see that SWS is not closed under average, consider two budget additive functions $f_1,f_2$ over 3 items $a,b,c$, where $f_1$ gives value 2 to $a,b$ and value 0 to $c$, and $f_2$ gives value 2 to $a,c$ and value 0 to $b$. Both functions have budget $2$. $f_1$ and $f_2$ are SWS, but the average of them (which gives value $2$ to $a$ and value $1$ to $b,c$) is not SWS, as $b$ and $c$ belong to the same symmetry class, but the marginal value of $b$ with respect to $c$ is greater than its marginal value with respect to $a$.
	
	We next show that SWS is not closed under convolution.
	The idea is to take a function $f$ over a set of items $M$ that has symmetries but is not SWS, then add an item $j$ that breaks all symmetries in $f$. The obtained function, call it $f_1$, is trivially SWS.
	Now let $f_2$ be a function that gives item $j$ a high value, and all other items 0 ($f_2$ is SWS), and consider the convolution over $f_1,f_2$; call it $g$.
	Given a set $S$, $g$ will assign $j$ to $f_2$ and all other items to $f_1$, so its value on subsets of $M$ coincides with $f$, which is not SWS. 
	For concreteness, let $f$ be the budget additive function over items $a,b,c$ with respective values $2,1,1$ and budget $2$ (see the average function above), and let $f_1$ be a function over $a,b,c,d$, such that $f_1(S)=f(S)$ for all $S \subseteq \{a,b,c\}$, and the marginal value of $d$ with respect to all subsets is 0, except for its marginal value with respect to $b$, which is $1$ (thus, the symmetry between $b,c$ breaks). Let $f_2$ be an additive function with value $100$ for $d$ and 0 for all other items. 
\end{proof}



\section{Proof of Lemma~\ref{lem:approach1}}
\label{sec:advanced-lemma}
\begin{proof} (of Lemma~\ref{lem:approach1})
	Color the sequence $\{0, \ldots, T\}$ periodically by $\log m$ colors $\{0, \ldots, \log m - 1\}$, where $t$ belongs to color class $C_i$ if $t = i$ modulo $\log m$. Decompose $H$ into $\log m$ collections $H^i$ (for $0 \le i < \log m$), where collection $H^i$ has those functions $h_t$ with $t\in C_i$. Let $f^i$ be the function satisfying $f^i(S) = \sum_{t\in C_i} h_t(S)$. Then we have the following sandwich property for every set $S$:
	
	$$\alpha \cdot \max_{0 \le i < \log m} f^i(S)  \le f(S) \le \beta \cdot \sum_{0 \le i < \log m} f^i(S)$$
	
	By monotonicity and concavity of $g$, {and using the assumptions that $0 \le \alpha \le 1$ and $\beta \ge 1$, we have}:
	
	$$\alpha \cdot \max_{0 \le i < \log m} g(f^i(S))  \le g(f(S)) \le \beta \cdot \sum_{0 \le i < \log m} g(f^i(S))$$
	
	We shall need the following claim:
	
	\begin{claim}
		\label{claim:constantGS}
		For every $i$, the function $g(f^i)$ can be approximated by a GS function $h^i$ within a constant factor.
	\end{claim}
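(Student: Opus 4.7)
The plan is to exploit the scale separation within $C_i$ (consecutive $t$'s differ by $\log m$, so their weights $2^t$ differ by a factor of at least $m$) and reduce the problem to approximating $g(\hat h^i)$, where $\hat h^i(S) := \max_{t \in C_i} h_t(S)$. First, I would establish that $\hat h^i(S) \le f^i(S) \le 3\, \hat h^i(S)$ for every $S$: letting $t^*(S)$ be the largest $t \in C_i$ with $h_t(S) > 0$, we have $h_{t^*}(S) \ge 2^{t^*}$ (since $h_t$'s marginals are $0$ or $2^t$), while for $t = t^* - k \log m$ with $k \ge 1$, $h_t(S) \le m \cdot 2^t = 2^{t^* - (k-1)\log m}$. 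Summing this geometric series yields $\sum_{t < t^*,\, t \in C_i} h_t(S) \le 2 \cdot 2^{t^*} \le 2\, h_{t^*}(S)$, hence $f^i(S) \le 3\, \hat h^i(S)$. By monotonicity and concavity of $g$ with $g(0)=0$, this lifts to $g(\hat h^i) \le g(f^i) \le 3\, g(\hat h^i)$, so it suffices to approximate $g(\hat h^i)$ by a GS function within a constant factor.

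Next, I would observe that each composition $g \circ h_t$ is GS: since $h_t = 2^t \cdot \mathrm{rank}_{M_t}$ is a scaled MRF, $g \circ h_t$ equals the concave monotone function $g_t(x) := g(2^t x)$ applied to $\mathrm{rank}_{M_t}$, which is GS by Proposition~\ref{lem:concave-mrf-gs}. I would then take $h^i$ to be the convolution (welfare function) of the family $\{g \circ h_t : t \in C_i\}$. Since the convolution of GS functions is GS, $h^i$ is GS. The lower bound $h^i(S) \ge \max_t g(h_t(S)) = g(\hat h^i(S))$ holds immediately by assigning all items in $S$ to the single maximizing $t$ in the partition.

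The main obstacle is the upper bound $h^i(S) \le C \cdot g(\hat h^i(S))$ for some absolute constant $C$. For the optimal partition $(S_t)$ attaining $h^i(S)$, the term for $t = t^*(S)$ contributes at most $g(h_{t^*}(S)) = g(\hat h^i(S))$, while each $t < t^*(S)$ in $C_i$ satisfies $h_t(S_t) \le m \cdot 2^t \le 2^{t^*(S) - (k-1)\log m}$ for $t = t^* - k \log m$. The challenge is bounding $\sum_{k \ge 1} g\bigl(h_{t^* - k\log m}(S_{t^* - k\log m})\bigr)$ by a constant multiple of $g(\hat h^i(S))$ uniformly over all concave monotone $g$; naive bounds using only monotonicity ($g(x/c) \le g(x)$) yield a factor of $|C_i|$, which can be as large as $\Theta(\log m)$. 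Exploiting more carefully the quantized nature of MRF values (each $h_t(S_t)$ is an integer multiple of $2^t$) together with the geometric decay of the $h_t(S)$ across consecutive levels in $C_i$ is expected to yield the desired constant factor; for slowly-growing $g$ (such as $\log(1{+}x)$), where the welfare of $g(h_t)$'s by itself does not concentrate well, a hybrid construction combining the welfare with an auxiliary unit-demand-style GS function built from the singleton values $v_j = f^i(\{j\})$ may be required to cover the remaining cases.
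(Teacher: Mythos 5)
Your proposal has a genuine gap, and you essentially acknowledge it yourself: the upper bound $h^i(S) \le C \cdot g(\hat h^i(S))$ for the convolution $h^i$ of $\{g \circ h_t\}_{t \in C_i}$ is never established, and the discussion makes clear why a proof is not forthcoming. When $g$ grows slowly (e.g., $g(x) = \log(1+x)$, or $g$ eventually constant), the values $g(h_t(S_t))$ for different $t \in C_i$ can all be within a constant factor of one another even though the $h_t(S_t)$ themselves are geometrically separated; in that regime the welfare function really does spread demand across $\Theta(\log m)$ levels, and the convolution genuinely overshoots $g(\hat h^i)$ by a $\Theta(\log m)$ factor. Quantization of MRF values does not rescue this: the phenomenon has nothing to do with integrality and everything to do with $g$ being flat. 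So the middle step ``convolution of $\{g \circ h_t\}$'' cannot be made to work on its own, which is exactly the obstruction the paper's proof is designed around.

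The paper takes a different route that avoids this problem at two points. First, instead of convolving, it restricts each $h_t$ to the items whose {\em highest} support level in $C_i$ is $t$, producing functions $h'_t$ with pairwise disjoint supports; the sum $\sum_t h'_t$ is then automatically GS (Observation~\ref{obs:gs-disjoint-sum}) and approximates $f^i$ within factor~2, with no need for the welfare relaxation. Second, and crucially, it introduces a merging step keyed to the quantities $\ell_t = g(2^t)$ and $u_t = g(2^t m)$ (rounded to powers of~2): any maximal run of levels on which $g$ is ``flat'' ($\ell_{t_1} = \ell_{t_s}$) is collapsed into a single unit-demand-style function taking the constant value $2^{t_1}$. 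After merging, the remaining $u^i_j$ satisfy $u^i_j \le \tfrac12 u^i_{j+3}$, which is exactly the geometric decay needed to bound $\sum_j g(h'_j(S))$ by a constant multiple of $\max_j g(h'_j(S))$. This merging step is the ``hybrid construction'' you gesture at in your last sentence, but without spelling it out the argument does not close; the merged constant-valued pieces together with disjoint-support summation are what make the constant-factor bound go through. If you want to salvage your approach, replacing the convolution by the paper's disjoint-support restriction and adding the merge operation to handle flat ranges of $g$ would do it.
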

	
	Claim~\ref{claim:constantGS} implies that for some constant {$c\ge 1$} and GS functions $h^i$, we have that:
	
	$$\alpha \max_{0 \le i < \log m} h^i(S)  \le g(f(S)) \le c\beta  \sum_{0 \le i < \log m} h^i(S)$$
	
	Using the above sandwich property, Lemma~\ref{lem:approach1} is an immediate corollary of Lemma~\ref{lem:approach}.

	It remains to prove Claim~\ref{claim:constantGS}.
	
	Consider the function $f^i = \sum_{t\in C_i} h_t$. (The functions $h_t$ are as defined in Lemma~\ref{lem:approach1}, not to be confused with the functions $h^i$ of Claim~\ref{claim:constantGS}.) Observe that for every set $S \subset M$ and for every $t,t' \in C_i$, if $t < t'$ and $h_{t'}(S) > 0$, then $h_t(S) \le m2^t \le 2^{t'} \le h_{t'}(S)$.
	
	Let $\tau: M \rightarrow (N \cup \bot)$ be a function that maps every item $x \in M$ to the highest value of $t \in C_i$ for which $x$ is in the support of $h_t$, and to $\bot$ if there is no such $t$. Let $\tau^{-1}(t) = \{x | x \in M \; , \; \tau(x) = t\}$. For every $t\in C_i$, let $h'_t$ be the function satisfying $h'_t(S) = h_t(S \cap \tau^{-1}(t))$ for every $S \subset M$. Then every item $x\in M$ is in the support of at most one of the functions $h'_t$. 
	
	Note that $\max_{t\in C_i} h_t(S) = \max_{t\in C_i} h'_t(S)$, because $\max_{t\in C_i} h_t(S)$ is attained at a $t^*$ that satisfies $t^* = \max_{x \in S} \tau(x)$, and $h'_{t^*}(S) = h_{t^*}(S)$. Moreover, $\sum_{t\in C_i} h'_t(S) \le f^i(S) \le 2\sum_{t\in C_i} h'_t(S)$ holds for every $S$, where the right inequality holds because the combined marginal values of items of $S$ dropped from all functions (with $t < t^*$) does not exceed $h_{t^*}(S) = h'_{t^*}(S)$.

	For every $t$ we use $\ell_t$ to denote $g(2^t)$ rounded down to the nearest power of~2, and $u_t$ to denote $g(2^t m)$ rounded down to the nearest power of~2. For $S \subset M$, if $|S \cap \tau^{-1}(t)| \ge 1$, then $\ell_t \le g(h'_t(S)) \le 2u_t$. For two consecutive members $t,t' \in C_i$ (hence $t' \ge t + \log m$), we have that $u_t \le \ell_{t'}$.
	
	We now define a {\em merging} operation. For $s > 2$, we refer to a sequence $t_1 < \ldots < t_s$ of indexes in color class $C_i$ as {\em mergeable} if it satisfies $\ell_{t_1} =  \ell_{t_s}$. As long as a mergeable sequence exists, we pick a maximal mergeable sequence (that is not contained in any longer mergeable sequence) and merge the functions $h'_{t_1}, \ldots, h'_{t_{s-1}}$ into one new function $h'_{t_1, \ldots, t_{s-1}}$. The items in the support of this function are $\tau^{-1}(t_1, \ldots, t_{s-1}) = \bigcup_{j \le s-1} \tau^{-1}(t_j)$. We define $h'_{t_1, \ldots, t_{s-1}}(S) = 2^{t_1}$ for sets that contain an item from $\tau^{-1}(t_1, \ldots, t_{s-1})$, and~0 otherwise. Observe that for every set $S \subset M$:
	
	$$g(h'_{t_1, \ldots, t_{s-1}}(S)) \le g(\sum_{j \le s-1} h'_{t_j}(S)) \le 2g(h'_{t_1, \ldots, t_{s-1}}(S))$$ 
	
	\noindent The right inequality follows because $u_{t_{s-1}} \le \ell_{t_s} = \ell_{t_1}$. 
	We consequently have that $\ell_{t_1, \ldots, t_{s-1}} = u_{t_1, \ldots, t_{s-1}} = g(2^{t_1})$.
	
	Observe that the function $g(h'_{t_1, \ldots, t_{s-1}})$ is a GS function (as $h'_{t_1, \ldots, t_{s-1}}$ has value either~0 or $2^{t_1}$). Likewise, for every unmerged function $h'_t$, Lemma~\ref{lem:concave-mrf-gs} implies that the corresponding function $g(h'_t)$ is a GS function, as $h'_t$ is an MRF (scaled by $2^t$).
	
	Given a color class $C_i$, after performing all applicable merge operations (and without changing the order of the functions), rename the functions that remain in $C_i$ as $f_1^i, f_2^i, \ldots $. Observe that having performed the merge operations, for every $j \ge 1$ it holds that $u^i_j \le \ell^i_{j+1} < \ell^i_{j+3}$ (where $\ell^i_j$ and $u^i_j$ are the natural renaming for the notation $\ell_t$ and $u_t$).
	
	With each function $f^i_j$ we associate the function $h^i_j = g(f^i_j)$. As discussed above, every $h^i_j$ function is GS. Moreover, we have the following sandwich property for every set $S \subset M$:
	
	$$\max_j h^i_j(S) \le g(f^i(S)) \le 4\sum_j h^i_j(S)$$
	
	\noindent (The factor~4 is a product of a factor~2 that is paid for switching from $h_t$ to $h'_t$, and a factor~2 that is paid for the merging operation.)
	
	We claim that for every $S$ it holds that $\sum_j h^i_j(S) \le 6\max_j h^i_j(S)$ (the constant~6 can be improved). Recall that the supports of the function $h^i_j$ are disjoint (for different $j$), and let $M_j$ denote the support of function $h^i_j$. To prove the claim, let $k$ be the largest index for which $|S \cap M_k| \ge 1$. Then $\max_j h^i_j(S) = h^i_k(S) \ge \ell^i_k$. For every $k' < k$ we have that  $h^i_{k'}(S) \le u^i_{k'} \le \ell^i_k$. Moreover, for every $j$, $u^i_j \le \frac{1}{2}u^i_{j+3}$. Hence the $u^i_j$ values can be partitioned into three geometric series that each contributes at most $2\ell^i_k$ to $\sum_j h^i_j(S)$, proving the claim.
	
	Let $h^i_u$ denote the function satisfying $h^i_u(S) = \sum_j h^i_j(S)$ for every $S$. Observe that $h^i_u$ is a GS function, as it is a sum over GS functions that have disjoint supports. Following the above claim and using concavity of $g$, for every $S \subset M$ we have that:
	
	$$\frac{1}{6}h^i_u(S) \le g(f^i(S)) \le 4h^i_u(S)$$
	
	Taking $h^i = \frac{1}{6}h^i_u$ proves Claim~\ref{claim:constantGS}.
\end{proof}

\section{Approximation of Budget Additive Valuations by Coverage Valuations}
\label{sec:coverage-ba}

\begin{theorem}
	\label{th:coverage-ba}
	Every BA function can be approximated by a coverage function within a ratio of $\rho = \frac{e}{e-1}$.
\end{theorem}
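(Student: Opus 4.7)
My plan is to construct the coverage function $h$ via a probabilistic thought experiment and then verify the two approximation inequalities analytically. Assume without loss of generality that $v_i \le B$ for every $i$ (capping does not change $f$) and set $x_i = v_i/B \in [0,1]$. I would take the ground set $\Omega = 2^M$, assign to each $T \subseteq M$ the weight $w_T = B \prod_{i \in T} x_i \prod_{i \notin T}(1 - x_i)$, and define $g(i) = \{T \in \Omega : i \in T\}$. Intuitively, $w_T/B$ is the probability that a random subset, obtained by including each $i$ independently with probability $x_i$, equals $T$. A direct computation gives
\[
h(S) \;=\; \sum_{T \cap S \ne \emptyset} w_T \;=\; B\Bigl(1 - \prod_{i \in S}(1 - x_i)\Bigr),
\]
which is manifestly a coverage function in the sense of Section~\ref{sec:preliminaries}.

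For the upper bound $h(S) \le f(S)$, I would combine the Weierstrass inequality $1 - \prod_{i\in S}(1-x_i) \le \sum_{i \in S} x_i$ with the trivial bound $1 - \prod_{i\in S}(1-x_i) \le 1$; multiplying by $B$ yields $h(S) \le \min\{B, \sum_{i \in S} v_i\} = f(S)$.

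For the lower bound $f(S) \le \tfrac{e}{e-1}\, h(S)$, let $s = \sum_{i \in S} x_i$ and apply $1 - x \le e^{-x}$ termwise to obtain $\prod_{i \in S}(1 - x_i) \le e^{-s}$. If $s \ge 1$ then $f(S) = B$ and $h(S) \ge B(1 - e^{-1})$, giving the ratio directly. If $s < 1$ then $f(S) = Bs$, and the required inequality reduces to $1 - e^{-s} \ge (1 - 1/e)\, s$ on $[0,1]$, which holds because $1 - e^{-s}$ is concave in $s$ and agrees with the linear function $(1-1/e)\,s$ at both endpoints $s = 0$ and $s = 1$.

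The only real conceptual step is spotting the right coverage representation; once the formula $h(S) = B(1 - \prod(1 - v_i/B))$ is on the table, both directions collapse to standard inequalities, so I do not anticipate a serious obstacle. The ratio $\tfrac{e}{e-1}$ is tight for this construction, as witnessed by taking $n$ items of value~$1$ with budget $B = n$ and letting $n \to \infty$: then $h(M)/f(M) = 1 - (1-1/n)^n \to 1 - 1/e$.
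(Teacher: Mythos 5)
Your proof is correct, and it arrives at exactly the same target coverage function $h(S) = B\bigl(1 - \prod_{i\in S}(1 - v_i/B)\bigr)$ as the paper, using the same inequalities ($1-\prod(1-x_i)\le\sum x_i$ for the upper bound, and concavity of $1-e^{-s}$ matched to the line $(1-1/e)s$ at $s=0,1$ for the lower bound). The difference is in how that function is realized as a coverage function. The paper builds a \emph{random} coverage instance over $N$ abstract elements (each item covers $v_iN/B$ random elements, each of weight $\rho B/N$), computes $\E[g(S)]$, and then argues that for $N$ large enough concentration plus slack makes some realization work simultaneously for all $S$; the case $f(S)\ge B$ requires an additional reduction whose ``details [are] omitted.'' Your construction replaces this asymptotic/probabilistic step with a clean derandomization: you take $\Omega=2^M$ and assign to each $T$ the weight $B\cdot\Pr[T]$ under the product distribution, so $h(S)$ equals the expectation \emph{exactly} rather than in the limit. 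This buys you a self-contained, finite, deterministic argument with no $N\to\infty$, no concentration, and no case split left implicit — your $s\ge 1$ branch handles the $f(S)=B$ regime directly. The cost is an exponentially large ground set, which is irrelevant here since the theorem only asserts existence; if one cared about a polynomial-size representation, the paper's randomized construction (which yields a coverage instance on $N=\mathrm{poly}(n)$ elements) would be preferable. One small thing worth stating explicitly: your weights $w_T$ are non-negative precisely because you capped $v_i\le B$ so that $x_i\in[0,1]$; keep that hypothesis visible, since the coverage definition in Section~\ref{sec:preliminaries} requires $w_j\in\reals^+$.
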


\begin{proof}
	Let $f$ be a BA function over $n$ items, where item $i$ has value $v_i$, and the budget is $B$. We now create a coverage function $g$. Let $N$ be sufficiently large. We shall have $N$ elements. Each item $i$ covers $\frac{v_i N}{B}$ elements chosen at random. Every element has weight $\frac{\rho B}{N}$. Hence for every item $i$ we have $g(i) = \rho f(i)$. For every set we have $g(S) \le \rho f(S)$, because $g$ is subadditive and upper bounded by $\rho B$.
	
	It remains to show that $g(S) \ge f(S)$ for every $S$. This holds by definition for sets that contain only one item, and likewise, for every set that contains an item of value at least $B$. For the remaining sets, suppose first that $f(S) < B$. The expected value of $g(S)$ (expectation over the randomness of the construction) is:
	
	$$E[g(S)] = \frac{\rho B}{N} N (1 - \prod_{i\in S} (1 - \frac{v_i}{B})) \ge \rho B (1 - e^{-\sum_{i \in S} \frac{v_i}{B}}) = \rho B (1 - e^{-\frac{f(S)}{B}})$$
	
	Denoting $x = \frac{f(S)}{B}$ we need to show that $\rho(1 - e^{-x}) \ge x$ in the range $0 \le x < 1$.  This indeed holds for $\rho = \frac{e}{e-1}$ because we then have equality for $x \in \{0,1\}$, and the function $1 - e^{-x}$ is concave.
	
	If $f(S) \ge B$, the same argument as above works, by first reducing the value of one or more of the items in $S$ such that after that $f(S) = B$ even without the budget constraint (details omitted).
	
	Finally, we note that by taking $N$ sufficiently large, $g(S)$ becomes arbitrarily close to $E[g(S)]$ simultaneously for all $S$. As a lower bound derived $E[g(S)]$ has some slackness (the inequality in the derivation is strict because $S$ contains an item of positive value strictly less than $B$), there is a choice of $g$ that satisfies $g(S) \ge f(S)$ for every $S$.
\end{proof}

We complement Theorem~\ref{th:coverage-ba} by showing that a ratio better than $\frac{e+1}{e}$ cannot be guaranteed, even by {\em matroid rank sum} (MRS) functions~\cite{Shioura2012,BalkanskiL18} --- sum over matroid rank functions (see Definition~\ref{def:mrs}) --- which is a strict superclass of coverage functions (coverage is the class of matroid rank functions of rank 1).
Closing the gap between $\frac{e+1}{e} \simeq 1.368$ and $\frac{e}{e-1} \simeq 1.582$ remains open.

\begin{proposition}
	Consider the BA function $f$ with budget $B$, a set $X = \{x_1, \dots, x_n\}$ of $n$ items, each of value~1, and an additional item $y$ of value $B$.
	For large $n$ and $B$ (e.g., $B = \sqrt{n}$), $f$ cannot be approximated by MRS within a ratio better than $\rho = \frac{e+1}{e}$.
\end{proposition}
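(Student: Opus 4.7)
The plan is to show that for any MRS function $g$ with $g \le f \le \rho g$, as $n,B \to \infty$ with $B = \sqrt{n}$, we must have $\rho \ge (e+1)/e - o(1)$. I would begin with a symmetrization step: since $f$ is symmetric in $X$ and MRS is closed under averaging over permutations (because the sum of matroid rank functions is again MRS, and one may symmetrize by adding permuted copies with equal weights), we may assume without loss of generality that $g = \sum_i w_i r_i$ is symmetric in $X$. Under this symmetry, all the relevant quantities depend only on $|T|$, so define $\alpha(k) = g(T)$ for $T \subseteq X$, $|T|=k$, $\beta(k) = g(T \cup \{y\})$, and $\gamma = g(y) = \beta(0)$. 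Basic constraints then read $\alpha(k) \in [\min(k,B)/\rho,\min(k,B)]$, $\beta(k) \in [B/\rho,B]$, both concave and related by $\beta(k) - \alpha(k) \ge \beta(k') - \alpha(k')$ for $k \le k'$ (submodularity).

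Next I would set up the approximation problem as a (semi-infinite) LP in the weights $w_i \ge 0$, indexed by symmetric matroid rank functions on $X \cup \{y\}$, with constraints $g(\{x\}) \le 1$, $g(T) \le |T|$ for $|T|\le B$, $g(T \cup\{y\}) \le B$, and the lower bounds $g(S) \ge f(S)/\rho$. The plan is to exhibit a dual certificate: a distribution $\pi$ over subsets $S$ such that for \emph{every} matroid rank function $r$ on $X \cup \{y\}$ one has $\mathbb{E}_{S\sim\pi}[r(S)/f(S)] \le \frac{e}{e+1}\cdot c(r)$, where $c(r)$ is a combination of the ``singleton budget'' $\sum_{x\in X} r(\{x\})/n$ and the ``total budget'' $r(X\cup\{y\})/B$ charged against the primal inequalities $\sum_i w_i \cdot r_i(\{x\}) \le 1$ and $g(X \cup \{y\}) \le B$. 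Summing against $g = \sum_i w_i r_i$ then gives $\mathbb{E}_\pi[g(S)/f(S)] \le e/(e+1)$, hence some set $S^*$ attains $g(S^*)/f(S^*) \le e/(e+1)$, i.e.\ $\rho \ge (e+1)/e$. The natural candidate for $\pi$ places carefully calibrated mass on $\{y\}$, on random singletons $\{x_i\}$, on random subsets $T\subseteq X$ of size close to $B$, and on their unions with $\{y\}$; the weights come from the standard set-cover style integration that produces the $1/e$ factor.

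The main obstacle is the per-matroid inequality in Step~2, which must be verified for \emph{arbitrary} matroid rank functions, not just rank-1 supports. For rank-1 matroids, this is the classical $(1-1/e)$ set-cover bound (and actually produces the sharper $e/(e-1)$ coverage lower bound of the previous subsection). For higher-rank matroids, the crucial leverage is that a matroid of rank $r$ contributes $r$ units to the budget $g(X\cup\{y\}) \le B$ per unit weight but still contributes at most $1$ per unit weight to $g(y)$, so the extra flexibility is ``paid for'' by a stricter budget accounting. Making this trade-off tight — uniformly across all matroid rank functions — is the hardest step; it requires verifying the certificate inequality for each symmetric matroid type, likely by an exchange/circuit argument showing that the worst case against the distribution $\pi$ is attained by a rank-1 matroid up to the $(e+1)/e$ gap. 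Once the per-matroid inequality is established for all $r$, the bound $\rho \ge (e+1)/e - o(1)$ follows by aggregation, with the $o(1)$ term absorbing concentration corrections in the Binomial$(n, B/n)$ distributions used to specify $\pi$.
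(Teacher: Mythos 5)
There is a genuine gap: the proposal is a plan, not a proof. Everything rests on the per-matroid certificate inequality $\mathbb{E}_{S\sim\pi}[r(S)/f(S)] \le \frac{e}{e+1}\,c(r)$ holding uniformly over all symmetric matroid rank functions $r$, and you explicitly leave this unverified, calling it ``the hardest step'' and proposing only that an exchange/circuit argument ``likely'' reduces it to rank~1. But that reduction \emph{is} the content of the statement --- the whole reason MRS is a harder target than coverage is exactly the higher-rank matroids --- so without it the aggregation step has nothing to aggregate and the bound $\rho \ge (e+1)/e$ is not established.

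The paper avoids the LP-duality machinery. After the same symmetrization, it first disposes of higher rank directly: for any matroid whose support contains $y$, its rank-$\ge 2$ excess on $X\cup\{y\}$ can simply be re-attributed to the $y$-free side $X'$, which only strengthens the grand-bundle constraint $X'+B' \le \rho B$; so one may take all $y$-containing matroids to have rank~1. The rest is a short probabilistic calculation: sample a uniform $B$-subset $T\subseteq X$; by Jensen the expected weight of rank-1 $y$-matroids that miss $T$ is at least $B'(1-w/B')^B$ (where $w$ is the symmetric per-item weight), and since $g(T)\ge f(T)=B$ the $y$-free matroids must make up the shortfall, giving $X' \ge B - B'(1-w/B')^B$. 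Combining with the two trivial test sets (grand bundle and a singleton $x_1$) and evaluating at $w=1$, $B'=B$ gives $X'=B/e$ and $\rho \ge 1+1/e$. Your framework is more systematic and could in principle yield a tighter bound, but as written it stops precisely at the point where the paper does the real work; the paper's argument, though informal in places (``roughly'', and a fixed parameter check rather than a full optimization), is elementary and actually closes the loop.
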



\begin{proof}
	Let $B' \ge B$ be the total weight of matroids containing $y$, and let $X'$ be the total weight that matroids that do not contain $y$ can contribute (when all is $X$ present). We require $X' + B' \le \rho B$.
	
	Given an arbitrary solution, we can symmetrize it (average over all permutations) so that all $x_i$ are treated in exactly the same way.
	
	Assume that all matroids that contain $y$ have rank~1. (The bounds that follow hold without change even without this assumption, because the access above rank~1 can go into $X'$). Let $w \le \rho$ be the weight of matroids that contain both $y$ and $x_1$.
	
	Consider now a random set of $B$ items. The fraction of uncovered weight of matroids that contain $y$ is roughly $B'(1 - \frac{w}{B'})^B$. Hence $X' \ge B - B'(1 - \frac{w}{B'})^B$. For this $X'$, we get two lower bounds on $\rho$. Considering the grand bundle we have that $B\rho \ge B' + X'$, whereas considering only $x_1$ we have that $\rho \ge w + \frac{X'}{B}$.

	If $w = 1$ and $B' = B$, then $X' = \frac{B}{e}$ and $\rho = 1 + \frac{1}{e}$.
\end{proof}

\end{document}